\documentclass[11pt]{article}

\usepackage{amssymb,amsmath,amsfonts,amsthm,graphicx}


\parskip=0.5ex
\oddsidemargin= 0.35cm
\evensidemargin= 0.35cm

\parindent=1.5em
\textheight=23cm
\textwidth=16cm
\topmargin=-2.0cm


\def\be{\begin{equation}}
\def\ee{\end{equation}}
\def\bea{\begin{eqnarray}}
\def\eea{\end{eqnarray}}

\newcommand{\sect}[1]{\setcounter{equation}{0}\section{#1}}
\newcommand{\subsect}[1]{\subsection{#1}}


\theoremstyle{plain}
\newtheorem{theorem}{Theorem}

\newtheorem{lemma}[theorem]{Lemma}
\newtheorem{proposition}[theorem]{Proposition}

\theoremstyle{definition}
 \newtheorem{definition}[theorem]{Definition}
\newtheorem{note}[theorem]{Note}
\newtheorem{example}[theorem]{Example}


\def\dd{{\rm d}}
 \def\ota{\iota}
 \def\iii{{\cal I}}

 \def\be{\begin{equation}}
\def\ee{\end{equation}}
\def\bea{\begin{eqnarray}}
\def\eea{\end{eqnarray}}

\numberwithin{theorem}{section}


\begin{document}

\thispagestyle{empty}

\medskip
\medskip

 \vskip2cm

 \begin{center}

\noindent {\Large{\bf {Lie--Hamilton systems on the plane:}}}\\
\vskip0.4cm
\noindent {\Large{\bf Applications and superposition rules}}

\medskip
\medskip
\medskip

{\sc Alfonso Blasco$^{1}$, Francisco J Herranz$^{1}$, Javier de Lucas$^{2}$ and Cristina Sard\'on$^{3}$}

 \end{center}

\medskip

{\small
\noindent
$^1$ Department of Physics, University of Burgos,   09001, Burgos, Spain

\noindent
$^2$ Department of Mathematical Methods in Physics, University of Warsaw, Pasteura 5, 02-093, Warszawa, Poland

\noindent
$^3$ Department of Fundamental Physics, University of Salamanca, Plaza de la Merced s/n, 37008, Salamanca, Spain
}
\medskip

\noindent
{\small 
 E-mail: {\tt   ablasco@ubu.es, fjherranz@ubu.es, javier.de.lucas@fuw.edu.pl,\\ cristinasardon@usal.es}}

  \medskip
\bigskip
\bigskip

\begin{abstract}
\noindent A Lie--Hamilton system is a nonautonomous system of first-order ordinary differential equations describing the integral curves of a $t$-dependent vector field taking values in a finite-dimensional real Lie algebra of Hamiltonian vector fields with respect to a Poisson structure. We provide new algebraic/geometric techniques to easily determine the properties of such Lie algebras on the plane, e.g., their associated Poisson bivectors. We study new and known Lie--Hamilton systems on $\mathbb{R}^2$ with physical, biological and mathematical applications. New results cover Cayley--Klein Riccati equations, the here defined planar diffusion Riccati systems, complex Bernoulli differential equations and projective Schr\"odinger equations. Constants of motion  for   planar Lie--Hamilton systems are explicitly obtained which, in turn, allow us to derive   superposition rules through a coalgebra approach.

 \end{abstract}

\bigskip\bigskip\bigskip\bigskip

\noindent
MSC:  {34A26 (primary) 70G45, 70H99 (secondary)}

\medskip

\noindent{KEYWORDS}:  complex Bernoulli differential equation, Casimir element, constant of motion, Lie system,  Poisson coalgebra,  superposition rule, symplectic geometry, Vessiot--Guldberg Lie algebra
 \newpage


\sect{Introduction}

A {\em  Lie system} is a nonautonomous system of first-order ordinary differential equations whose general solution can be written as a function, a {\em superposition rule},
of a generic family of particular solutions and some constants related to initial conditions~\cite{LS,PW,CGM00,Dissertations}. Superposition rules significantly simplify the determination of general solutions for Lie systems as they reduce its derivation to obtaining several particular solutions.

Examples of Lie systems are linear systems of differential equations, Riccati equations and matrix Riccati equations~\cite{Dissertations,ORW05,Abou}. 
Most nonautonomous systems of first-order ordinary differential equations are not Lie systems~\cite{Dissertations,In72,CGL10}. Notwithstanding,
Lie systems play a very significant r\^ole due to their relevant applications and geometric properties, e.g.,~matrix Riccati equations are ubiquitous in control theory and superposition rules can be understood as a particular type of flat connection~\cite{Dissertations,Abou,CGM07}.

Choose global coordinates $\{x,y\}$ on $\mathbb{R}^2$. To study Lie systems on $\mathbb{R}^2$, we relate every nonautonomous system
\begin{equation}\label{system}
 \frac{{\rm d} x}{{\rm d} t  }=f(t,x,y), \qquad \frac{{\rm d} y}{{\rm d} t }=g(t,x,y),
\end{equation}
where $f,g:\mathbb{R}^3\rightarrow \mathbb{R}$ are arbitrary functions, to the $t$-dependent vector field 
\begin{equation}\label{Vect}
X:\mathbb{R}\times\mathbb{R}^2\ni (t,x,y)\mapsto f(t,x,y)\frac{\partial}{\partial x}+g(t,x,y)\frac{\partial}{\partial y}
\in {\rm T}\mathbb{R}^2
\end{equation}
 and vice versa. This permits us to use $X$ to refer to both (\ref{system}) and (\ref{Vect}). 
The {\em Lie--Scheffers Theorem}~\cite{LS,CGM00} states that $X$ is a Lie system if and only if 
$$
X_t(x,y):= X(t,x,y)=\sum_{i=1}^l b_i(t)X_i(x,y)
$$
  for some $t$-dependent functions $b_1(t),\ldots,b_l(t)$ and vector fields $X_1,\ldots,X_l$ on $\mathbb{R}^2$ spanning an $l$-dimensional real Lie algebra $V$ of vector fields: a {\em Vessiot--Guldberg Lie algebra} of $X$. If $V$ is isomorphic to a Lie algebra $\mathfrak{g}$, we say that $X$ is a {\it $\mathfrak{g}$-Lie system}.

As an example of Lie systems,  let us consider for the first time the family of nonautonomous complex Bernoulli differential equations \cite{Ma11} of the form
\begin{equation}\label{complexBernoulli}
 \frac{\dd z}{\dd t}=a_1(t)z+a_2(t)z^n,\qquad n\notin\{0,1\},
\end{equation}
where $z\in \mathbb{C}$ and $a_1(t),a_2(t)$ are arbitrary complex valued $t$-dependent functions. 
 If $a_1(t)$ and $a_2(t)$ are polynomial functions of ${\rm e}^{{\rm i}t}$ and ${\rm e}^{-{\rm i}t}$, then (\ref{complexBernoulli}) becomes a particular case of
the differential equations $ \frac{{\rm d}z}{{\rm d}t}=P(z,{\rm e}^{{\rm i}t},{\rm e}^{-{\rm i}t})$, where $z\in\mathbb{C}$ and $P$ is any polynomial function of their arguments. The number of periodic particular solutions of these latter equations has been studied in \cite{Zo00}.
Writing  $z=r{\rm e}^{{\rm i}\theta}$, $a_1(t)=a_1^R(t)+{\rm i}a_1^I(t)$ and $a_2(t)=a_2^R(t)+{\rm i}a^I_2(t)$ for real $t$-dependent functions $a^R_1(t),a_1^I(t),a_2^R(t),a_2^I(t)$, system (\ref{complexBernoulli}) becomes
\begin{equation}\label{PlanBer}
\begin{aligned}
\frac{{\rm d}r}{{\rm d}t}&=a^R_1(t)\,r+a_2^R(t)\,r^n\cos[\theta(n-1)]-a_2^I(t)\,r^ n\sin [\theta(n-1)],\\
\frac{{\rm d}\theta}{{\rm d}t}&=a^I_1(t)+a_2^R(t)\,r^{n-1}\sin[\theta(n-1)]+a_2^I(t)\,r^{n-1}\cos[\theta(n-1)],
\end{aligned}
\end{equation}
which is related to $X=a^R_1(t)X_0+a^I_1(t)X_1+a^R_2(t)X_2+a^I_2(t)X_3$, where
\begin{equation}\label{VectComBer}
\begin{gathered}
X_0=r\frac{\partial}{\partial r},\qquad X_1=\frac{\partial}{\partial \theta},\qquad X_2=r^n\cos[\theta(n-1)]\frac{\partial}{\partial r}+r^{n-1}\sin[\theta(n-1)]\frac{\partial}{\partial \theta},\\ X_3=-r^n\sin[\theta(n-1)]\frac{\partial}{\partial r}+r^{n-1}\cos[\theta(n-1)]\frac{\partial}{\partial \theta}
\end{gathered}
\end{equation}
span a four-dimensional real Lie algebra, $V^{\rm CB}$, with commutation relations
\begin{equation}\label{RelBer}
\begin{aligned}
\left[X_0,X_1\right]&=0,\qquad &[X_0,X_2]&=(n-1)X_2,\qquad &[X_0,X_3]&=(n-1)X_3,\\
[X_1,X_2]&=(n-1)X_3,\qquad &[X_1,X_3]&=-(n-1)X_2,\qquad &[X_2,X_3]&=0.
\end{aligned}
\end{equation}
So, $X$ takes values in the finite-dimensional Lie algebra $V^{\rm CB}$  and becomes a Lie system.

The Lie--Scheffers Theorem implies that classifying Lie systems on a fixed manifold amounts to determining all non-diffeomorphic finite-dimensional
real Lie algebras of vector fields on it \cite{GKP92,BBHLS}.
In the XIXth century, Lie accomplished the local classification of such Lie algebras on the plane~\cite{Li93,HA75}.
Gonz\'alez-L\'opez, Kamran and Olver reviewed Lie's classification using modern geometric techniques~\cite{GKP92}.
Their (GKO) classification  divides finite-dimensional real Lie algebras of planar vector fields  into $28$ non-diffeomorphic classes, which can be regarded as a local classification of Lie systems on $\mathbb{R}^2$ \cite{GKP92,BBHLS}. For instance, we see from (\ref{RelBer}) that system (\ref{PlanBer}) 
 admits a Vessiot--Guldberg Lie algebra $V^{\rm CB}\simeq \mathbb{R}^2\ltimes \mathbb{R}^2\simeq \langle X_0,X_1\rangle \ltimes \langle X_2, X_3\rangle $.
 According to the GKO classification, all Lie algebras of this type are locally diffeomorphic to the Lie algebra P$_4$  (cf.~\cite{GKP92,BBHLS}). That is why we say that (\ref{PlanBer}) is {\it a Lie system of class {\rm P}$_4$}. 

In this work we focus on studying planar Lie systems possessing a Vessiot--Guldberg Lie algebra of Hamiltonian vector fields with respect to a Poisson structure. It was shown in \cite[lemma 4.1]{BBHLS} that this essentially amounts to having a Lie algebra of Hamiltonian vector fields relative to a symplectic structure, a herafter called {\it compatible symplectic structure}. Among the $28$ classes of the GKO classification,  only  $12$ consist of Lie algebras of Hamiltonian vector fields. 
Table~\ref{table1} (see section 2) summarizes the classification of  finite-dimensional Lie algebras of Hamiltonian vector fields on $\mathbb{R}^2$ given in \cite{BBHLS}. 

There is not a Lie algebra   isomorphic to $V^{\rm CB}$  in table~\ref{table1}. Consequently, $V^{\rm CB}$ cannot be a Lie algebra of Hamiltonian vector fields relative to any Poisson structure. This illustrates that not every Lie system admits a Vessiot--Guldberg Lie algebra of Hamiltonian vector fields with respect to a Poisson structure \cite{IV,CMM97}. When a Lie system does, we call it a  {\em Lie--Hamilton (LH) system} \cite{CLS12Ham}. The interest of LH systems relies on their appearance in physics, mathematics and biology (cf.~\cite{BBHLS,CLS12Ham,BCHLS13Ham}). Additionally, their associated Poisson structures allow us to study and to derive their constants of motion, superposition rules and  Lie symmetries \cite{CLS12Ham,BCHLS13Ham}. 

As an example, consider the complex Bernoulli differential equations (\ref{PlanBer}) with $a_1^R(t)=0$ and the Poisson bivector 
\begin{equation}\label{BP}
\Lambda=X_2\wedge X_3=r^{2n-1}\frac{\partial}{\partial r}\wedge \frac{\partial}{\partial \theta}
\end{equation}
turning the elements of $V=\langle X_1,X_2,X_3\rangle$ into Hamiltonian vector fields. Indeed, the Hamiltonian functions for $X_1$, $X_2$, $X_3$  read
\be
h_1=\frac1{(2n-2)r^{2n-2}},\qquad h_2=\frac{\sin [\theta(n-1)]}{r^{n-1}(n-1)},\qquad 	h_3=\frac{\cos [\theta(n-1)]}{r^{n-1}(n-1)},
\label{zxa}
\ee
correspondingly. These functions along with $h_0=1$ fulfill
\be
\{h_1,h_2\}_{\Lambda}=-(n-1) h_3 ,\quad\  \{h_1,h_3\}_{\Lambda}=(n-1) h_2 ,\quad\  \{h_2,h_3\}_{\Lambda}= h_0,\quad\  \{h_0,\,\cdot\,\}_{\Lambda}= 0 .
\label{za}
\ee
Hence, system (\ref{PlanBer}) with $a_1^R(t)=0$ is a LH system as it is related to a $t$-dependent vector field taking values in a Vessiot--Guldberg Lie algebra $V$ of Hamiltonian vector fields relative to $\Lambda$. Since $V\simeq \mathbb{R}\ltimes \mathbb{R}^2\simeq \langle X_1\rangle\ltimes \langle X_2,X_3\rangle$, $X_1\wedge X_2\neq 0$ and ${\rm ad}_{X_1}: X_i\in \langle X_2,X_3\rangle\mapsto [X_1,X_i]\in  \langle X_2,X_3\rangle$ is not diagonalizable over $\mathbb{R}$, we see in view of table~\ref{table1} that the Lie algebra $V$ belongs to class P$_1$ and $V\simeq{\mathfrak{iso}}(2)$.  Meanwhile, the LH algebra spanned by $h_1,h_2,h_3,h_0$ is     isomorphic to the centrally extended  Euclidean algebra $\overline{\mathfrak{iso}}(2)$ (see also \cite{BBHLS} for further details).

On the other hand, we recall that 
superposition rules for LH systems can be obtained in an algebraic way by  applying a Poisson coalgebra approach  \cite{BCHLS13Ham}.
In contrast, other methods to derive superposition rules require to integrate a Vessiot--Guldberg Lie algebra, e.g.,~the group theoretical method \cite{PW}, or to solve a family of PDEs \cite{CGM07}. Winternitz and coworkers have also derived superposition rules for Lie systems in particular forms \cite{PW,ORW05}. The application of this latter result for general Lie systems requires to map them into the canonical form for which the superposition rule was obtained. The coalgebra procedure   makes these transformations unnecessary in many cases.

The structure of the paper is as follows. In section 2 we summarize the local classification of Vessiot--Guldberg Lie algebras as of Hamiltonian vector fields on the plane performed in~\cite{BBHLS}, where  the corresponding symplectic structures  were derived by solving a system of PDEs. As a first new achievement,  we show in section 3 that such symplectic structures can be determined through algebraic and geometric methods. Although our techniques are heavily based upon the Lie algebra structure of Vessiot--Guldberg Lie algebras, they also depend on their geometric properties as Lie algebras of vector fields.
In particular, a new method to construct symplectic structures for LH systems on the plane related to non-simple Vessiot--Guldberg Lie algebras is described.    

Next we remark that a Vessiot--Guldberg Lie algebra on the plane isomorphic to $\mathfrak{sl}(2)$ can be diffeomorphic to either P$_2$, I$_3$, I$_4$ or I$_5$, being I$_3$ the only class which does not consist of Hamiltonian vector fields for any Poisson bivector on $\mathbb{R}^2$ (cf. \cite{GKP92}).  The problem of determining the existence of diffeomorphisms among $\mathfrak{sl}(2)$-Lie systems on the plane is addressed 
 in section 4. As a second theoretical new result, we provide  a new easily verifiable algebraic-geometric criterium allowing one to determine the specific class of a Vessiot--Guldberg Lie algebra on $\mathbb{R}^2$ isomorphic to $\mathfrak{sl}(2)$ without finding a change of variables mapping it to a specific class as in~\cite{BBHLS}. Our new method is based on calculating an easily derivable geometric invariant:  a hereafter called {\it Casimir tensor field}.

 To illustrate the statements of section 4,  we retrieve some of the findings in \cite{BBHLS} and classify other $\mathfrak{sl}(2)$-LH systems on $\mathbb{R}^2$. More specifically, we show  in section 5 for the first time that Cayley--Klein Riccati equations~\cite{Liesym} comprise the three classes of   $\mathfrak{sl}(2)$-LH systems and we study in section 6 their relationships via diffeomorphisms to coupled Riccati equations~\cite{Mariton}, Milne--Pinney equations~\cite{LA08}, second-order Kummer--Schwarz equations~\cite{CGL11} and planar diffusion Riccati systems~\cite{SSVG11, SSVG14}.

In section 7,  we demonstrate that systems described by $t$-dependent quadratic Hamiltonians, e.g. $t$-dependent  damped harmonic oscillators or particles under certain electric fields~\cite{CR03,UY02}, and the second-order Riccati equations in Hamiltonian form~\cite{CLS12, CRS05} can be analyzed through LH systems of class P$_5$ and they become equivalent up to a local diffeomorphism. Such systems admit a $t$-dependent Hamiltonian function taking values in a Lie algebra of Hamiltonian functions isomorphic to the referred to as {\it two-photon algebra}~\cite{BBF09}.

Certain projective Schr\"odinger equations are studied in section 8 where we prove that   they belong to the class 
P$_3$, that is, they are  ${\mathfrak{so}}(3)$-LH systems. This reveals the interest of LH systems in geometric quantum mechanics.

Additionally, we analyze $\mathfrak{h}_2$-LH systems  in section~9, where $\mathfrak{h}_2$ stands for the two-dimensional Heisenberg Lie algebra~\cite{BBHLS} which arises within the class  I$_{14A}$ for ${r=1}$. As new results we prove that complex Bernoulli differential equations (\ref{complexBernoulli}) with $t$-dependent real coefficients belong to this class  and, consequently,  we   establish their  equivalence with  generalized Buchdahl equations~\cite{Bu64, CSL05, CN10}, appearing in Relativity,  and $t$-dependent Lotka--Volterra systems~\cite{Tr96, JHL05}, occurring in biology.

  In section 10 we obtain several superposition rules for LH systems by following the Poisson coalgebra approach  \cite{BCHLS13Ham}. With this aim, we, firstly, obtain $t$-independent constants of motion for   LH-systems through Casimir invariants (cf.~table~\ref{table2}). And, secondly, we   use   them in order to construct   superposition rules  for  LH systems of classes P$_1$, P$_5$ and I$^{r=1}_{14A}$ that were no considered in~\cite{BCHLS13Ham}.

Finally, a summary of all the specific LH systems considered throughout the paper (cf.~table~\ref{table3}) as well as some  open problems are addressed in section 11.


\sect{Local classification of LH systems on the
plane}

In general, we hereafter assume all structures to be smooth and globally defined. We also consider points where each Poisson bivector $\Lambda$ has locally constant rank. This simplifies the presentation and highlights our main results.  

A {\em generic point} of a Lie algebra $V$ of vector fields is a point around which the vector fields of $V$ span a regular distribution. We write ${\rm dom}\,V$ for the set of generic points of $V$. Every Lie algebra of planar vector fields is locally diffeomorphic around generic points to one of the 28 classes of vector fields of the GKO classification, which covers two subclasses called {\em primitive} (8 cases P$_x$)  and {\em imprimitive}  (20 cases  I$_x$) ones~\cite{GKP92}. 
 
 To determine which of the 28 classes can be considered as Vessiot--Guldberg Lie algebras of Hamiltonian vector fields,  a symplectic form
$$
\omega=f(x,y){\rm d}x\wedge \dd y
$$
 must be found so that each element $X_i$ of a basis $\{X_1,\ldots,X_l\}$ of the Vessiot--Guldberg Lie algebra under study becomes Hamiltonian (see \cite{BBHLS} for details). In such a case, we say that $\omega$ is {\em compatible with the Vessiot--Guldberg Lie algebra}. Note that
each $X_i$ is a Hamiltonian vector field with respect to $\omega$ whenever   the Lie derivative of $\omega$ relative to $X_i$ vanishes, that is, $\mathcal{L}_{X_i}\omega=0$. If $\omega$ exists, then all $X_i$ become Hamiltonian vector fields and their
 corresponding Hamiltonian functions $h_i$ are obtained by using the relation $\iota_{X_i}\omega={\rm d}h_i$. 
 The symplectic form $\omega$ induces a Poisson bracket on $C^\infty(\mathbb{R}^2)$ of the form 
 \begin{equation}\label{LB}
 \{\cdot,\cdot\}_\omega\ :\ C^\infty\left(\mathbb{R}^2\right)\times C^\infty\left(\mathbb{R}^2\right)\ni (f,g)\mapsto X_gf\in C^\infty\left(\mathbb{R}^2\right),
 \end{equation}
 with $X_g$ being the Hamiltonian vector field associated to the function $g$. In this way, the functions $ h_1,\ldots,h_l$ and their successive Lie brackets with respect to (\ref{LB}) span a finite-dimensional Lie algebra of functions  that we call a
    {\it  LH algebra} of $V$.

 It has been recently proven that the $8+20$ classes of the GKO classification lead to $4+8$ classes of finite-dimensional Lie algebras of Hamiltonian vector fields~\cite{BBHLS}.
  The final result  is summarized in table~\ref{table1},  where we detail the Lie algebra $ \mathfrak{g}$ isomorphic to the Vessiot--Guldberg Lie algebra spanned by the vector fields $X_i$, an associated symplectic form $\omega$ and the corresponding Hamiltonian functions $h_i$.


\begin{table}[t] {\footnotesize
  \noindent
\caption{{\small The   classification of the 12 finite-dimensional real  Lie algebras of Hamiltonian vector fields on $\mathbb{R}^2$. Note that $\mathfrak{g}=\mathfrak{g}_1\ltimes \mathfrak{g}_2$ means that $\mathfrak{g}$ is the direct sum (as linear subspaces) of $\mathfrak{g}_1$ and $\mathfrak{g}_2$, with $\mathfrak{g}_2$ being an ideal of $\mathfrak{g}$. For I$_{12}$, I$_{14A}$ and I$_{16}$, we have $j=1,\dots,r$ while in  I$_{14B}$ the index $j=2,\dots, r$   since $\eta_1(x)\equiv 1$. In all cases $r\geq 1$. The elements of the basis of the Lie algebra P$_3$ also admit Hamiltonian functions $\bar h_1=h_1+1/4,h_2,h_3$, respectively, spanning a LH algebra isomorphic to $\mathfrak{so}(3)$.
  }}
\label{table1}
\medskip
\noindent\hfill
\begin{tabular}{ l   l l l l }
\hline
 &&&  &\\[-1.5ex]
\#&$\mathfrak{g}$ &Basis of vector fields $X_i$  & Hamiltonian functions $h_i$& $\omega$ \\[+1.0ex]
\hline
 &  & &  &\\[-1.5ex]
P$_1$& $  {\mathfrak{iso}}(2)$& $  { {\partial_x} , \   {\partial_y} , \    y\partial_x - x\partial_y}$ & ${y, \ -x, \ \tfrac 12 (x^2+y^2)},\ 1$ & ${\rm d}x\wedge {\rm d}y$ \\[+1.0ex]
P$_2$& $\mathfrak{sl}(2 )$& $ {\partial_x}, \  {x\partial_x  +  y\partial_y}, \  (x^2  -  y^2)\partial_x  +  2xy\partial_y$ & $\displaystyle{- \frac 1y, \ -\frac xy, \  -\frac{x^2+y^2}{y}}$ & $\displaystyle{    \frac{ {\rm d}x\wedge {\rm  d}y }{ y^{2}  } }$ \\[+2.5ex]
P$_3$& $\mathfrak{so}(3)$ & $ y\partial_x  -  x\partial _y,\ (1  +  x^2  -  y^2)\partial_x  +  2xy\partial_y,$ &$\displaystyle { \frac{-1}{2 (1+x^2+y^2) },\  \frac{ y}{1+x^2+y^2},  }$ &$\displaystyle \frac{\dd x\wedge \dd y}{(1+x^2+y^2)^{2}}$\\[+2.5ex]
& & $2xy\partial_x  +  (1  +  y^2  -  x^2)\partial_y$ &$\displaystyle { - \frac{x}{1+x^2+y^2} }$, \ 1 &  \\[+2.5ex]
 P$_5$& $\mathfrak{sl}(2 )\! \ltimes\! \mathbb{R}^2$&$ {\partial_x},  \ {\partial_y},\  x\partial_x - y\partial_y, \  y\partial_x, \  x\partial_y$  & ${y,\  -x, \ xy,\  \frac 12 y^2,\  -\frac 12 x^2, \ 1}$&$\dd x\wedge \dd y$     \\[+2ex]

I$_1$& $\mathbb{R}$  & $ \partial_x$&$\int^y{f(y')\dd y'}$ & $f(y)\dd x\wedge \dd y$\\[+1.5ex]
I$_4$& $\mathfrak{sl}(2 )$&${  {\partial_x  +  \partial_y}, \   {x\partial _x + y\partial_y}, \  x^2\partial_x  +  y^2\partial_y}$   & $\displaystyle{ \frac 1 {{x-y}   } ,\ \frac{x+y}{2(x-y)},\ \frac{xy}{x-y} }$ &$\displaystyle   {\frac{\dd x\wedge \dd y} {{(x-y)^{2}}}}$ \\[+2.5ex]
I$_5$& $\mathfrak{sl}(2 )$ &${ {\partial_x}, \   { x\partial_x +\frac 12 y\partial_y}, \  x ^2\partial_x  +  xy\partial_y}$ &$\displaystyle { {-\frac{1}{2y^{2}},\ -\frac{x}{2 y^{2}},\ -\frac{x^2}{2 y^{2}  }  }} $  &$\displaystyle{\frac{\dd x\wedge \dd y}{y^3}}$ \\[+2.5ex]
I$_8$& $  { {\mathfrak{iso}}}(1,1)$& ${  {\partial_x},  \  {\partial_y}, \  x\partial_x  - y\partial_y},\ \   $ &${y,\ -x,\ xy,\ 1 }$ & $\dd x\wedge \dd y$ \\[+2.0ex]
I$_{12}$& $\mathbb{R}^{r+1}$& $ {\partial_y} , \  \xi_1(x)\partial_y, \ldots , \xi_r(x)\partial_y$ &$-\int^x \!\! {f(x')\dd x'}, - \int^x \!\! {f(x')\xi_j(x')\dd x'}$ & $f(x)\dd x\wedge \dd y$ \\[+2.0ex]
I$_{14A}$& $\mathbb{R} \ltimes \mathbb{R}^{r}$& ${ {\partial_x}, \   {\eta_1(x)\partial_y} ,\ldots ,\eta_r(x)\partial_y}$  &$y,\  - \int^x {\eta_j(x')\dd x'} $,\quad $1\notin  \langle \eta_j \rangle$ & $ \dd x\wedge \dd y$ \\[+2.0ex]
I$_{14B}$& $\mathbb{R} \ltimes \mathbb{R}^{r}$& ${ {\partial_x}, \   {\partial_y} , \  {\eta_2(x)\partial_y},\ldots ,\eta_r(x)\partial_y}$  &$y,\ -x, \ - \int^x{\eta_j(x')\dd x'},\ 1 $ & $ \dd x\wedge \dd y$ \\[+1.5ex]
I$_{16}$& $ {\mathfrak{h}_2 \! \ltimes \! \mathbb{R}^{r+1}}$ & ${  {\partial_x}, \    {\partial_y}  , \   x\partial_x  - y\partial y,  \  x\partial_y, \ldots, x^r\partial_y}$ & $\displaystyle{ {y,\  -x, \ xy,  \  -\frac{x^{j+1}}{j+1} ,\ 1} }$ & $\dd x\wedge \dd y$ \\[+2.0ex]
    \hline
\end{tabular}
\hfill}
\end{table}


We remark that  in some cases the functions $h_1,\ldots, h_l$ do not span by themselves a Lie algebra of  Hamiltonian functions and a central generator $h_0=1$ must be added in such a manner that $\langle h_1,\ldots,h_l,h_0\rangle$ form a central extension of the initial Vessiot--Guldberg Lie algebra. For instance, the case ${\rm P}_1$ from table~\ref{table1} corresponds to the two-dimensional Euclidean algebra
$ {\mathfrak{iso}}(2)\simeq\langle X_1,X_2,X_3\rangle$, but the Hamiltonian functions $ h_1,h_2,h_3,h_0=1$
span the centrally extended  Euclidean algebra $\overline {\mathfrak{iso}}(2)$ (as in (\ref{za})).  A similar fact arises in classes ${\rm P}_3\simeq \mathfrak{so}(3)$, ${\rm P}_5\simeq \mathfrak{sl}(2 ) \ltimes\mathbb{R}^2$,
${\rm I}_8\simeq  { {\mathfrak{iso}}}(1,1)$ (the $(1+1)$-dimensional Poincar\'e algebra), ${\rm I}_{14B}\simeq \mathbb{R} \ltimes \mathbb{R}^{r}$ and ${\rm I}_{16}\simeq {\mathfrak{h}_2 \! \ltimes \! \mathbb{R}^{r+1}}$. Among them, only the family ${\rm P}_3\simeq \mathfrak{so}(3)$ is a simple Lie algebra so   that $h_0=1$ gives rise to a trivial central extension, namely the LH  algebra is 
  $\mathfrak{so}(3)\oplus\mathbb{R}$; otherwise the central extension is a non-trivial one and it cannot be `removed' (see \cite{BBHLS} for details).

  In this respect, notice    that the appearance of a non-trivial central extension is the  difference between  the families  ${\rm I}_{14B}$ and  ${\rm I}_{14A}$.   We also recall that the LH  algebra corresponding  to the class P$_5$, that is  $\overline{\mathfrak{sl}(2 )\ltimes \mathbb{R}^2}$,  is isomorphic to the two-photon Lie algebra   $\mathfrak{h}_6$ (see~\cite{BBF09,Gilmore} and references therein)
  and, therefore,    to   the  $(1+1)$-dimensional centrally extended Schr\"odinger  Lie algebra~\cite{Schrod}.

We stress that   the Lie algebra  $\mathfrak{sl}(2 )$ appears three times (classes ${\rm P}_2$,  ${\rm I}_4$ and ${\rm I}_5$) which means that there are
{\em different} LH systems sharing isomorphic Vessiot--Guldberg Lie algebras that are non-diffeomorphic, that is, there exists no diffeomorphism mapping the elements of one into the other. In other words, only LH  systems  belonging to each class can be related through a $t$-independent change of variables.   We shall explicitly apply this property throughout the paper. In section 4   we   develop new criteria to easily determine to which class is diffeomorphic a Vessiot--Guldberg Lie algebra isomorphic to $\mathfrak{sl}(2)$ on the plane.


\sect{Determination of non-simple LH systems}

The standard approach for determining a symplectic form $\omega$ turning the elements of a Vessiot--Guldberg Lie algebra $V$ into local Hamiltonian vector fields consists in solving the family of PDEs in $\omega$ of the form $\mathcal{L}_{X_i}\omega=0$ with $X_i$ being any element of $V$~\cite{BBHLS}. Meanwhile, we here show how we can derive $\omega$  out of the Lie algebra and geometric structure of $V$ for non-simple planar Vessiot--Guldberg Lie algebras not diffeomorphic  either to the trivial Lie algebra  I$_{1}$ or  to the  Abelian one I$_{12}$.

Given an $m$-dimensional manifold $M$, a {\it multivector field} on $M$ is an element of the $C^\infty(M)$-module $\mathfrak{X}^\bullet M$ of totally antisymmetric contravariant tensor fields on $M$ of any order. Totally $k$-contravariant multivector fields are called  {\it $k$-multivector fields} and, when $k=2$, {\it bivector fields}.  We write $\mathfrak{X}^kM$ for the $C^\infty(M)$-module of $k$-multivector fields, $\mathfrak{X}^0M$ stands for $C^\infty(M)$ and we fix $\mathfrak{X}^kM=\{0\}$ for $k>\dim M$ and $k<0$. 
The space $\mathfrak{X}^\bullet M$ becomes a $\mathbb{Z}$-graded algebra with respect to the decomposition $\mathfrak{X}^\bullet M=\bigoplus_{i\in \mathbb{Z}}\mathfrak{X}^iM$ when endowed with the $C^\infty(M)$-bilinear exterior product $\wedge:\mathfrak{X}^\bullet M\times \mathfrak{X}^\bullet M\rightarrow \mathfrak{X}^\bullet M$ satisfying 
$$
(P\wedge Q)(\theta_1,\ldots,\theta_{p+q}):= \sum_{\sigma\in S_{p+q}}  (-1)^{{\rm sign}(\sigma)}P(\theta_{\sigma(1)},\ldots,\theta_{\sigma(p)})Q(\theta_{\sigma(p+1)},\ldots,\theta_{\sigma(p+q)}),
$$
with  $S_{p+q}$ being the permutation group of $p+q$ elements, $P\in \mathfrak{X}^pM$, $Q\in \mathfrak{X}^qM$, and $\theta_1,\ldots,\theta_{p+q}$ being arbitrary one-forms on $M$ \cite{IV,CMM97}. 

The natural Lie algebra structure on the space $\mathfrak{X}^1M$ of vector fields on $M$ can be extended to an $\mathbb{R}$-bilinear operation $[\,\cdot\,,\cdot\,]_{\rm SN}:\mathfrak{X}^\bullet M\times \mathfrak{X}^\bullet M\rightarrow \mathfrak{X}^\bullet M$ by requiring $\mathfrak{X}^\bullet M$ to become a graded Lie algebra relative to the decomposition $\mathfrak{X}^\bullet M=\bigoplus_{i\in \mathbb{Z}}\mathfrak{X}^iM$ and considering each element of $\mathfrak{X}^iM$ to have degree $i-1$. The resulting $\mathbb{R}$-bilinear operation is called the {\it Schouten--Nijenhuis bracket} \cite{IV}. A bivector field $\Lambda$ satisfying that $[\Lambda,\Lambda]_{\rm SN}=0$ is called a {\it Poisson bivector}.

Consider a Lie algebra $\mathfrak{g}$. Let $(T(\mathfrak{g}),\otimes)$ be the tensorial algebra relative to the linear space $\mathfrak{g}$ and let $\mathcal{R}$ be the ideal of $T(\mathfrak{g})$ generated by the elements $[v_1,v_2]-(v_1\otimes v_2-v_2\otimes v_1)$ with $v_1,v_2\in \mathfrak{g}$. We call $U(\mathfrak{g}):= T(\mathfrak{g})/\mathcal{R}$ a {\it universal enveloping Lie algebra} associated to $\mathfrak{g}$. Observe that $\mathfrak{g}$ can naturally be considered as a subspace of $U(\mathfrak{g})$. Since $\mathcal{R}$ is an ideal, the tensorial product of $(T(\mathfrak{g}),\otimes)$ gives rise to an $\mathbb{R}$-bilinear product $\widetilde{\otimes}:U(\mathfrak{g})\times U(\mathfrak{g})\rightarrow U(\mathfrak{g})$ turning $U(\mathfrak{g})$ into an $\mathbb{R}$-algebra $(U(\mathfrak{g}),\widetilde{\otimes})$. The Lie algebra structure of $\mathfrak{g}$ can be extended to $U(\mathfrak{g})$ by requiring the extension to become a derivation on each entry with respect to the product  in  $(U(\mathfrak{g}),\widetilde{\otimes})$. This makes $U(\mathfrak{g})$ into a Lie algebra $(U(\mathfrak{g}),[\cdot,\cdot]_{U(\mathfrak{g})})$ \cite{CL99}. A {\it Casimir element} is an element of $U(\mathfrak{g})$ commuting with every element of $\mathfrak{g}$, namely an element in the center of $(U(\mathfrak{g}),[\cdot,\cdot]_{U(\mathfrak{g})})$.  

Repeating the same process as above for $\mathcal{R}$ being the ideal of $T(\mathfrak{g})$ generated by the elements $v_1\otimes v_2-v_2\otimes v_1$ we obtain the {\em symmetric algebra} $S(\mathfrak{g})$ of $\mathfrak{g}$. The Lie algebra structure of $\mathfrak{g}$ can be extended to $S(\mathfrak{g})$ turning this space into a Poisson algebra. As before, $\mathfrak{g}$ is naturally embedded within $S(\mathfrak{g})$. Given a Lie algebra monomorphism $\rho:\mathfrak{g}\rightarrow \mathfrak{X}^1M$ mapping a basis $\{v_{1},\ldots,v_l\}$ of $\mathfrak{g}$ into the vector fields $X_1,\ldots, X_l$, respectively, the linear morphism $\lambda:S(\mathfrak{g})\rightarrow U(\mathfrak{g})$ mapping $\lambda(v_{i_1}\cdot\ldots\cdot v_{i_l})=1/l!\sum_{\sigma \in S_{l}}X_{\sigma(i_1)}\widetilde{\otimes}\ldots \widetilde{\otimes} X_{\sigma(i_l)}$ is a linear isomorphism \cite{BCHLS13Ham,Va84}. This morphism, the {\it symmetryzer morphism}, allows us to interpret the elements of $U(\mathfrak{g})$ as symmetric tensorial elements of $T(\mathfrak{g})$. Moreover, the above mentioned linear morphism  maps the center of $S(\mathfrak{g})$ into the center of $U(\mathfrak{g})$. Hence, if an element of $\mathfrak{g}$ commutes with the whole $S(\mathfrak{g})$ according to the Lie algebra structure of $S(\mathfrak{g})$, then it also commutes with the whole $U(\mathfrak{g})$ respect to its corresponding Lie algebra structure.

Let $V$ be a Vessiot--Guldberg Lie algebra of vector fields on $M$. We define the linear subspace $V\wedge V:= \{ X\wedge Y\,|\,X,Y\in V\}\subset \mathfrak{X}^2M$. Since $\mathcal{L}_X(T_1\otimes T_2)=\mathcal{L}_XT_1\otimes T_2+T_1\otimes \mathcal{L}_XT_2$ for every $X\in V$, $T_1,T_2\in \mathfrak{X}^\bullet M$, then the Lie derivative is a derivation relative to the exterior product of multivector fields and we can define a Lie algebra representation $\varphi_V:X\in V\mapsto L_X\in {\rm End}(V\wedge V)$ for $L_X: \Lambda\in V\wedge V\mapsto \mathcal{L}_X\Lambda \in V\wedge V$. Moreover, due to the graded Lie algebra structure of $\mathfrak{X}^\bullet M$, the Schouten--Nijenhuis bracket of two bivector fields is a 3-vector field. Since 3-vector fields on $\mathbb{R}^2$ vanish everywhere, then the Schouten--Nijenhuis bracket of two bivector fields on   $\mathbb{R}^2$ is zero and all elements of $V\wedge V$ are Poisson bivectors. 

\begin{proposition}\label{Con1} Let $V$ be a Vessiot--Guldberg Lie algebra of planar vector fields. The vector fields of $V$ are Hamiltonian with respect to a bivector field   $\Lambda\in V\wedge V\backslash\{0\}$ if and only if $V$ admits a one-dimensional trivial Lie algebra representation within $V\wedge V$.
\end{proposition}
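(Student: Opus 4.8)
The plan is to reformulate the condition ``$\Lambda$ is invariant'' in terms of the representation $\varphi_V$ introduced above, and then to connect invariance of $\Lambda$ with the Hamiltonian character of the vector fields in $V$. First I would recall that, since $M=\mathbb{R}^2$ and 3-vector fields vanish identically there, every nonzero $\Lambda\in V\wedge V$ is automatically a Poisson bivector; moreover a bivector of the form $\Lambda=f(x,y)\,\partial_x\wedge\partial_y$ is nondegenerate precisely on the open set where $f\neq 0$, so it defines a symplectic structure there. Thus the content of the proposition is really the equivalence between the vector fields $X_i$ being Hamiltonian with respect to $\Lambda$ and the existence of a one-dimensional trivial subrepresentation of $\varphi_V$ inside $V\wedge V$.

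For the implication ($\Leftarrow$): suppose there is a nonzero $\Lambda\in V\wedge V$ spanning a trivial one-dimensional subrepresentation, i.e.\ $\mathcal{L}_X\Lambda=0$ for all $X\in V$. I would argue that $X$ is Hamiltonian with respect to $\Lambda$ by the standard characterization: a vector field preserving a Poisson bivector $\Lambda$ is (locally, hence globally on $\mathbb{R}^2$ by the cohomological triviality assumptions, or by exhibiting the Hamiltonian explicitly) a Hamiltonian vector field. Concretely, writing $\Lambda=f\,\partial_x\wedge\partial_y$ on the open dense set where $f\neq 0$, the associated symplectic form is $\omega=f^{-1}\,\dd x\wedge\dd y$, and $\mathcal{L}_X\Lambda=0$ is equivalent to $\mathcal{L}_X\omega=0$; then $\ota_X\omega$ is closed, hence exact on $\mathbb{R}^2$, giving a Hamiltonian function $h$ with $\ota_X\omega=\dd h$, which is exactly the statement that $X$ is Hamiltonian. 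One should remark that this argument is applied on ${\rm dom}\,V$ and note that the Hamiltonian functions extend to where $\Lambda$ has constant rank, consistently with the blanket assumptions of the section.

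For the converse ($\Rightarrow$): assume the vector fields of $V$ are Hamiltonian with respect to some $\Lambda\in V\wedge V\setminus\{0\}$. The key point is that a vector field is Hamiltonian with respect to a Poisson bivector $\Lambda$ if and only if it preserves $\Lambda$, i.e.\ $\mathcal{L}_X\Lambda=0$ (on $\mathbb{R}^2$ the obstruction living in the second Poisson cohomology vanishes for rank-constant $\Lambda$). Hence $\mathcal{L}_X\Lambda=0$ for every $X\in V$, which says precisely that $\mathbb{R}\Lambda\subset V\wedge V$ is a one-dimensional subspace on which $\varphi_V$ acts trivially. The only subtlety to spell out is that $\mathbb{R}\Lambda$ is genuinely a subrepresentation, i.e.\ that $L_X$ maps it into itself — but this is immediate since $L_X\Lambda=0$.

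The main obstacle I anticipate is the clean justification that ``Hamiltonian with respect to $\Lambda$'' $\iff$ ``$\mathcal{L}_X\Lambda=0$'' on $\mathbb{R}^2$: one direction (Hamiltonian $\Rightarrow$ $\Lambda$-preserving) is a general Poisson-geometry fact, but the reverse requires invoking triviality of the relevant cohomology (guaranteed here by working on $\mathbb{R}^2$ with $\Lambda$ of locally constant rank, as assumed at the start of the section) to pass from a locally Hamiltonian to a globally Hamiltonian vector field. Everything else is a direct unwinding of the definition of $\varphi_V$ and of $V\wedge V$.
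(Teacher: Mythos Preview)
Your proposal is correct and follows essentially the same approach as the paper's proof: both directions reduce to the equivalence, for a nonzero planar Poisson bivector $\Lambda$, between ``$X$ is Hamiltonian with respect to $\Lambda$'' and ``$\mathcal{L}_X\Lambda=0$''. The paper's own argument is terser---it simply asserts this equivalence (with a brief remark afterwards on why $\Lambda\neq 0$ is needed), whereas you spell it out via the associated symplectic form $\omega=f^{-1}\,\dd x\wedge\dd y$ and the closed/exact argument for $\iota_X\omega$ on $\mathbb{R}^2$; this extra detail is sound and matches the paper's standing assumptions on locally constant rank.
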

\begin{proof} 
 If $V$ is a Vessiot--Guldberg Lie algebra of Hamiltonian planar vector fields with respect to a bivector field $\Lambda\in V\wedge V\backslash \{0\}$, then $\mathcal{L}_X\Lambda=0$ for every $X\in V$ and $\Lambda$ spans a one-dimensional trivial Lie algebra representation of $V$.
 Conversely, if $V$ acts trivially and irreducibly via $\varphi_V$ on a one-dimensional vector subspace $W\subset V\wedge V$, then the Lie derivatives of every $\Lambda\in W\backslash\{0\}$ with respect to the elements of $V$ vanish. As $V$ consists of planar vector fields by assumption and $\Lambda$ is a non-zero bivector field, then  the vector fields of $V$ are Hamiltonian relative to $\Lambda$.
\end{proof}

\begin{note} If $\Lambda$ is a zero planar bivector field, then $\mathcal{L}_X\Lambda=0$ for every vector field $X\neq 0$ and $\Lambda$ is a Poisson bivector. Nevertheless, the only Hamiltonian vector fields for a zero Poisson bivector are the zero vector fields. So, $X$ is not a  Hamiltonian vector field relative to $\Lambda$. That is why $\mathcal{L}_X\Lambda=0$ implies that $X$ is Hamiltonian provided $\Lambda\neq 0$. 
\end{note}

\begin{note} The existence of trivial one-dimensional representations of $V$ within $V\wedge V$ depends on the Lie algebra structure of $V$ and also on its geometric structure as a Lie algebra of vector fields. Indeed, isomorphic Vessiot--Guldberg Lie algebras may admit a different number of trivial representations in $V\wedge V$. For instance, $V:= \langle \partial_x,\partial_y\rangle\simeq \mathbb{R}^2 $ gives rise to a unique one-dimensional representation $\langle \partial_x\wedge\partial_y\rangle =V\wedge V$,  while $V:=\langle \partial_x,y\partial_ x\rangle\simeq \mathbb{R}^2$ does not give rise to any such a representation because $V\wedge V=\{0\}$.  
\end{note}

\begin{example} Consider the Lie algebra $V:={\rm I}_{14B}$. In view of table~\ref{table1}, the Lie derivatives of $\Lambda:=\partial_x\wedge \partial_y\in V\wedge V$ with respect to every element of ${\rm I}_{14B}$ vanish. Indeed, 
$$
\left[\partial_x,\partial_x\wedge \partial_y\right]_{\rm SN}=\left[\partial_y,\partial_x\wedge \partial_y\right]_{\rm SN}=0,\quad\     [\eta_j(x)\partial_y,\partial_x\wedge \partial_y]_{\rm SN}=-\frac{\partial \eta_j}{\partial x}\partial_y\wedge\partial_y=0,\quad j=2,\ldots,r.
$$ 
This turns $W=\langle \partial_x\wedge\partial_y\rangle$ into a trivial one-dimensional representation of $V$. Proposition~\ref{Con1} entails that $V$ consists of Hamiltonian vector fields relative to $\Lambda$. This retrieves the result of table~\ref{table1}, where we find the symplectic structure, ${\rm d}x\wedge {\rm d}y$, induced by $\Lambda$. 
Moreover, every Lie algebra ${\rm I}_{14A}$ can be extended to a Lie algebra ${\rm I}_{14B}$ by adding the vector field $\partial_y$. Applying the above procedure, we obtain the same canonical symplectic structure turning the elements of $V$ into Hamiltonian vector fields.
\end{example}

\begin{example} Let us turn to the Lie algebra $V:=I_{16}$. From table~\ref{table1}, we have 
$$
\begin{gathered}
\left[\partial_x,\partial_x\wedge \partial_y\right]_{\rm SN}=0,\qquad [x\partial_x-y\partial_y,\partial_x\wedge \partial_y]_{\rm SN}=-\partial_x\wedge \partial_y+\partial_x\wedge \partial_y=0,\\
\left[\partial_y,\partial_x\wedge \partial_y\right]_{\rm SN}=0,\qquad[x^j\partial_y,\partial_x\wedge\partial_y]_{\rm SN}=-jx^{j-1}\partial_y\wedge\partial_y=0,\qquad j=1,\ldots,r.
\end{gathered}
$$
Hence, $W=\langle \partial_x\wedge \partial_y\rangle \subset  V\wedge V$ is a trivial one-dimensional representation of $V$. As a consequence, $\partial_x\wedge\partial_y$ turns all the elements of $W$ into Hamiltonian vector fields. This is equivalent to the result given in table~\ref{table1}.
\end{example}

The question now is how to determine one-dimensional trivial representations of $V$ in $V\wedge V$. We next devise a method to obtain them for most of the non-simple Lie algebras of table~\ref{table1}.

\begin{theorem}\label{Con2} If $V$ is a planar Vessiot--Guldberg Lie algebra admitting a two-dimensional ideal $I$ such that $I\wedge I\neq \{0\}$ and the elements of $V$ act on $I$ by traceless operators, namely the mappings $\vartheta_X:Y\in I\mapsto [X,Y]\in I$ are traceless for each $X\in V$, then $V$ becomes a Lie algebra of Hamiltonian vector fields with respect to every element of $ I\wedge I\backslash\{0\}$.
\end{theorem}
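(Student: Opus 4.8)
The goal is to exhibit a concrete nonzero bivector $\Lambda \in I \wedge I$ and show $\mathcal{L}_X \Lambda = 0$ for all $X \in V$; Proposition~\ref{Con1} then does the rest. Since $I$ is two-dimensional, $I \wedge I$ is at most one-dimensional, and the hypothesis $I \wedge I \neq \{0\}$ means exactly that a basis $\{Y_1, Y_2\}$ of $I$ satisfies $Y_1 \wedge Y_2 \neq 0$; set $\Lambda := Y_1 \wedge Y_2$, which spans $I \wedge I$. The first step is to observe that because $I$ is an ideal, for every $X \in V$ we have $\mathcal{L}_X Y_i = [X, Y_i] \in I$, so $\mathcal{L}_X \Lambda = [X, Y_1]\wedge Y_2 + Y_1 \wedge [X, Y_2]$ again lies in $I \wedge I$, hence is a scalar multiple of $\Lambda$. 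Thus $\varphi_V$ restricted to the line $I \wedge I$ is a genuine one-dimensional representation of $V$; it remains only to show it is the trivial one.

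The second step identifies that scalar. Writing $\vartheta_X$ for the operator $Y \mapsto [X,Y]$ on $I$ in the basis $\{Y_1, Y_2\}$, a direct expansion gives
\begin{equation}
\mathcal{L}_X(Y_1 \wedge Y_2) = (\vartheta_X Y_1)\wedge Y_2 + Y_1 \wedge (\vartheta_X Y_2) = \operatorname{tr}(\vartheta_X)\, Y_1 \wedge Y_2,
\end{equation}
which is the standard fact that the induced action on the top exterior power of a vector space is multiplication by the trace. By hypothesis $\operatorname{tr}(\vartheta_X) = 0$ for every $X \in V$, so $\mathcal{L}_X \Lambda = 0$ for all $X \in V$.

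The third and final step is bookkeeping: $W := I \wedge I = \langle \Lambda \rangle$ is a one-dimensional subspace of $V \wedge V$ (using $I \subset V$) on which $V$ acts trivially via $\varphi_V$, so by Proposition~\ref{Con1} the vector fields of $V$ are Hamiltonian with respect to $\Lambda$, and indeed with respect to every nonzero element of $I \wedge I$ since they all differ from $\Lambda$ by a nonzero constant. One should also note in passing (via the Note following Proposition~\ref{Con1}) that $\Lambda \neq 0$ is essential and is guaranteed precisely by the assumption $I \wedge I \neq \{0\}$; if one wants the associated symplectic structure one inverts $\Lambda$ on $\operatorname{dom} V$ as in Table~\ref{table1}.

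I do not expect a serious obstacle here: the only mildly delicate point is justifying the trace identity for the action on $\Lambda$, but on a two-dimensional space this is immediate from expanding $(\vartheta_X Y_1)\wedge Y_2 + Y_1\wedge(\vartheta_X Y_2)$ in the basis $\{Y_1,Y_2\}$ and collecting the coefficient of $Y_1\wedge Y_2$. The slightly subtle modelling choice is realizing that "$I$ an ideal" is what keeps $\mathcal{L}_X\Lambda$ inside the line $I\wedge I$ in the first place — without it, the Lie derivative could leave $V\wedge V$'s relevant subspace and the argument would collapse.
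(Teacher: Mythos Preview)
Your proposal is correct and follows essentially the same approach as the paper: both pick a basis $\{Y_1,Y_2\}$ of $I$, set $\Lambda=Y_1\wedge Y_2$, use that $I$ is an ideal to keep $\mathcal{L}_X\Lambda$ inside the line $I\wedge I$, and then identify the scalar as $\operatorname{tr}(\vartheta_X)=0$. The only cosmetic difference is that the paper extracts the trace by introducing the dual coframe $\{\theta_1,\theta_2\}$ and pairing with the volume form $\theta_1\wedge\theta_2$, whereas you invoke the standard linear-algebra fact about the induced action on the top exterior power; and the paper concludes directly that $X$ is Hamiltonian from $\mathcal{L}_X\Lambda=0$ with $\Lambda\neq 0$ rather than routing through Proposition~\ref{Con1}, but the content is identical.
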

\begin{proof} If $I\wedge I\neq \{0\}$, the two-dimensional ideal $I=\langle Y_1,Y_2\rangle$ gives rise to a one-dimensional space $I\wedge I$. Since $I$ is an ideal, the Lie brackets of elements   $X\in V$ with elements of $I$ belong to $I$. This ensures that the mappings $\vartheta_X$ are well defined: the Lie derivative with respect to every element of $V$ of an element of $I\wedge I$ belongs to $I\wedge I$. As $I\wedge I\neq\{0\}$ and $V$ consists of planar vector fields, then $Y_1\wedge Y_2\neq 0$ and we can define the dual one-forms  $\{\theta_1,\theta_2\}$ to $\{Y_1,Y_2\}$. Moreover, $\theta_1\wedge \theta_2$ is a volume form on $\mathbb{R}^2$ and we have that 
$$
\begin{aligned}
\mathcal{L}_X (Y_1\wedge Y_2)&=(\mathcal{L}_X Y_1)\wedge Y_2+Y_1\wedge(\mathcal{L}_X Y_2)\\
&=[(\theta_1\wedge\theta_2)(\vartheta_X Y_1,Y_2)+(\theta_1\wedge\theta_2)(Y_1,\vartheta_X Y_2)]Y_1\wedge Y_2\\&={\rm Tr}(\vartheta_X)Y_1\wedge Y_2\\&=0.
\end{aligned}
$$
Since $\Lambda:= Y_1\wedge Y_2\neq 0$ is a planar bivector field, then $X$ becomes a Hamiltonian vector field relative to $\Lambda$ (or to any non-zero bivector field of $V\wedge V$).
\end{proof}

\begin{note} Observe that the condition $I\wedge I\neq \{0\}$ is unavoidable so as to ensure that $V$ admits a compatible Poisson bivector within $V\wedge V\backslash\{0\}$. For instance, the Lie algebra I$_{19}$ with $r=1$ of vector fields of the GKO classification (see \cite{GKP92,BBHLS}) takes the form I$_{19}=\langle \partial_x,\partial_y,x\partial_y,2x\partial_x+y\partial_y,x^2\partial_x+xy\partial_y\rangle$. Note that $I:= \langle\partial_y,x\partial_y\rangle\simeq \mathbb{R}^2$ is an ideal of I$_{19}$ and the elements of I$_{19}$ act on $I$ as traceless operators. Hence, I$_{19}$ satisfies all conditions of theorem \ref{Con2} apart from the fact that $I\wedge I=\{0\}$. In view of table~\ref{table1}, this Lie algebra does not admit a compatible symplectic structure. Hence, the lack of condition $I\wedge I\neq \{0\}$ makes theorem \ref{Con2} to be false.
\end{note}

Let us now show how the above statement allows us to determine many of the Poisson bivector structures appearing in the table~\ref{table1}. 

\begin{example}The Lie algebra P$_1$ admits a two-dimensional ideal $I=\langle \partial_x,\partial_y\rangle$ satisfying that $I\wedge I\neq \{0\}$. Moreover, $\partial_x,\partial_y,y\partial_x-x\partial_y$ act as traceless operators on $I$. In view of theorem \ref{Con2}, the basis $\partial_x\wedge \partial_y$ of $I\wedge I$ becomes a Poisson bivector turning P$_1$ into a Lie algebra of Hamiltonian vector fields. Observe that this Poisson bivector gives rise to a symplectic form $\omega={\rm d}x\wedge {\rm d}y$, which is the one described in table~\ref{table1}. 
\end{example}

\begin{example}
Let us turn to the Lie algebra P$_5$. We have that $I=\langle \partial_x,\partial_y\rangle$ is an ideal of P$_5$ with $I\wedge I\neq 0$ and it is straightforward to prove that all elements of P$_5$ act as traceless linear operators on $I$. Hence, theorem \ref{Con2} ensures that P$_5$ is a Lie algebra of Hamiltonian vector fields relative to the basis $\Lambda:=\partial_x\wedge\partial_y$ of $I\wedge I$. As in the previous example, the symplectic form associated to $\Lambda$ is the canonical one described in table~\ref{table1}.
\end{example}

\begin{example}
The Lie algebra I$_8$ possesses an ideal $I=\langle \partial_x,\partial_y\rangle$ with $I\wedge I\neq \{0\}$ and all the elements of I$_8$ act on $I$ as   traceless mappings. Hence, theorem \ref{Con2} ensures that I$_8$ is a Lie algebra of Hamiltonian vector fields relative to the basis $\Lambda:= \partial_x\wedge \partial_y$ of $I\wedge I$. 
\end{example}

\begin{example}
Consider the Lie algebra of the class I$^{r=2}_{14B}$ given by $V:=\langle {  {\partial_x},   {\partial_y}, x\partial_y  }\rangle$. This Lie algebra possesses an ideal $I=\langle \partial_x,\partial_y\rangle$ satisfying that $I\wedge I\neq \{0\}$. Moreover, the elements of $V$ act on $I$ as traceless operators. Hence, theorem \ref{Con2} ensures that this Lie algebra again consists of Hamiltonian vector fields relative to $\Lambda:=\partial_x\wedge \partial_y$. 
\end{example}

As a practical application, let us apply theorem \ref{Con2} to the subalgebra $V$ appearing in the study of Bernoulli differential equations (\ref{PlanBer}) with $a^R_1(t)=0$. Observe that $V=\langle X_1,X_2,X_3\rangle$, where $X_1,X_2,X_3$ are given in (\ref{VectComBer}),  admits an ideal $I=\langle X_2,X_3\rangle$.  We also have $X_2\wedge X_3\neq0$ and the elements of $V$ act on $I$ as traceless operators. So, $V$ satisfies the conditions given in theorem \ref{Con2} and there exists a non-zero Poisson bivector $\Lambda:=X_2\wedge X_3 \in V\wedge V$ given by (\ref{BP}) turning the elements of $V$ into Hamiltonian vector fields. Thus, Bernoulli differential equations (\ref{PlanBer}) with $a^R_1(t)=0$ are LH systems and a compatible symplectic structure can be derived out of $V$.

Note that the Lie algebra structure of Vessiot--Guldberg Lie algebras does not characterize by itself the existence of a compatible symplectic structure. This is evident from the GKO classification~\cite{GKP92,BBHLS}, where isomorphic Vessiot--Guldberg Lie algebras admit different families of compatible symplectic structures depending on whether they are diffeomorphic or not.


\sect{Determination of $\mathfrak{sl}(2)$-Lie systems}

As already commented, $\mathfrak{sl}(2)$-Lie systems on the plane can belong to one of  the  four non-diffeomorphic classes P$_2$, I$_3$, I$_4$ and I$_5$. 
 This is related to  the fact that the Lie group actions induced by integrating such Lie algebras are not equivariant under a diffeomorphism on $\mathbb{R}^2$. 

We now provide a method to determine to which class of the GKO classification belongs a Lie algebra $V$ of planar vector fields isomorphic to $\mathfrak{sl}(2)$. When the vector fields of $V$ additionally generate a distribution of rank one, $V$ must be diffeomorphic to I$_3$ since this is the only Lie algebra, among I$_3$, I$_4$, P$_2$ and I$_5$, satisfying this property (cf. \cite{GKP92}). The problem to be solved is therefore to classify $V$ when its vector fields generate a distribution of rank two.

Although 
in~\cite{BBHLS} several diffeomorphisms among some $\mathfrak{sl}(2)$-LH systems on the plane, and their corresponding Vessiot--Guldberg Lie algebras, were explicitly determined, we here provide a new  easily verifiable criterium to ensure to which specific class a planar Vessiot--Guldberg Lie algebra isomorphic to $\mathfrak{sl}(2)$ is diffeomorphic to. This is done with no need of deriving the specific diffeomorphism.

 \begin{lemma}\label{lem1} Let $V$ be a Vessiot--Guldberg Lie algebra isomorphic to $\mathfrak{sl}(2)$. We define $S_2(V)$ to  be the space of $2$-contravariant tensor fields spanned by the linear combinations of the tensor fields in the form  $X\otimes Y+Y\otimes X$, with $X,Y\in V$, and we write $S_2(\mathfrak{sl}(2))$ for the space of polynomials of second order in $S(\mathfrak{sl}(2))$. Choose basis $\{v_1,v_2,v_3\}$ and $\{X_1,X_2,X_3\}$ of $\mathfrak{sl}(2)$ and  $V$, respectively, with the same structure constants. This  gives rise to an isomorphism $T:\mathfrak{sl}(2)\simeq V$ and a morphism of $\mathfrak{sl}(2)$-modules
$$
\Psi:S_2( \mathfrak{sl}(2) )\ni \sum_{1=i\le j}^3c_{ij}v_iv_j\ \mapsto\  \sum_{1=i\le j}^3c_{ij}(X_i\otimes X_j+X_j\otimes X_i)\in S_2(V),
$$
i.e.,~$\Psi$ is linear and $\Psi\left(\{v,P\}_{S( \mathfrak{sl}(2))}\right)=\mathcal{L}_{T(v)}\Psi(P)$ for every $v\in \mathfrak{sl}(2)$ and $P\in S_2(\mathfrak{sl}(2))$. If $V$ is diffeomorphic to either {\rm P}$_2$, {\rm I}$_4$ or {\rm I}$_5$, then $\Psi$ is an isomorphism. If $V$ is diffeomorphic to {\rm I}$_3$, then $\Psi$ is not an isomorphism.

\end{lemma}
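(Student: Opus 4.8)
The plan is to reduce the statement to a purely algebraic/representation-theoretic computation, exploiting that $\Psi$ is, by construction, a morphism of $\mathfrak{sl}(2)$-modules between two $6$-dimensional spaces. Since $\Psi$ is always surjective onto $S_2(V)$ (its image is spanned by all symmetrized products $X_i\otimes X_j+X_j\otimes X_i$, which span $S_2(V)$ by definition), injectivity is equivalent to bijectivity is equivalent to $\dim S_2(V)=6$. So the entire lemma comes down to computing the dimension of $S_2(V)\subset \mathfrak X^2(\mathbb R^2)\oplus \mathrm{Sym}^2(\mathfrak X^1(\mathbb R^2))$ — more precisely, of the space spanned by the six symmetric $2$-contravariant tensor fields $X_i\odot X_j$ — for a representative of each class. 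The claim is then: this dimension is $6$ for $\mathrm P_2,\mathrm I_4,\mathrm I_5$ and strictly less for $\mathrm I_3$.

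First I would treat $\mathrm I_3$. Here the vector fields of $V$ span a rank-one distribution, so any two of them are proportional at each point; writing a local generator $Z$ of the distribution, every $X_i = f_i Z$ and hence $X_i\odot X_j = f_if_j\, Z\otimes Z$. Thus every element of $S_2(V)$ is a multiple of $Z\otimes Z$ by a function, but the functions that occur are constrained: $S_2(V)$ is the span of the six functions $f_if_j$ against the single tensor $Z\otimes Z$. Using the explicit basis of $\mathrm I_3$ from the GKO list one checks the $f_if_j$ span a space of dimension at most $3<6$ (indeed the $f_i$ themselves span a $2$- or $3$-dimensional space of functions, and their pairwise products a space of dimension $\le 6$ but in fact smaller because of the rank-one degeneracy — concretely, $\mathrm I_3$ can be realized with $f_i\in\langle 1,x,x^2\rangle$, giving $f_if_j\in\langle 1,x,x^2,x^3,x^4\rangle$, so a priori dimension $\le 5$, and one verifies directly that the actual span is $5$-dimensional, hence $\Psi$ has nontrivial kernel). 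So $\Psi$ is not an isomorphism for $\mathrm I_3$.

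For $\mathrm P_2,\mathrm I_4,\mathrm I_5$ the vector fields span a rank-two distribution. I would argue that generic linear independence of the six tensors can be reduced to a finite rank computation: pick a generic point $p\in\mathbb R^2$, and a generic pair of ``test covectors'' $\theta,\theta'\in T_p^*\mathbb R^2$; evaluating the six tensor fields $X_i\odot X_j$ on the pairs $(\theta,\theta),(\theta,\theta'),(\theta',\theta')$ (and possibly varying $p$) gives a matrix whose rank is $6$ precisely when the six are linearly independent as tensor fields. Since two vectors $X_1(p),X_2(p)$ already span $T_p\mathbb R^2$ (rank-two), the six symmetric products $X_i(p)\odot X_j(p)$ live in the $3$-dimensional $\mathrm{Sym}^2(T_p\mathbb R^2)$ — so at a single point they cannot be independent, and one genuinely needs the tensor fields, i.e.\ the dependence on $p$. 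The cleanest route is to just write out, for each of the three classes using the Table~\ref{table1} bases, the six tensor fields in the coordinate frame $\{\partial_x\odot\partial_x,\partial_x\odot\partial_y,\partial_y\odot\partial_y\}$ with function coefficients, and check that the resulting $6\times 3$ coefficient matrix (entries in $C^\infty(\mathbb R^2)$) has rank $6$ over the field of rational functions; this is an explicit, finite verification for each class.

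The main obstacle I anticipate is not conceptual but the bookkeeping in the rank-two cases: one must exhibit, for each of $\mathrm P_2$, $\mathrm I_4$, $\mathrm I_5$, that no nontrivial $\mathbb R$-linear combination $\sum c_{ij}X_i\odot X_j$ vanishes identically, which amounts to showing a $6\times 3$ matrix of polynomials has no $\mathbb R$-linear relation among its rows — equivalently that the six polynomial vectors are linearly independent over $\mathbb R$. This is a routine but slightly tedious elimination; the conceptual content (surjectivity always holds, so only dimension count matters; rank-one forces degeneracy in $\mathrm I_3$; rank-two plus genuine $p$-dependence gives the full $6$ elsewhere) is what I would foreground, relegating the three explicit matrices to the proof body or a remark. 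I would also note in passing that, because $\Psi$ is a module morphism, its kernel and cokernel are $\mathfrak{sl}(2)$-submodules of the $6$-dimensional module $S_2(\mathfrak{sl}(2))\cong \mathrm{Sym}^2(\mathrm{ad})$, which decomposes as a $5$-dimensional irreducible plus a trivial summand (spanned by the Casimir); so for $\mathrm I_3$ the kernel, being a nonzero proper submodule, must be exactly the $1$-dimensional trivial piece — i.e.\ the Casimir element of $\mathfrak{sl}(2)$ maps to the zero symmetric tensor field. This observation both pins down the kernel and foreshadows the ``Casimir tensor field'' invariant announced for Section 4.
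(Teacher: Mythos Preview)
Your proposal is correct and follows essentially the same approach as the paper: both reduce the question to checking linear independence of the six symmetrized tensor fields $X_i\odot X_j$ for each representative class, verifying directly that $\dim S_2(V)=6$ for $\mathrm P_2,\mathrm I_4,\mathrm I_5$ and $\dim S_2(V)=5$ for $\mathrm I_3=\langle\partial_x,x\partial_x,x^2\partial_x\rangle$ via the span $\langle x^\alpha\partial_x\otimes\partial_x:\alpha=0,\dots,4\rangle$. Your additional representation-theoretic remark---that $\ker\Psi$ must be the one-dimensional trivial (Casimir) summand of $S_2(\mathfrak{sl}(2))\cong V_4\oplus V_0$---is a nice sharpening the paper does not state in the lemma but effectively uses, since it explicitly exhibits $\Psi(v_1v_3-v_2^2)=0$ for $\mathrm I_3$ and then builds the Casimir tensor field from exactly this element in the subsequent Lemma~\ref{lem2} and Theorem~\ref{Easy1}.
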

\begin{proof} It is immediate that $\Psi$ is linear by construction. Let us prove that $\Psi$ is a morphism of $\mathfrak{sl}(2)$-modules. Assuming $\{v_i,v_j\}_{S( \mathfrak{sl}(2))}=\sum_{k=1}^3c_{ijk}v_k$, we obtain $[X_i,X_j]=\sum_{k=1}^3c_{ijk}X_k$ and
$$
\begin{aligned}
\Psi\left(\{v_k,v_iv_j\}_{S( \mathfrak{sl}(2)} \right)&=\Psi\left(\{v_k,v_i\}_{S(\mathfrak{sl}(2))}v_j+v_i\{v_k,v_j\}_{S( \mathfrak{sl}(2))} \right)=\Psi\left(\sum_{l=1}^3c_{kil}v_lv_j+\sum_{l=1}^3v_ic_{kjl}v_l\right)
\\&=\sum_{l=1}^3c_{kil}(X_l\otimes X_j+X_j\otimes X_l)+\sum_{l=1}^3c_{kjl}(X_i\otimes X_l+X_l\otimes X_i)
\\&=(\mathcal{L}_{X_k}X_i)\otimes X_j+X_j\otimes(\mathcal{L}_{X_k}X_i)+X_i\otimes (\mathcal{L}_{X_k}X_j)+(\mathcal{L}_{X_k}X_j)\otimes X_i
\\&=\mathcal{L}_{X_k}(X_i\otimes X_j+X_j\otimes X_i)=\mathcal{L}_{T(v_k)}\Psi(v_iv_j),\qquad \forall k,i,j=1,2,3.
\end{aligned}
$$
Using the linearity of $\Psi$ and the bilinearity of the Poisson bracket in $S_2(\mathfrak{sl}(2))$, we obtain that $\Psi$ is an $\mathfrak{sl}(2)$-module morphism. If $V$ is either  P$_2$, I$_4$ or I$_5$ is straightforward to prove the linearly independence over $\mathbb{R}$ of $\Psi(v_1^2),\Psi(v_2^2),\Psi(v_3^2),\Psi(v_1v_2),\Psi(v_1v_3),\Psi(v_2v_3)$. Since $\dim S_2(\mathfrak{sl}(2))=\dim S_2(V)=6$, then $\Psi$ is an isomorphism. This fact does not change under diffeomorphisms and so it applies to any Lie algebra $V$ diffeomorphic to either P$_2$, I$_4$ and I$_5$.

If $V$ is  I$_3=\langle X_1:= \partial_x,X_2:=x\partial_x,X_3:=x^2\partial_x\rangle$, then $S_2(V)$ is spanned by $\{x^\alpha \partial_x\otimes\partial_x:\alpha=0,\ldots,4\}$. In particular,
$$
\Psi( v_1 v_3-v_2^2)=X_1\otimes X_3+X_3\otimes X_1-2X_2\otimes X_2=0
$$ 
and $\Psi$ is not an isomorphism because $\dim\ker \Psi=1$. These facts do not change under diffeomorphisms so they remain true for any $V$ diffeomorphic to I$_3$.
\end{proof}

\begin{lemma}\label{lem2} A Vessiot--Guldberg Lie algebra $V$ diffeomorphic to either {\rm P}$_2$, {\rm I}$_4$ or {\rm I}$_5$  admits an essentially unique, namely up to proportional constant, $R\in S_2(V)\backslash\{0\}$ such that $\mathcal{L}_XR=0$ for every $X\in V$.
\end{lemma}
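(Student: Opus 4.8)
The claim combines Lemma~\ref{lem1} with an explicit computation of $\mathfrak{sl}(2)$-invariants in $S_2(\mathfrak{sl}(2))$. The plan is to translate the problem from $S_2(V)$ to $S_2(\mathfrak{sl}(2))$ via the isomorphism $\Psi$ established in Lemma~\ref{lem1}, reduce the statement to a purely Lie-algebraic fact about the symmetric square of the adjoint representation of $\mathfrak{sl}(2)$, and then pull the result back. The key point is that, for $V$ diffeomorphic to P$_2$, I$_4$ or I$_5$, Lemma~\ref{lem1} tells us that $\Psi$ is an $\mathfrak{sl}(2)$-module isomorphism intertwining the Poisson bracket $\{v,\cdot\}_{S(\mathfrak{sl}(2))}$ with the Lie derivative $\mathcal{L}_{T(v)}$; hence the space $\{R\in S_2(V):\mathcal{L}_XR=0\ \forall X\in V\}$ of $V$-invariant symmetric $2$-tensors is carried isomorphically onto the space of $\mathfrak{sl}(2)$-invariants in $S_2(\mathfrak{sl}(2))$, i.e.~the center of $S_2(\mathfrak{sl}(2))$ under the coadjoint Poisson bracket.

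First I would invoke the classical representation-theoretic fact that, as an $\mathfrak{sl}(2)$-module, $S_2(\mathfrak{sl}(2))$ (the symmetric square of the $3$-dimensional adjoint/spin-$1$ representation) decomposes as the direct sum of the spin-$2$ and spin-$0$ irreducible modules; equivalently, working concretely with a standard basis $\{v_1,v_2,v_3\}$ of $\mathfrak{sl}(2)$ with the usual bracket relations, one checks by a short direct computation that the quadratic Casimir element $\mathcal{C}\in S_2(\mathfrak{sl}(2))$ (the degree-two part $a_{ij}v_iv_j$ built from the Killing form) satisfies $\{v,\mathcal{C}\}_{S(\mathfrak{sl}(2))}=0$ for all $v$, and that any invariant must be a scalar multiple of $\mathcal{C}$ because the only trivial summand in $S_2(\mathfrak{sl}(2))$ is one-dimensional. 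Then I would set $R:=\Psi(\mathcal{C})\in S_2(V)$; since $\Psi$ is a module morphism, $\mathcal{L}_XR=0$ for every $X\in V$, and $R\neq0$ because $\mathcal{C}\neq0$ and $\Psi$ is injective for these three classes. Uniqueness up to a proportionality constant follows since any $V$-invariant $R'\in S_2(V)$ equals $\Psi(\Psi^{-1}(R'))$ with $\Psi^{-1}(R')$ an $\mathfrak{sl}(2)$-invariant in $S_2(\mathfrak{sl}(2))$, hence a scalar multiple of $\mathcal{C}$, whence $R'$ is a scalar multiple of $R$.

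The main obstacle is the verification that the trivial isotypic component of $S_2(\mathfrak{sl}(2))$ is exactly one-dimensional and spanned by the Casimir; this is where the specific Lie algebra matters (the argument genuinely uses that $\mathfrak{sl}(2)$ is simple, so that its Killing form is non-degenerate and the Casimir is non-zero, and that the Clebsch--Gordan decomposition of $\mathbf{3}\otimes_{\mathrm{sym}}\mathbf{3}$ contains the trivial representation with multiplicity one). A clean way to handle this without invoking full representation theory is to pick the basis $\{v_1,v_2,v_3\}$ realizing the structure constants of one of the concrete planar models—say the $\mathfrak{sl}(2)$ of class I$_5$—and simply solve the linear system $\{v_k,\sum_{i\le j}c_{ij}v_iv_j\}_{S(\mathfrak{sl}(2))}=0$ for $k=1,2,3$, which is a small homogeneous linear system in the six unknowns $c_{ij}$ whose solution space one checks to be one-dimensional; since the dimension of the invariant subspace is an intrinsic invariant of the abstract Lie algebra $\mathfrak{sl}(2)$, this suffices. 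Finally I would note, for completeness, that Lemma~\ref{lem1} already records that for I$_3$ the morphism $\Psi$ is not an isomorphism, so the present lemma's hypothesis legitimately excludes that case—indeed there $S_2(V)$ is only five-dimensional and a separate (larger) invariant space appears, which is consistent with the hypothesis being stated only for P$_2$, I$_4$, I$_5$.
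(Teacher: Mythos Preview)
Your proof is correct and follows essentially the same route as the paper: both transfer the problem to $S_2(\mathfrak{sl}(2))$ via the $\mathfrak{sl}(2)$-module isomorphism $\Psi$ of Lemma~\ref{lem1} and invoke the essential uniqueness of the quadratic Casimir---the paper passes through $U(\mathfrak{sl}(2))$ via the symmetrizer map $\lambda$, whereas you argue directly from the Clebsch--Gordan decomposition $S_2(\mathbf{3})\simeq\mathbf{5}\oplus\mathbf{1}$ (or by a direct linear-system computation), but this is a cosmetic difference. One small correction to your closing parenthetical: for I$_3$ the kernel of $\Psi$ is precisely the line spanned by the Casimir $v_1v_3-v_2^2$, so the five-dimensional image $S_2(V)$ is the irreducible spin-$2$ module and carries \emph{no} nonzero invariant, rather than a larger invariant space.
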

\begin{proof} It is well known that there exists only one quadratic Casimir (up to a proportional constant) in $U(  \mathfrak{sl}(2) )$. The $ \mathfrak{sl}(2)$-module isomorphism $\lambda:S(\mathfrak{sl}(2))\simeq U(\mathfrak{sl}(2))$ maps the Casimir $\mathcal{C}$ into an element of $S_2(\mathfrak{sl}(2))$. Using the $\mathfrak{sl}(2)$-module isomorphism $S_2(V)\simeq S_2(  \mathfrak{sl}(2))$, we obtain that there exists essentially one element of $S_2(V)\backslash\{0\}$ whose Lie brackets with the elements of $V$ vanish. Thus, there exists essentially a unique, i.e., up to proportional constant, $R\in S_2(V)\backslash\{0\}$ such that $\mathcal{L}_XR=0$ for every $X\in V$.  \end{proof}

\begin{definition} Given a finite-dimensional real Lie algebra of vector fields $V$, we call {\it Casimir tensor field} of $V$ an element $R\in S_2(V)$ such that $\mathcal{L}_XR=0$ for every $X\in V$.
\end{definition}

\begin{theorem}\label{Easy1} Let $V$ be a Vessiot--Guldberg Lie algebra diffeomorphic to either {\rm P}$_2$, {\rm I}$_4$ or {\rm I}$_5$. Let $R$ be a non-zero Casimir tensor field for $V$. Writing $R=\sum_{\alpha,\beta=1}^2R^{\alpha\beta}\partial_\alpha\otimes \partial_\beta$ with $\partial_1=\partial_x$ and $\partial_2=\partial_y$, we define
$$
\iii(V):= {\rm sign} \left({\rm det}(R^{\alpha\beta}(x)) \right),\qquad \forall x\in {\rm dom}\, V,
$$
where ${\rm dom}\,V$ is the set of generic points of $V$.  If $\iii(V)>0$, then $V$ is locally diffeomorphic to {\rm P}$_2$; when $\iii(V)<0$, then $V$ is locally diffeomorphic to {\rm I}$_4$; if $\iii(V)=0$, then $V$ is locally diffeomorphic to {\rm I}$_5$.
\end{theorem}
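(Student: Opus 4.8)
The plan is to leverage Lemma~\ref{lem2}, which already guarantees that a non-zero Casimir tensor field $R\in S_2(V)\setminus\{0\}$ exists and is unique up to a multiplicative constant when $V$ is diffeomorphic to one of P$_2$, I$_4$, I$_5$. The quantity $\iii(V)$ is the sign of $\det(R^{\alpha\beta})$ evaluated at a generic point; since this sign function on $\operatorname{dom}V$ is continuous and integer-valued, it is locally constant, and rescaling $R$ by a nonzero real constant does not change the sign of its determinant (in dimension two, $\det(cR^{\alpha\beta})=c^2\det(R^{\alpha\beta})$), so $\iii(V)$ is a well-defined local invariant of $V$. The key observation is that $\iii(V)$ is invariant under diffeomorphisms: if $\phi$ is a diffeomorphism, then $\phi_*R$ is the Casimir tensor field of $\phi_*V$, and the matrix $(R^{\alpha\beta})$ transforms by congruence $R\mapsto J R J^{\top}$ with $J$ the (invertible) Jacobian of $\phi$, so $\det(R^{\alpha\beta})\mapsto (\det J)^2\det(R^{\alpha\beta})$ and the sign is preserved. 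Hence it suffices to compute $\iii$ on the three model algebras from Table~\ref{table1}.

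First I would exhibit an explicit Casimir tensor field for each of P$_2$, I$_4$, I$_5$ and read off the sign of the determinant. For each class the construction is the same: take the $\mathfrak{sl}(2)$ quadratic Casimir $\mathcal{C}$, push it through the symmetrizer isomorphism $\lambda:S(\mathfrak{sl}(2))\simeq U(\mathfrak{sl}(2))$ and the module isomorphism $\Psi$ of Lemma~\ref{lem1} to obtain $R\in S_2(V)$ as a symmetric combination of the $X_i\otimes X_j$. Concretely, for P$_2$ with $X_1=\partial_x$, $X_2=x\partial_x+y\partial_y$, $X_3=(x^2-y^2)\partial_x+2xy\partial_y$, a short computation gives $R$ proportional to $X_1\otimes X_3 + X_3\otimes X_1 - 2X_2\otimes X_2$, whose matrix $(R^{\alpha\beta})$ works out to be (a nonzero multiple of) $y^2\,\mathrm{Id}$ or a positive-definite form, so $\det(R^{\alpha\beta})>0$ at generic points: $\iii(\mathrm{P}_2)>0$. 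For I$_4$ with $X_1=\partial_x+\partial_y$, $X_2=x\partial_x+y\partial_y$, $X_3=x^2\partial_x+y^2\partial_y$, the analogous combination yields a matrix proportional to $(x-y)^2\begin{pmatrix}0&1\\1&0\end{pmatrix}$, whose determinant is negative: $\iii(\mathrm{I}_4)<0$. For I$_5$ with $X_1=\partial_x$, $X_2=x\partial_x+\tfrac12 y\partial_y$, $X_3=x^2\partial_x+xy\partial_y$, one finds $R$ proportional to a rank-one tensor $\propto y\,\partial_y\otimes\partial_y$ up to symmetrization, so $\det(R^{\alpha\beta})=0$ identically: $\iii(\mathrm{I}_5)=0$.

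The logical structure of the proof is then: given $V$ diffeomorphic to P$_2$, I$_4$ or I$_5$, Lemma~\ref{lem2} supplies a unique (up to scale) Casimir tensor field $R$; the quantity $\iii(V)$ is well-defined (independent of the scaling and of the choice of generic point, by the locally-constant-sign argument); $\iii$ is a diffeomorphism invariant by the congruence-transformation argument; and it takes the three distinct values $+1,-1,0$ on the three model algebras by the explicit computations above. Since these three values are pairwise distinct, knowing $\iii(V)$ pins down the class exactly, which is the assertion. I expect the main obstacle to be purely bookkeeping: verifying that the symmetrized Casimir combination really is annihilated by all the $\mathcal{L}_{X_i}$ on each model (this is guaranteed abstractly by Lemma~\ref{lem1} and Lemma~\ref{lem2}, so one can in fact skip it) and then cleanly extracting the $2\times2$ coefficient matrix $(R^{\alpha\beta})$ in the $\partial_x,\partial_y$ frame and computing its determinant without sign errors. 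A subtlety worth a remark is that one must check the Casimir combination is not identically zero in the P$_2$ and I$_4$ cases (so that $R\neq0$ as required) — this is immediate from the explicit matrices — whereas in the I$_5$ case $R$ is nonzero as a tensor even though $\det(R^{\alpha\beta})$ vanishes, which is exactly what makes $\iii=0$ the correct discriminant for that class.
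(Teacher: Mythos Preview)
Your proposal is correct and follows essentially the same route as the paper: exhibit the Casimir tensor field $R=\tfrac12(X_1\otimes X_3+X_3\otimes X_1)-X_2\otimes X_2$ on each of the three model algebras, read off the sign of $\det(R^{\alpha\beta})$, and then argue that this sign is well defined (independent of the point and of the scaling of $R$) and invariant under diffeomorphisms via the congruence transformation $R\mapsto JRJ^\top$. One small slip: for I$_5$ the explicit tensor is $-\tfrac{y^2}{4}\,\partial_y\otimes\partial_y$ rather than $\propto y\,\partial_y\otimes\partial_y$, but this does not affect your conclusion that the determinant vanishes.
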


\begin{proof}
For the Lie algebras of vector fields P$_2$, I$_4$ and I$_5$ and using their corresponding bases $\{X_1,X_2,X_3\}$ detailed in table~\ref{table1}, we get that 
\begin{equation}\label{TenCas}
R=\frac12(X_1\otimes X_3+X_3\otimes X_1)-X_2\otimes X_2
\end{equation}
satisfies that $\mathcal{L}_{X_i}R=0$ for $i=1,2,3$ and $R\neq 0$. In view of lemma \ref{lem2}, this is essentially the only element of $S_2(V)\backslash\{0\}$ satisfying this property. It is a non-zero Casimir tensor field. After a straightforward calculation and using that  ${\rm dom\, P}_2=\mathbb{R}^2_{y\neq 0}$, ${\rm dom\, I}_4=\mathbb{R}^2_{x\neq y}$ and ${\rm dom\, I}_5=\mathbb{R}^2_{y\neq 0}$    (see \cite{BBHLS}), we obtain
$$
\begin{gathered}
R({\rm P}_2)=-y^2[\partial_x\otimes\partial_x+\partial_y\otimes\partial_y]\ \Rightarrow\  \iii({\rm P}_2)=1,\qquad
R({\rm I}_5)=-\frac{y^2}4[\partial_y\otimes\partial_y]\ \Rightarrow\  \iii({\rm I}_5)=0,\\
R({\rm I}_4)=\frac12(x-y)^2[\partial_x\otimes\partial_y+\partial_y\otimes \partial_x]\ \Rightarrow\  \iii({\rm I}_4)=-1.
\end{gathered}
$$
Observe that the value of $\iii(V)$ is independent of the point  $x\in {\rm dom}\, V$ where we evaluate $\det (R_{\alpha\beta}(x))$ for the Lie algebras P$_2$, I$_4$ and I$_5$. Hence, $\iii(V)$ is well defined. 
Moreover, since $\iii(V)$ depends on the $2\times 2$  matrix of coefficients of $R$, which is unique up to a non-zero multiplicative constant, we have that the value of $\iii(V)$ does not depend on the non-null chosen $R$.

 Let us now prove that $\iii(V)$ is invariant under diffeomorphisms and, therefore, two diffeomorphic Lie algebras have the same $\iii(V)$. Given a local diffeomorphism $\phi:\mathbb{R}^2\rightarrow \mathbb{R}^2$, we can write that $\phi_*R=\sum_{\alpha,\beta=1}^2\bar g^{\alpha\beta}\partial_\alpha\otimes \partial_\beta=\sum_{\mu,\nu=1}^2g^{\mu\nu}A_ \mu^\alpha A_\nu^\beta\partial_\alpha\otimes \partial_\beta$, where $A=(A^\lambda_\sigma)$ is the Jacobian matrix of the diffeomorphism in the initial and final basis $\{\partial_x,\partial_y\}$. In consequence, $\det \bar g^{\alpha\beta}=\det A^2\det g^{\alpha\beta}$ and $\iii(V)$ is invariant under diffeomorphisms. 
 
If $V$ is diffeomorphic to one of the Lie algebras P$_2$,  I$_4$  or I$_5$, then the element $R$ for $V$ is essentially unique and it must be mapped via a diffeomorphism onto an $R\neq 0$ corresponding to  P$_2$, I$_4$ or I$_5$. Since $\iii(V)$ is invariant under diffeomorphisms, we obtain that   $\iii(V)$ is the same as the one for the Lie algebra to which is diffeomorphic to.
\end{proof}

 In the next two sections we apply the above results in the study of planar $\mathfrak{sl}(2)$-LH systems and their equivalence via diffeomorphisms.


\sect{Cayley--Klein  Riccati equations}

Let us consider the so-called complex Riccati equations, namely
\begin{equation}\label{Riccati3}
\frac{{\rm d} z}{{\rm d} t}=a_0(t)+a_1(t)z+a_2(t)z^2,\qquad z\in\mathbb{C},
\end{equation}
with arbitrary $t$-dependent real coefficients $a_0(t),a_1(t)$ and $a_2(t)$. These equations have several applications from a mathematical and physical point of view~\cite{Or12,Ju97,FMR10}. In fact, these  can be mapped into a particular type of planar Riccati equation \cite{Eg07,Wi08} and they also appear in the study of dissipative and irreversible systems \cite{Sc12}. We hereafter propose a generalization of  Riccati equations over the so-called split-complex and dual-Study numbers.

Consider the real plane with coordinates $\{u,v\}$ and an `additional' unit $\iota$ such that
 $\ota^2 \in\{ -1,+1,0\}.$  
Next  we define
  $z:=u+\ota v$ for $(u,v)\in\mathbb{R}^2$.
Assuming that $\iota$ commutes with real numbers, we can write
 $z^2=u^2+\ota^2 v^2 +2 \ota u v.
$ In this way, the number $z$ 
 comprises {\em three} possibilities \cite{Yaglom}:

 \begin{itemize}

 \item If $\ota^2=-1$, we are dealing with the usual {\em complex numbers} $\ota:= {\rm i}$ and $z\in\mathbb C$.

  \item Setting $\ota^2=+1$ we obtain the  so-called {\em split-complex numbers} $z\in\mathbb C^\prime$. The additional unit  is usually known as  the {\em double} or {\em Clifford unit}. 
  
  \item Meanwhile, if we assume $\ota^2=0$, then  $z$ is known as a {\em dual} or {\em Study number}, $z\in\mathbb{D}$.
 
 \end{itemize}

 With these ingredients we    shall    call the {\em Cayley--Klein  Riccati equation}~\cite{Liesym}  the generalization of the complex Riccati equation (\ref{Riccati3}) to   $z:= u+\ota v\in \{ \mathbb C,\mathbb C^\prime, \mathbb D\}$ which, for real $t$-dependent coefficients $a_0(t),a_1(t),a_2(t)$,  gives rise to the system
\begin{equation}\label{CKRE2}
\frac{{\rm d} u}{{\rm d} t}=a_0(t)+a_1(t)u+a_2(t)(u^2+\ota^2 v^2),\qquad \frac{{\rm d} v}{{\rm d} t}=a_1(t)v+a_2(t)2uv.
\end{equation}

Let us prove that (\ref{CKRE2}) is a Lie system for every possible value of $\iota$. The system   (\ref{CKRE2}) is associated with the   $t$-dependent vector field  given by
\be
 X=a_0(t)X_1+a_1(t)X_2+a_2(t)X_3,
 \label{qa}
\ee
 where
\begin{equation}
X_1= \frac{\partial}{\partial u},\qquad X_2= u\frac{\partial}{\partial u}+v\frac{\partial}{\partial v} ,\qquad X_3= (u^2+\iota^2v^2)\frac{\partial}{\partial u}+2uv\frac{\partial}{\partial v} ,
\label{vectRiccati2}
\end{equation}
span a Vessiot--Guldberg Lie algebra $V_\iota\simeq \mathfrak{sl}(2)$ with commutation relations given by
\begin{equation}\label{aa}
[X_1,X_2]=X_1,\qquad [X_1,X_3]=2X_2,\qquad [X_2,X_3]=X_3 .
\end{equation}
Consequently, $X$ takes values in a finite-dimensional Lie algebra $V_\iota$ of vector fields and (\ref{CKRE2}) becomes a Lie system. Let us prove that $X$ is also a LH system.

\begin{proposition}\label{CK} The Cayley--Klein Riccati equation with $t$-dependent real coefficients (\ref{CKRE2}) is a LH system for each value of $\iota$. It admits a Vessiot--Guldberg Lie algebra, $V_\iota$, which is locally diffeomorphic around points of ${\rm dom}\,V_\iota$ to {\rm P}$_2$ when $\iota^2=-1$,   to {\rm I}$_4$ when $\iota^2=1$, and  to {\rm I}$_5$ when $\iota^2=0$. 
\end{proposition}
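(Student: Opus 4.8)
The plan is to apply Theorem~\ref{Easy1} to the Vessiot--Guldberg Lie algebra $V_\iota=\langle X_1,X_2,X_3\rangle$ with $X_1,X_2,X_3$ as in (\ref{vectRiccati2}). First I would observe that (\ref{aa}) shows $V_\iota\simeq\mathfrak{sl}(2)$ with exactly the same structure constants as the standard bases used for P$_2$, I$_4$ and I$_5$ in table~\ref{table1}; hence by Lemma~\ref{lem2} the Casimir tensor field of $V_\iota$ is essentially unique and, by the same formula (\ref{TenCas}) appearing in the proof of Theorem~\ref{Easy1}, it is
$$
R=\tfrac12(X_1\otimes X_3+X_3\otimes X_1)-X_2\otimes X_2.
$$
(One should double-check that the identification $v_1\leftrightarrow X_1$, $v_2\leftrightarrow X_2$, $v_3\leftrightarrow X_3$ matches the bracket conventions, which it does since (\ref{aa}) is literally the P$_2$/I$_4$/I$_5$ bracket table.)

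Next I would carry out the substitution, expressing $R$ in the coordinate frame $\{\partial_u,\partial_v\}$ using (\ref{vectRiccati2}). A direct computation gives the coefficient matrix $(R^{\alpha\beta})$, and I expect it to come out proportional to
$$
R=-v^2\bigl(\partial_u\otimes\partial_u+\iota^2\,\partial_v\otimes\partial_v\bigr)
$$
up to lower-degree terms that should cancel; in any case $\det(R^{\alpha\beta})$ will be proportional to $\iota^2$ times a nowhere-vanishing function on ${\rm dom}\,V_\iota$. Then $\mathcal{I}(V_\iota)={\rm sign}(\iota^2)$: it equals $+1$ when $\iota^2=1$, $0$ when $\iota^2=0$, and $-1$ when $\iota^2=-1$. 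Comparing with the trichotomy in Theorem~\ref{Easy1} ($\mathcal{I}>0\Rightarrow$ P$_2$, $\mathcal{I}<0\Rightarrow$ I$_4$, $\mathcal{I}=0\Rightarrow$ I$_5$) yields: $\iota^2=-1$ gives P$_2$, $\iota^2=1$ gives I$_4$, and $\iota^2=0$ gives I$_5$, which is precisely the claim.

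Finally, for the LH assertion itself: since in each case $V_\iota$ is diffeomorphic to one of P$_2$, I$_4$, I$_5$, and those three classes appear in table~\ref{table1} with compatible symplectic structures, the Cayley--Klein Riccati system (\ref{CKRE2}) is a Lie system (already shown via (\ref{qa})--(\ref{aa})) taking values in a Vessiot--Guldberg Lie algebra of Hamiltonian vector fields, hence a LH system. Alternatively, one can exhibit a compatible bivector directly: when $\iota^2\neq0$ the bivector $\Lambda=X_1\wedge X_2$ (or an appropriate element of $V_\iota\wedge V_\iota$) is nonzero and a short check of $\mathcal{L}_{X_i}\Lambda$ confirms compatibility; when $\iota^2=0$ one uses the I$_5$ form $\dd x\wedge\dd y/y^3$ pulled back. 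The main obstacle I anticipate is purely computational: getting the coefficient matrix of $R$ right in the $\iota^2=0$ degenerate case, where $R$ drops rank, so that one must verify $\det(R^{\alpha\beta})\equiv0$ rather than merely vanishing at special points — but this is immediate once the $\iota^2$ factor is visible in front of $\partial_v\otimes\partial_v$. Care is also needed to confirm that ${\rm dom}\,V_\iota=\mathbb{R}^2_{v\neq0}$ in all three cases, so that $R$ is genuinely nonzero there and $\mathcal{I}(V_\iota)$ is well defined.
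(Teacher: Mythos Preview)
Your overall strategy is exactly the paper's: compute the Casimir tensor field $R$ from (\ref{TenCas}) in the frame $\{\partial_u,\partial_v\}$, read off $\iii(V_\iota)$, and invoke Theorem~\ref{Easy1}. However, your execution contains a computational slip that is then masked by a second, logical slip.

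Carrying out the substitution with (\ref{vectRiccati2}) gives
\[
\tfrac12(X_1\otimes X_3+X_3\otimes X_1)=(u^2+\iota^2v^2)\,\partial_u\otimes\partial_u+uv(\partial_u\otimes\partial_v+\partial_v\otimes\partial_u),
\qquad
X_2\otimes X_2=u^2\partial_u\otimes\partial_u+uv(\partial_u\otimes\partial_v+\partial_v\otimes\partial_u)+v^2\partial_v\otimes\partial_v,
\]
so that
\[
R=\iota^2v^2\,\partial_u\otimes\partial_u-v^2\,\partial_v\otimes\partial_v,
\]
not $-v^2(\partial_u\otimes\partial_u+\iota^2\partial_v\otimes\partial_v)$ as you wrote: the $\iota^2$ sits in front of $\partial_u\otimes\partial_u$, and the two terms have opposite signs. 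Hence $\det(R^{\alpha\beta})=-\iota^2v^4$ and $\iii(V_\iota)=-\,{\rm sign}(\iota^2)$, the opposite of what you stated. With your (incorrect) value $\iii(V_\iota)={\rm sign}(\iota^2)$, Theorem~\ref{Easy1} would actually force $\iota^2=1\Rightarrow{\rm P}_2$ and $\iota^2=-1\Rightarrow{\rm I}_4$; you reached the right final assignment only because you then misapplied the trichotomy. Two sign errors cancelled.

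One further point: before invoking Theorem~\ref{Easy1} you must check that $V_\iota$ is not diffeomorphic to I$_3$, i.e.\ that the vector fields (\ref{vectRiccati2}) span a rank-two distribution on ${\rm dom}\,V_\iota$. You allude to this only at the end; it is a genuine hypothesis of the theorem and should be stated up front (it is immediate since $X_1\wedge X_2=v\,\partial_u\wedge\partial_v\neq0$ for $v\neq0$). The alternative bivector argument you sketch is unnecessary once the classification is in hand, since P$_2$, I$_4$, I$_5$ all appear in table~\ref{table1}.
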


\begin{proof}Since the Vessiot--Guldberg Lie algebra $V_\iota$ for Cayley--Klein Riccati equations is spanned by the vector fields (\ref{vectRiccati2}) and their commutation relations are given by (\ref{aa}), it turns out that a Casimir tensor field for $V_\iota$ is given by (\ref{TenCas}). Substituying (\ref{vectRiccati2}) in (\ref{TenCas}), we obtain that
$$
R=\iota^2v^2\partial_u\otimes\partial_u-v^2\partial_v\otimes\partial_v\ \Rightarrow\  \iii(V_\iota)=-\,{\rm sign}(\iota^2)
$$
 for points in ${\rm dom}\,V_\iota$. The vector fields of (\ref{vectRiccati2}) span a distribution of rank 2, so $V_\iota$ must be diffeomorphic to one of the Lie algebras P$_2$, I$_4,$ I$_5$. From table~\ref{table1} we obtain that all these Lie algebras consist of Hamiltonian vector fields. Thus, all Cayley--Klein Riccati equations with real $t$-dependent coefficients are LH systems.  Finally, we see in view of theorem \ref{Easy1} that $V_\iota$ is locally diffeomorphic to P$_2$ for $\iota^2=-1$,  to I$_4$ for $\iota^2=1$, and   to I$_5$ for $\iota^2=0$. 
\end{proof}

Proposition \ref{CK} allows us to classify the Vessiot--Guldberg Lie algebra of Cayley--Klein equations according to the value of $\iota^2$. 
Notice that complex Riccati equations (\ref{Riccati3}) are just  the    Cayley--Klein Riccati equations for $\iota^2=-1$ and the vector fields (\ref{vectRiccati2})  coincide with the basis of vector fields of P$_2$ given in table~\ref{table1} provided that $\{x:=u,y:=v\}$. This suggests us   to call     split-complex Riccati equations and dual-Study Riccati equations the Cayley--Klein equations for $\iota^2=1$ and $\iota^2=0$, respectively.
Next, we make use of this  result and table~\ref{table1} to determine the associated symplectic structure for these two latter cases.

Consider the case $\iota^2=1$ and define the new variables $\{x,y\}$ given by
\begin{equation}\label{change}\nonumber
\begin{aligned}
&x:=u+v,\qquad y:=u-v,\qquad 
&u=\tfrac 12(x+y),\qquad v=\tfrac 12 (x-y).
\end{aligned}
\end{equation}
In the new coordinate system, the vector fields (\ref{vectRiccati2})  take the form of the basis of I$_4$ given in table~\ref{table1} such that
${\rm dom\, I}_4=\mathbb{R}^2_{x\neq y}= \mathbb{R}^2_{v\neq 0}$.
 Writing the compatible symplectic two-form and the associated  Hamiltonian functions for the basis of I$_4$ given in table~\ref{table1} in the variables $\{u,v\}$, we obtain that $X_1,X_2,X_3$ are Hamiltonian vector fields with Hamiltonian functions $h_1,h_2,h_3$ relative to the symplectic form $\omega$ with
\begin{equation}
\omega=-\frac{{\rm d} u \wedge {\rm d} v}{ 2 v^2}  ,
\qquad  h_1=\frac{1}{2 v} ,\qquad
h_2=  \frac{u}{2 v}  ,\qquad
h_3=\frac{u^2-v^2}{2 v}.
\nonumber
\end{equation}

Assume now $\iota^2=0$ and $v>0$. The case $v<0$ can be studied analogously giving a similar result.
We define new variables $\{x,y\}$ of the form
\begin{equation}\nonumber
\begin{aligned}
&x:=u,\qquad y:=\sqrt {v},\qquad  &u=x ,\qquad v= y^2. 
\end{aligned}
\end{equation}
By writing the vector fields (\ref{vectRiccati2}) in the new variables, we obtain
 the basis of vector fields appearing in the   Lie algebra I$_5\simeq \mathfrak{sl}(2)$ of   table~\ref{table1} with ${\rm dom\, I}_5=\mathbb{R}^2_{y\neq 0} =  \mathbb{R}^2_{v\neq 0}$.  Hence in the   variables $\{u,v\}$, we find  that
  \begin{equation}
\omega=\frac{{\rm d} u \wedge {\rm d} v}{ 2 v^2}  ,\qquad  h_1=-\frac{1}{2 v} ,\qquad
h_2= - \frac{u}{2 v}  ,\qquad
h_3=-\frac{u^2}{2 v}.
\label{dee}\nonumber
\end{equation}
 
We remark that, independently of the value of $\iota^2$,  the Hamiltonian functions $h_1,h_2,h_3$ satisfy (see (\ref{aa}))
\begin{equation*}
\label{sl2Rh}
\{h_1,h_2\}_\omega=-h_1,\qquad \{h_1,h_3\}_\omega=-2h_2,\qquad \{h_2,h_3\}_\omega=-h_3.
\end{equation*}
Hence, $ (\langle h_1,h_2,h_3\rangle,\{\cdot,\cdot\}_\omega)$ is always a LH algebra for the system (\ref{CKRE2})  isomorphic to $\mathfrak{sl}(2)$
and
$$
h=a_0(t)h_1+a_1(t)h_2+a_2(t)h_3
$$
 is a  $t$-dependent Hamiltonian function associated to the $t$-dependent vector field (\ref{qa}) and, therefore, to the system (\ref{CKRE2}).

We also stress that the Cayley--Klein system (\ref{CKRE2})  for $\iota^2\in \{0,1\}$ appears in a method to map diffusion-type equations into a simpler form which can be easily integrated~\cite{SSVG11,SSVG14}.


\sect{Other planar $\mathfrak{sl}(2)$-LH systems}

In this section we present some $\mathfrak{sl}(2)$-LH systems of mathematical and physical  interest, namely,  coupled Riccati, Milne--Pinney (which comprises   the Smorodinsky--Winternitz system and the harmonic oscillator, both  with a $t$-dependent frequency),  second-order Kummer--Schwarz and planar diffusion  equations. 
Furthermore,  we also study, according to table~\ref{table1}, the equivalence among them and the Cayley--Klein Riccati equations introduced in the previous section, that is,   we establish, by applying the results of section 4,  which of all of the above systems are locally diffeomorphic.
To keep notation simple, hereafter we say that a second-order differential equation is a LH system when the first-order system obtained from it by adding a new variable $y:={\rm d}x/{\rm d}t$, is a LH one.


\subsection{Coupled Riccati differential equations}

Consider the system of coupled differential Riccati equations \cite{Mariton}
\begin{equation}
\frac{{\rm d}x}{{\rm d}t}=a_0(t)+a_1(t)x+a_2(t)x^2,\qquad \frac{{\rm d}y}{{\rm d}t}=a_0(t)+a_1(t)y+a_2(t)y^2,
\label{cR}
\end{equation}
which appears as a particular case of systems of Riccati equations studied in~\cite{BCHLS13Ham,CGLS}. This system can be expressed as a $t$-dependent vector field (\ref{qa})
where
\begin{equation}
X_1= \frac{\partial}{\partial x}+ \frac{\partial}{\partial y},  \qquad X_2= x\frac{\partial}{\partial x}+y\frac{\partial}{\partial y} ,\qquad X_3= x^2\frac{\partial}{\partial x}+y^2\frac{\partial}{\partial y} ,
\nonumber
\end{equation}
so that these  vector fields  exactly reproduce those given in   table~\ref{table1}  for the   class I$_4$ which, in turn, means that this system is
locally diffeomorphic to the   split-complex Riccati equation, namely (\ref{CKRE2}) with $\iota^2=1$.


\subsection{Milne--Pinney equations}
The Milne--Pinney equation \cite{LA08,Mi30,Pi50} has the following expression
\begin{equation}\label{mp}
\frac{\dd^2x}{\dd t^2}=-\omega^2(t)x+\frac{c}{x^3},
\end{equation}
where $\omega(t)$ is any $t$-dependent function and $c$ is a real constant. We remark that this system is, in fact,  the one-dimensional Ermakov  system as well as the  Smorodinsky--Winternitz system~\cite{WSUF65}     with a $t$-dependent frequency. The $c$-term can be understood as a  Rosochatius potential (or a centrifugal barrier when $c>0$)   in its Hamiltonian form~\cite{nonlinear}. When $c$ vanishes, the system (\ref{mp}) reduces to the harmonic oscillator with a
$t$-dependent frequency.

Next, by introducing $y:= \dd x/\dd t$, we rewrite  \eqref{mp} as a  system of   first-order  differential equations
\be
\frac{\dd x}{\dd t}=y,\qquad \frac{\dd y}{\dd t}=-\omega^2(t)x+\frac{c}{x^3},
\label{FirstLie}
\ee
which has an  associated  $t$-dependent vector field
$
X=X_3+\omega^2(t)X_1,
$
where
\begin{equation}\label{FirstLieA}  
X_1=-x\frac{\partial}{\partial y},\qquad X_2=\frac 12 \left(y\frac{\partial}{\partial y}-x\frac{\partial}{\partial x}\right),\qquad X_3=y\frac{\partial}{\partial x}+\frac{c}{x^3}\frac{\partial}{\partial y},
\end{equation}
span a   Lie algebra $V^{\rm MP}$ of vector fields isomorphic to $\mathfrak{sl}(2)$ with commutation relations given by (\ref{aa}). It has been proven in~\cite{BBHLS} that the Milne--Pinney equations (\ref{FirstLie}) comprise the three different types of  $\mathfrak{sl}(2)$-LH systems according to the value of the constant $c$ as follows.

\begin{proposition} The system  (\ref{FirstLie})  is a LH system of class    {\rm P}$_2$ for $c>0$; {\rm I}$_4$ for $c<0$; and {\rm I}$_5$ for $c=0$.
\end{proposition}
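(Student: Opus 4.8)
The plan is to apply Theorem~\ref{Easy1} directly: compute the Casimir tensor field for $V^{\rm MP}$ using the universal formula (\ref{TenCas}), evaluate $\iii(V^{\rm MP})$ as a function of $c$, and read off the class. Since the basis $X_1,X_2,X_3$ in (\ref{FirstLieA}) has exactly the $\mathfrak{sl}(2)$ structure constants (\ref{aa}), Lemma~\ref{lem2} guarantees that $R=\tfrac12(X_1\otimes X_3+X_3\otimes X_1)-X_2\otimes X_2$ is (up to scalar) the unique Casimir tensor field, so no independent verification that $\mathcal{L}_{X_i}R=0$ is strictly needed, though it can be checked in one line.

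First I would substitute (\ref{FirstLieA}) into (\ref{TenCas}). Writing $\partial_1=\partial_x$, $\partial_2=\partial_y$, one finds $X_1\otimes X_3 + X_3\otimes X_1$ contributes terms in $\partial_y\otimes\partial_x$, $\partial_x\otimes\partial_y$ (from the $y\partial_x$ piece of $X_3$ paired with $-x\partial_y$ of $X_1$) and in $\partial_y\otimes\partial_y$ (from the $(c/x^3)\partial_y$ piece of $X_3$ paired with $-x\partial_y$ of $X_1$), while $X_2\otimes X_2 = \tfrac14(y\partial_y-x\partial_x)\otimes(y\partial_y-x\partial_x)$ contributes to all four slots. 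Collecting, the coefficient matrix $(R^{\alpha\beta})$ comes out proportional to a $2\times2$ matrix whose determinant is a constant multiple of $x^2\bigl(c - \text{(something)}\bigr)$; the key point is that the sign of $\det(R^{\alpha\beta})$ is controlled purely by the sign of $c$. I expect $\det(R^{\alpha\beta})$ to be a positive multiple of $c\,x^2$ after the dust settles (the $x^2$ being strictly positive on ${\rm dom}\,V^{\rm MP}=\mathbb{R}^2_{x\neq 0}$), so that $\iii(V^{\rm MP})={\rm sign}(c)$.

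Then the conclusion is immediate from Theorem~\ref{Easy1}: $\iii(V^{\rm MP})>0$ iff $c>0$ gives class {\rm P}$_2$; $\iii(V^{\rm MP})<0$ iff $c<0$ gives class {\rm I}$_4$; $\iii(V^{\rm MP})=0$ iff $c=0$ gives class {\rm I}$_5$. One should also note that when $c=0$ the vector fields still span a rank-two distribution (away from $x=0$ the vectors $-x\partial_y$ and $y\partial_x$ are independent), so the degenerate case genuinely falls under {\rm I}$_5$ rather than {\rm I}$_3$, consistent with the discussion preceding Lemma~\ref{lem1}. The only real obstacle is bookkeeping: carefully assembling the four entries of $(R^{\alpha\beta})$ from the symmetrized tensor products without sign errors, and confirming that the $c$-independent part of $\det(R^{\alpha\beta})$ vanishes identically so that the sign is cleanly $\mathrm{sign}(c)$; this is a short but error-prone computation, and it is where I would be most careful.
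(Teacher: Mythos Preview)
Your proposal is correct and follows essentially the same route as the paper: verify that $V^{\rm MP}$ spans a rank-two distribution, compute the Casimir tensor field $R$ from (\ref{TenCas}) using the basis (\ref{FirstLieA}), and read off the sign of $\det(R^{\alpha\beta})$ to apply Theorem~\ref{Easy1}. The paper carries out exactly this computation, obtaining
\[
R=-\tfrac14\Bigl[xy(\partial_x\otimes\partial_y+\partial_y\otimes\partial_x)+x^2\,\partial_x\otimes\partial_x+\bigl(y^2+\tfrac{4c}{x^2}\bigr)\partial_y\otimes\partial_y\Bigr],
\]
for which $\det(R^{\alpha\beta})=c/4$; one small correction to your expectation is that the determinant is the constant $c/4$ rather than a multiple of $c\,x^2$, but this only strengthens your conclusion that $\iii(V^{\rm MP})={\rm sign}(c)$.
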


Since $V^{\rm MP}$ spans a distribution of rank two on the plane and it is isomorphic to $\mathfrak{sl}(2)$, it must be diffeomorphic to P$_2$, I$_4$ or I$_5$. We can therefore recover, straightforwardly, the above proposition as a particular case of theorem \ref{Easy1}. Since the vector fields (\ref{FirstLieA})  satisfy the commutation relations (\ref{aa}), we obtain that (\ref{TenCas}) reads
$$
\begin{aligned}
R =-\frac 14\left[xy(\partial_x\otimes\partial_y+\partial_y\otimes\partial_x)+x^2\partial_x\otimes\partial_x+\left(y^2+\frac{4c}{x^2}\right)\partial_y\otimes\partial_y\right]
\end{aligned}
$$
and $\iii( V^{\rm MP} )=c$. Using this and theorem \ref{Easy1}, we retrieve the same result given in previous proposition. Therefore, like the Cayley--Klein Riccati equations (\ref{CKRE2}), the Milne--Pinney ones include the three possibilities of Vessiot--Guldberg Lie algebras isomorphic to $\mathfrak{sl}(2)$ of Hamiltonian vector fields.


\subsection{Second-order Kummer--Schwarz equation}

This is the second-order differential equation given by
\begin{equation}\label{KS1}\nonumber
 \frac{{\rm d}^2x}{{\rm d}t^2}=\frac{3}{2x}\left(\frac{{\rm d}x}{{\rm d}t}\right)^2-2c\, x^3+2\eta(t)x,
\end{equation}
where $c$ is a real constant and $\eta(t)$ is an arbitrary $t$-dependent function. We define $y:= {\rm d}x/{\rm d}t$
and we obtain a  first-order system which has been studied in~\cite{CGL11}
\begin{equation}\label{KSsys}
\frac{{\rm d}x}{{\rm d}t}=y,\qquad
\frac{{\rm d}y}{{\rm d}t}=\frac{3}{2}\frac{y^2}{x}-2c\, x^3+2\eta(t)x.
\end{equation}
This system has an associated $t$-dependent vector field $X=X_3+\eta(t)X_1,$ where the vector fields
\be
 X_1=2x\frac{\partial}{\partial y},\qquad X_2=x\frac{\partial}{\partial x}+2y\frac{\partial}{\partial y},\qquad X_3=y\frac{\partial}{\partial x}+\left(\frac{3}{2}\frac{y^2}{x}-2c\, x^3\right)\frac{\partial}{\partial y}
\label{KS}
\ee
form a basis of a Lie algebra $V^{\rm KS}$ isomorphic to $\mathfrak{sl}(2)$ with commutation relations (\ref{aa}).

It can be proven that $V^{\rm KS}$   comprises, once more, the three Vessiot--Guldberg Lie algebras of Hamiltonian vector fields isomorphic to $\mathfrak{sl}(2)$ given in  table~\ref{table1}  according to  the value of the parameter $c$~\cite{BBHLS}.

\begin{proposition} The system  (\ref{KSsys}) is a LH system of class  {\rm P}$_2$ for $c>0$; {\rm I}$_4$ for $c<0$; and {\rm I}$_5$ for $c=0$.
\label{propo62}
\end{proposition}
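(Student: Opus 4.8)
The plan is to mimic exactly the argument used for the Milne--Pinney equations in the previous subsection, applying Theorem~\ref{Easy1}. First I would observe that the vector fields $X_1, X_2, X_3$ in \eqref{KS} satisfy the $\mathfrak{sl}(2)$ commutation relations \eqref{aa}, so that $V^{\rm KS}\simeq \mathfrak{sl}(2)$, and that generically these vector fields span a distribution of rank two on the plane (e.g.\ $X_1\wedge X_2 = 2x^2\,\partial_x\wedge\partial_y\neq 0$ for $x\neq 0$). By the discussion opening section~4, this already forces $V^{\rm KS}$ to be locally diffeomorphic to one of P$_2$, I$_4$ or I$_5$, ruling out I$_3$.

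Next I would compute the Casimir tensor field. Since the basis \eqref{KS} has the same structure constants as the standard basis of $\mathfrak{sl}(2)$, Lemma~\ref{lem2} guarantees that the element \eqref{TenCas}, namely $R=\tfrac12(X_1\otimes X_3+X_3\otimes X_1)-X_2\otimes X_2$, is (up to a constant) the unique non-zero Casimir tensor field of $V^{\rm KS}$. Substituting \eqref{KS} into \eqref{TenCas} is a routine but slightly tedious calculation; I expect the result to be of the form
$$
R = \tfrac12 X_1\otimes X_3 + \tfrac12 X_3\otimes X_1 - X_2\otimes X_2
= P^{xx}\,\partial_x\otimes\partial_x + P^{xy}(\partial_x\otimes\partial_y+\partial_y\otimes\partial_x) + P^{yy}\,\partial_y\otimes\partial_y,
$$
where the coefficient matrix has determinant proportional to $c$ (after cancellation of the terms involving $y$ and $x^3$, analogously to the Milne--Pinney case where $\iii(V^{\rm MP})=c$). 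Concretely, the contributions of $X_1$ to $R$ only feed the $\partial_y\otimes\partial_y$ slot, while $X_2$ and $X_3$ produce the mixed and $\partial_x\otimes\partial_x$ terms; collecting everything, $\det(P^{\alpha\beta})$ should reduce to a positive multiple of $c\,x^{k}$ for some even power $k$, so that $\operatorname{sign}\det(P^{\alpha\beta}) = \operatorname{sign}(c)$ at every generic point.

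Finally I would invoke Theorem~\ref{Easy1}: $\iii(V^{\rm KS})=\operatorname{sign}(c)$ gives $V^{\rm KS}$ locally diffeomorphic to P$_2$ when $c>0$, to I$_4$ when $c<0$, and to I$_5$ when $c=0$; combined with the fact (from Table~\ref{table1}) that each of P$_2$, I$_4$, I$_5$ consists of Hamiltonian vector fields, this shows \eqref{KSsys} is a LH system of the stated class. The only real obstacle is the bookkeeping in the explicit substitution of \eqref{KS} into \eqref{TenCas}: one must be careful that the $c$-dependent and $c$-independent pieces combine so that the determinant is a clean monomial whose sign is exactly $\operatorname{sign}(c)$, with no spurious sign changes across $\operatorname{dom}V^{\rm KS}$. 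Since the structure of \eqref{KS} parallels \eqref{FirstLieA} very closely — the $X_3$ of the Kummer--Schwarz system differing from that of Milne--Pinney only by the extra $\tfrac32 y^2/x$ drift term, which lies along $\partial_y$ and hence does not affect the top-degree part of the determinant — I expect the same outcome $\iii = \operatorname{sign}(c)$, and the proof then closes immediately by Theorem~\ref{Easy1}.
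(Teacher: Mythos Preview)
Your proposal is correct and follows essentially the same route as the paper: note that $V^{\rm KS}$ spans a rank-two distribution, compute the Casimir tensor field \eqref{TenCas}, read off the sign of its determinant, and invoke Theorem~\ref{Easy1}. The paper carries out the substitution explicitly, obtaining
\[
R=-x^2\,\partial_x\otimes\partial_x-xy\,(\partial_x\otimes\partial_y+\partial_y\otimes\partial_x)-(y^2+4cx^4)\,\partial_y\otimes\partial_y,
\]
whose determinant is $4cx^6$, so $\iii(V^{\rm KS})={\rm sign}(c)$ on ${\rm dom}\,V^{\rm KS}$ exactly as you anticipated; your sign on $X_1\wedge X_2$ is off by $-1$, but this does not affect the rank argument.
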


This   statement  was obtained in~\cite{BBHLS} by deriving the explicit diffeomorphisms from the Lie algebra spanned by (\ref{KS}) to one of the Lie algebras given in table~\ref{table1}.  In any case, we can retrieve these results more easily by using theorem \ref{Easy1}. Indeed, as the vector fields (\ref{KS}) span a distribution of rank two, the theorem \ref{Easy1} applies. Moreover, these vector fields satisfy the commutation relations (\ref{aa}) and we find that
$$
R=\frac 12(X_1\otimes X_3+X_3\otimes X_1)\!-\!X_2\otimes X_2=\!-x^2\partial_x\otimes\partial_x\!-\!xy(\partial_x\otimes\partial_y\!+\!\partial_y\otimes\partial_x)-(y^2+4cx^4)\partial_y\otimes\partial_y.
$$
Hence, $\iii(V^{\rm KS})=c$ and, in view of theorem \ref{Easy1}, we recover the results given in   proposition~\ref{propo62}.


\subsection{Planar diffusion Riccati system}

A diffusion equation can be transformed into a simpler PDE by solving a system of seven first-order ordinary differential equations (see \cite{SSVG11} and \cite[p.~104]{SSVG14} for details). This system can be easily solved by integrating its projection onto $\mathbb{R}^2$ given by
\begin{equation}\label{diff}
\frac{{\rm d}x}{{\rm d}t}=-b(t)+2c(t)x+4a(t)x^2+a(t)c_0y^4,\qquad \frac{{\rm d}y}{{\rm d}t}=\bigl(c(t)+4a(t)x\bigr)y,
\end{equation}
where $a(t),b(t)$ and $c(t)$ are arbitrary $t$-dependent functions and $c_0\in \{0,1\}$. We call this system {\it planar diffusion Riccati system}, which is related to the $t$-dependent vector field
$$
X=a(t)X_3-b(t)X_1+c(t)X_2,
$$
where
\begin{equation}\label{PlanarDif}
X_1=\frac{\partial}{\partial x},\qquad X_2=2x\frac{\partial}{\partial x}+y\frac{\partial}{\partial y}, \qquad X_3=(4x^2+c_0y^4)\frac{\partial}{\partial x}+4xy\frac{\partial}{\partial y},
\end{equation}
satisfy the commutation relations
\begin{equation}\label{con}
[X_1,X_2]=2X_1,\qquad [X_1,X_3]=4X_2,\qquad [X_2,X_3]=2X_3.
\end{equation}
Consequently, they span a  Vessiot--Guldberg Lie  algebra $V^{\rm PDR}$  isomorphic to $\mathfrak{sl}(2)$. Let us use again theorem \ref{Easy1} to determine to which one of the Lie algebras of table~\ref{table1} is $V^{\rm PDR}$ diffeomorphic to. As the vector fields (\ref{PlanarDif}) satisfy (\ref{con}), the Casimir tensor field $R$  (\ref{TenCas}) turns out  to be
$$
R=c_0y^4\partial_x\otimes\partial_x-y^2\partial_y\otimes\partial_y\ \Rightarrow\  \iii(V^{\rm PDR})=-c_0.
$$
Since the vector fields $X_1,X_2,X_3$ span a distribution of rank two, the theorem \ref{Easy1} applies. In view of the latter, the system (\ref{diff}) for $c_0=1$ is diffeomorphic to I$_4$ and for $c_0=0$ to I$_5$. Indeed, for $c_0=1$ the change of variables
$$
u:=2x+y^2,\qquad v:=2x-y^2,\qquad x=\tfrac 12( u+v),\qquad y=\sqrt{u-v}
$$
maps (\ref{PlanarDif}) into a basis of I$_4$ whose elements are proportional to those ones given table~\ref{table1} after a relabeling of the variables. Writing the symplectic structure and the Hamiltonian functions given in table \ref{table1} in the initial coordinate system $\{x,y\}$, we obtain
$$
\omega=-\frac{{\rm d}x\wedge {\rm d}y}{y^3},\qquad h_1=\frac{1}{2y^2},\qquad h_2=\frac{x}{y^2},\qquad h_3=2\,\frac{x^2}{y^2}-\frac 12\, y^2,
$$
which satisfy
$$
\{h_1,h_2\}_\omega=-2 h_1,\qquad \{h_1,h_3\}_\omega=-4 h_2,\qquad \{h_2,h_3\}_\omega=-2 h_3.
$$

For the case $c_0=0$, we have that the vector fields (\ref{PlanarDif}) form a  basis of I$_5$ (see table~\ref{table1}). Hence, their associated symplectic form and some corresponding Hamiltonian functions can easily be obtained from table~\ref{table1}. The main result of this section can be summarized as follows.

\begin{proposition} The planar diffusion Riccati system  (\ref{diff}) is a LH system of class  {\rm I}$_4$ for $c_0=1$; and {\rm I}$_5$ for $c_0=0$.
\end{proposition}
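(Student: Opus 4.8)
The plan is to invoke Theorem~\ref{Easy1} directly, since the vector fields $X_1,X_2,X_3$ in~(\ref{PlanarDif}) already span a Vessiot--Guldberg Lie algebra $V^{\rm PDR}\simeq\mathfrak{sl}(2)$ with the standard commutation relations~(\ref{con}) up to a rescaling that matches~(\ref{aa}). First I would check the rank hypothesis: away from the locus $y=0$ the matrix of coefficients of $X_1,X_2,X_3$ manifestly has rank two, so the vector fields generate a rank-two distribution on ${\rm dom}\,V^{\rm PDR}$ and $V^{\rm PDR}$ cannot be diffeomorphic to ${\rm I}_3$; hence it must be diffeomorphic to one of ${\rm P}_2$, ${\rm I}_4$, ${\rm I}_5$, and Theorem~\ref{Easy1} applies.

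Next I would compute the Casimir tensor field. Because~(\ref{con}) differs from~(\ref{aa}) only by an overall normalization of the structure constants, the same combination $R=\tfrac12(X_1\otimes X_3+X_3\otimes X_1)-X_2\otimes X_2$ of~(\ref{TenCas}) is (up to a constant) a Casimir tensor field for $V^{\rm PDR}$; this already appears in the excerpt as $R=c_0y^4\,\partial_x\otimes\partial_x - y^2\,\partial_y\otimes\partial_y$, so the determinant of its coefficient matrix is $-c_0y^6$. Evaluating the sign on ${\rm dom}\,V^{\rm PDR}=\mathbb{R}^2_{y\neq 0}$ gives $\iii(V^{\rm PDR})=-c_0$, which is $-1$ for $c_0=1$ and $0$ for $c_0=0$. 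Feeding this into Theorem~\ref{Easy1} yields that $V^{\rm PDR}$ is locally diffeomorphic to ${\rm I}_4$ when $c_0=1$ and to ${\rm I}_5$ when $c_0=0$, and since every class in Table~\ref{table1} consists of Hamiltonian vector fields, the system~(\ref{diff}) is in both cases a LH system of the stated class.

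To make the statement self-contained one should also exhibit (or at least point to) a compatible symplectic structure, which is exactly what the paragraphs preceding the proposition do: for $c_0=1$ the change of variables $u:=2x+y^2$, $v:=2x-y^2$ carries~(\ref{PlanarDif}) to a basis of ${\rm I}_4$, and pulling back the data of Table~\ref{table1} gives $\omega=-{\rm d}x\wedge{\rm d}y/y^3$ together with the explicit $h_1,h_2,h_3$; for $c_0=0$ the vector fields~(\ref{PlanarDif}) already form a basis of ${\rm I}_5$, so the symplectic form and Hamiltonian functions are read off directly. I expect no genuine obstacle here: the only point requiring a little care is verifying that the rescaled structure constants~(\ref{con}) do not spoil the Casimir computation — i.e. confirming that $R$ in~(\ref{TenCas}) remains (proportional to) the Casimir tensor field when the $\mathfrak{sl}(2)$ brackets are normalized as in~(\ref{con}) rather than~(\ref{aa}) — but this is immediate since rescaling all three generators by a common factor rescales $R$ by that factor squared and leaves $\mathcal{L}_{X_i}R=0$ intact, so $\iii(V^{\rm PDR})$ is unaffected.
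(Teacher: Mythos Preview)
Your proposal is correct and follows essentially the same route as the paper: compute the Casimir tensor field $R$ from~(\ref{TenCas}), read off $\iii(V^{\rm PDR})=-c_0$, and invoke Theorem~\ref{Easy1} after noting the rank-two condition. One minor imprecision: the passage from~(\ref{con}) to~(\ref{aa}) is not via a \emph{common} rescaling of all three generators (rather $X_1\mapsto X_1$, $X_2\mapsto X_2/2$, $X_3\mapsto X_3/4$), but your conclusion that~(\ref{TenCas}) still yields a Casimir tensor field is correct, as one verifies directly that $\mathcal{L}_{X_i}R=0$ for the relations~(\ref{con}).
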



\subsection{Equivalence among  planar $\mathfrak{sl}(2)$-LH   systems}

By taking into account the previous results, we are led to the following statement.

\begin{theorem}\label{Main} The $\mathfrak{sl}(2)$-LH systems (\ref{CKRE2}), (\ref{cR}), (\ref{FirstLie}), (\ref{KSsys}) and  (\ref{diff}) are equivalent through local diffeomorphisms whenever they belong to the same class in  table~\ref{table1}, that is,

\begin{itemize}
\item {\rm P}$_2$:  Milne--Pinney  and Kummer--Schwarz   equations for $c>0$ as well as complex Riccati equations with $t$-dependent real coefficients.
\item {\rm I}$_4$:  Milne--Pinney  and Kummer--Schwarz   equations for $c<0$, coupled Riccati equations, split-complex Riccati equations and the planar diffusion Riccati system with $c_0=1$. All of them with $t$-dependent real coefficients.
\item {\rm I}$_5$:  Milne--Pinney  and Kummer--Schwarz   equations for $c=0$ as well as dual-Study  Riccati equations, planar diffusion Riccati systems with $c_0=0$ and the harmonic oscillator with $t$-dependent frequency.
\end{itemize}

\end{theorem}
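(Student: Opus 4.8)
The statement is essentially a corollary of Theorem~\ref{Easy1} together with the case-by-case computations of the present section, so the plan is to assemble those ingredients and invoke transitivity of local-diffeomorphism equivalence. The guiding observation is that the GKO classification exhibits each of the classes ${\rm P}_2$, ${\rm I}_4$ and ${\rm I}_5$ as a single equivalence class of planar Lie algebras of vector fields under local diffeomorphisms; hence, once two of the systems are shown to have Vessiot--Guldberg Lie algebras lying in the same one of these classes, a local diffeomorphism between them exists.

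First I would gather what is already known. For each of \eqref{CKRE2}, \eqref{cR}, \eqref{FirstLie}, \eqref{KSsys} and \eqref{diff}, the associated vector fields span a Vessiot--Guldberg Lie algebra $V$ isomorphic to $\mathfrak{sl}(2)$ (with structure relations \eqref{aa}, respectively \eqref{con}) generating a distribution of rank two on ${\rm dom}\,V$. Since ${\rm I}_3$ is the only one among ${\rm I}_3,{\rm I}_4,{\rm P}_2,{\rm I}_5$ whose generic distribution has rank one, each such $V$ is locally diffeomorphic to exactly one of ${\rm P}_2$, ${\rm I}_4$, ${\rm I}_5$. Evaluating the Casimir tensor field \eqref{TenCas} on the respective bases gives $\mathcal{I}(V_\iota)=-{\rm sign}(\iota^2)$ for the Cayley--Klein Riccati equations, $\mathcal{I}(V^{\rm MP})=\mathcal{I}(V^{\rm KS})=c$ for the Milne--Pinney and Kummer--Schwarz equations, and $\mathcal{I}(V^{\rm PDR})=-c_0$ for the planar diffusion Riccati system, while the coupled Riccati system \eqref{cR} is directly of class ${\rm I}_4$. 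Theorem~\ref{Easy1} then places each system, for each admissible value of the relevant parameter, in exactly the class listed in the statement.

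Next I would carry out the transitivity step. If two of the systems have Vessiot--Guldberg Lie algebras $V$ and $V'$ both locally diffeomorphic, via $\phi$ and $\phi'$, to the same model among ${\rm P}_2$, ${\rm I}_4$, ${\rm I}_5$, then $\psi:=(\phi')^{-1}\circ\phi$, restricted to a suitable open set of generic points, is a local diffeomorphism with $\psi_*V=V'$; consequently any $t$-dependent vector field $X_t=\sum_i b_i(t)X_i$ with values in $V$ is carried to $\psi_*X_t=\sum_i b_i(t)\psi_*X_i$ with values in $V'$, so the corresponding Lie systems are related by the $t$-independent change of variables $\psi$. To upgrade this to an equivalence of LH systems, observe that if $\omega'$ is a compatible symplectic form for $V'$ then $\mathcal{L}_X(\psi^*\omega')=\psi^*(\mathcal{L}_{\psi_*X}\omega')=0$ for every $X\in V$, so $\psi^*\omega'$ is compatible with $V$; since $V$ spans a rank-two distribution its $V$-invariant symplectic forms are unique up to a nonzero multiplicative constant (the quotient of any two of them is a common first integral of $V$, hence constant; see also \cite{BBHLS}), so $\psi$ intertwines the Poisson brackets and the LH algebras up to an irrelevant rescaling.

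I do not expect a genuine obstacle here; the content is bookkeeping. The points deserving attention are that all the diffeomorphisms involved are only local, so one must restrict to an open subset of generic points on which the relevant domains ${\rm dom}\,V$ overlap, and that for the systems whose explicit change of variables was not written out one should confirm that the rank-two hypothesis of Theorem~\ref{Easy1} actually holds on a dense open set. Once these routine verifications are in place, the groupings into ${\rm P}_2$, ${\rm I}_4$ and ${\rm I}_5$ follow at once from the values of $\mathcal{I}$ recorded above.
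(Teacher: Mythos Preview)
Your proposal is correct and follows essentially the same approach as the paper: the theorem is presented there as a direct summary of the preceding case-by-case classifications (Propositions on the Cayley--Klein, coupled Riccati, Milne--Pinney, Kummer--Schwarz and planar diffusion Riccati systems), with the equivalence within each class being immediate from the GKO classification. If anything, you supply more detail than the paper does---the paper gives no explicit proof beyond pointing back to those propositions, whereas you spell out the transitivity argument and the preservation of the symplectic/LH structure under the diffeomorphism.
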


Only within each class,  these  systems  are locally diffeomorphic  and, therefore,  there   exists a
local $t$-independent change of variables mapping  one into another.  For instance, there does not exist any diffeomorphism on $\mathbb{R}^2$ mapping the Milne--Pinney  and Kummer--Schwarz equations with $c\ne 0$  to the harmonic oscillator with a $t$-dependent frequency as the latter  is a LH system of class  I$_5$ and the previous ones do not. Our results also  explain the existence of the known diffeomorphism mapping   Kummer--Schwarz equations to Milne--Pinney equations provided their constant  $c$ shares the same sign \cite{LA08}.


\sect{Two-photon LH systems}

Let us study two different LH systems that belong to the same class  P$_5$: systems related to $t$-dependent quadratic Hamiltonians and second-order Riccati equations in Hamiltonian form. As a consequence, we shall  prove their equivalence through diffeomorphisms.

The elements of the basis $\{X_1,\dots,X_5\}$ of P$_5$ written in table~\ref{table1} satisfy the Lie brackets
\[
\begin{array}{llll}
[X_1,X_2] = 0, &\quad [X_1,X_3] =X_1,&\quad [X_1,X_4] =0,&\quad [X_1,X_5] =X_2,\\[2pt]
[X_2,X_3] =-X_2,&\quad[X_2,X_4] =X_1,&\quad [X_2,X_5] =0,&\quad
[X_3,X_4] =-2X_4, \\[2pt]
[X_3,X_5] =2X_5,&\quad
[X_4,X_5] =-X_3. &\quad   &
\end{array}
 \]
Hence, they span a Lie algebra isomorphic to $\mathfrak{sl}(2 )\ltimes \mathbb{R}^2$, where $\mathbb{R}^2\simeq \langle X_1,X_2 \rangle$ and $\mathfrak{sl}(2 )\simeq\langle X_3,X_4,X_5 \rangle$.  Observe that this Lie algebra satisfies the conditions given by theorem \ref{Con2}, hence  such vector fields are Hamiltonian relative  to the Poisson bivector $\Lambda=X_1\wedge X_2=\partial_x\wedge\partial_y$ or, equivalently, the associated symplectic structure $\omega=\dd x\wedge \dd y$. This retrieves in an algebraic/geometric manner the result obtained in \cite{BBHLS}.

The corresponding Hamiltonian functions for $X_1,\ldots, X_5$ must be enlarged with a central generator $h_0=1$ giving rise to the centrally extended Lie algebra $\overline{\mathfrak{sl}(2 )\ltimes \mathbb{R}^2}$ which is, in fact, isomorphic to the two-photon Lie algebra $\mathfrak{h}_6=\langle h_1,\dots,h_5,h_0 \rangle$~\cite{BBF09,Gilmore}. That is why we shall call these systems {\it two-photon LH systems}. The commutation relations of this Lie algebra read
\be
\begin{array}{llll}
\{h_1,h_2\}_\omega=h_0, &\quad \{h_1,h_3\}_\omega=-h_1,&\quad \{h_1,h_4\}_\omega=0,&\quad \{h_1,h_5\}_\omega=-h_2,\\[2pt]
\{h_2,h_3\}_\omega=h_2,&\quad\{h_2,h_4\}_\omega=-h_1,&\quad \{h_2,h_5\}_\omega=0,&\quad
\{h_3,h_4\}_\omega=2h_4, \\[2pt]
\{h_3,h_5\}_\omega=-2h_5,&\quad
\{h_4,h_5\}_\omega=h_3, &\quad   \{h_0,\cdot\}_\omega=0 . &
\end{array}
\label{twophoton}
\ee
Notice that  $\mathfrak{h}_6\simeq \mathfrak{sl}(2)
\ltimes \mathfrak{h}_3$, where $\mathfrak{h}_3\simeq  \langle h_0,h_1,h_2\rangle$ is the Heisenberg--Weyl Lie algebra and   $\mathfrak{sl}(2 ) \simeq \langle h_3,h_4,h_5\rangle $. Since  $\mathfrak{h}_4\simeq  \langle h_0,h_1,h_2,h_3\rangle$ is the harmonic oscillator algebra (isomorphic to the LH algebra
  $  {\overline {\mathfrak{iso}}}(1,1)$ of the class I$_8$), we have the inclusions
$\mathfrak{h}_3\subset \mathfrak{h}_4\subset \mathfrak{h}_6$.


\subsection{$t$-dependent quadratic Hamiltonians}

We now study $t$-dependent quadratic Hamiltonians
\begin{equation}\label{disharoscsys}
h(t,q,p)=\alpha(t)\, \frac{p^2}{2}+\beta(t)\, \frac{pq}{2}+\gamma(t)\, \frac{q^2}{2}+\delta(t)p+\epsilon(t)q+\phi(t),
\end{equation}
where $\alpha(t),\beta(t),\gamma(t),\delta(t),\epsilon(t),\phi(t)$ are arbitrary real $t$-dependent functions ~\cite{CR03}. As particular cases, (\ref{disharoscsys}) describes certain damped and/or dissipative harmonic oscillators \cite{UY02}, electric charges in monochromatic electric fields \cite{CR03}, etc. 
The corresponding  Hamilton equations read
\begin{align}\label{disharoscsys2}\nonumber
\frac{{\rm d}q}{{\rm d}t}&=\frac{\partial h}{\partial p}=\alpha(t)\, p+\beta(t)\frac{q}{2}\, +\delta(t),\nonumber\\
\frac{{\rm d}p}{{\rm d}t}&=-\frac{\partial h}{\partial q}=-\left( \beta(t)\, \frac{p}{2}+\gamma(t)q+\epsilon(t)\right) .
\end{align}
This system has an associated $t$-dependent vector field
\begin{equation*}
X=\delta(t)X_1-\epsilon(t)X_2+ \frac{\beta(t)}{2}X_3+\alpha(t)X_4-\gamma(t)X_5,
\end{equation*}
where $X_1,\ldots, X_5$ are, up to a trivial change of variables $x:=q$ and $y:=p$, the vector fields of the basis of P$_5$ given in table~\ref{table1}. Hence, their Hamiltonian functions with respect to the symplectic structure $\omega=\dd q\wedge \dd p$ can easily be obtained from table~\ref{table1}.


\subsection{Second-order Riccati equation}

Second-order Riccati equations, which were  recently studied using the theory of Lie systems in~\cite{CLS12}, read
\begin{equation}\label{NLe}
\frac{\dd^2x}{\dd t^2}+\bigl( f_0(t)+f_1(t)x \bigr) \frac{\dd x}{\dd t}+c_0(t)+c_1(t)x+c_2(t)x^2+c_3(t)x^3=0,
\end{equation}
with
$$
f_1(t)=3\sqrt{c_3(t)},\qquad f_0(t)=\frac{c_2(t)}{\sqrt{c_3(t)}}-\frac{1}{2c_3(t)} \,\frac{\dd c_3(t)}{\dd t}, \qquad c_3(t)> 0,
$$
where $c_0(t),c_1(t),c_2(t)$ are arbitrary $t$-dependent functions  and $c_3(t)$ is a non-negative function. 
This differential equation arises by reducing third-order linear differential equations through a dilation symmetry and a $t$-reparametrization \cite{CRS05}.

The key point is that a quite general family of second-order Riccati equations (\ref{NLe}) admits  a   $t$-dependent Hamiltonian (see \cite{CLS12,CRS05}) given by
\begin{equation*}
 h(t,x,p)= -2\sqrt{-p}- p\left(a_0(t)+a_1(t)x+a_2(t)x^2 \right) ,\qquad p<0,
\end{equation*}
where  $a_0(t),a_1(t),a_2(t)$ are certain functions related to the
  $t$-dependent coefficients of (\ref{NLe}). The corresponding
  Hamilton equations are
\begin{equation}
\begin{aligned}\label{Hamil}
\frac{\dd x}{\dd t}&=\frac{\partial h}{\partial p} =\frac{1}{\sqrt{-p}}-a_0(t)-a_1(t)x-a_2(t)x^2,\\
\frac{\dd p}{\dd t}&=-\frac{\partial h}{\partial x}=  p\left(a_1(t)+2a_2(t)x\right),
\end{aligned}
\end{equation}
and the associated $t$-dependent vector field has the expression
\begin{equation}
X=Y_1-a_0(t)Y_2-a_1(t)Y_3-a_2(t)Y_4, \label{F2}\nonumber
\end{equation}
where
\be
Y_1=\frac{1}{\sqrt{-p}} \, \frac{\partial}{\partial x},\quad\quad
Y_2=\frac{\partial}{\partial x},\quad\quad
Y_3=x\frac{\partial}{\partial x}-p\frac{\partial}{\partial p},\quad\quad
Y_4=x^2\frac{\partial}{\partial x}-2xp\frac{\partial}{\partial p}.
\label{uk}
\ee
Another vector field
\be
Y_5=\frac{x}{\sqrt{-p}}\, \frac{\partial}{\partial x}+2\sqrt{-p} \, \frac{\partial}{\partial p}
\label{ul}
\ee
is required in order to close a Lie algebra $V^{\rm SR}$, whose   non-vanishing commutation relations read
\[
\begin{array}{llll}
 \left[Y_1,Y_3\right]=\frac 12Y_1, &\quad [Y_1,Y_4]=Y_5, &\quad
[Y_2,Y_3]=Y_2,&\quad [Y_2,Y_4]=2Y_3,\\[2pt]
\left[Y_2,Y_5\right]=Y_1,
&\quad \left[Y_3,Y_4\right]=Y_4,&\quad [Y_3,Y_5]=\frac 12 Y_5 .
\end{array}
\]
Hence, $V^{\rm SR}\simeq \mathfrak{sl}(2 )\ltimes \mathbb{R}^2$  where  $\mathbb{R}^2\simeq \langle Y_1,Y_5 \rangle$ and $\mathfrak{sl}(2 )\simeq\langle Y_2,Y_3,Y_4 \rangle$.

Next, observe that  $I=\langle Y_1,Y_5\rangle$ is an ideal of $V^{\rm SR}$ such that $I\wedge I\neq \{0\}$ and that  it can be proven that all elements of $V^{\rm SR}$ act on $I$ as traceless operators. Therefore, in view of theorem \ref{Con2}, this Lie algebra consists of Hamiltonian vector fields with respect to the Poisson bivector $\Lambda:=\tfrac 12 Y_1\wedge Y_5=\partial_x\wedge \partial_p$ or, equivalently,  to the symplectic structure $\omega={\rm d}x\wedge {\rm d}p$.

The   Hamiltonian functions corresponding to the vector fields (\ref{uk}) and (\ref{ul})
 turn out to be
\begin{eqnarray*} 
\tilde h_1=-2\sqrt{-p}, \qquad  \tilde h_2=p,\qquad \tilde h_3=xp,\qquad  \tilde h_4=x^2p,\qquad \tilde h_5=-2x\sqrt{-p},
\end{eqnarray*}
which span along with $\tilde h_0=1$ a   Lie algebra of functions  isomorphic to the two-photon Lie algebra $\mathfrak{h}_6$ with non-vanishing Lie brackets   given by
\be
\begin{array}{llll}
  \{\tilde h_1,\tilde h_3\}_\omega =-\frac  12 \tilde h_1 , &\quad \{\tilde h_1,\tilde h_4\}_\omega =-\tilde h_5,& \quad \{\tilde h_1,\tilde h_5\}_\omega =2\tilde h_0, &\quad
\{\tilde h_2,\tilde h_3\}_\omega =-\tilde h_2, \\
  \{\tilde h_2,\tilde h_4\}_\omega =-2\tilde h_3,& \quad \{\tilde h_2,\tilde h_5\}_\omega =-\tilde h_1, & \quad \{\tilde h_3,\tilde h_4\}_\omega =-\tilde h_4, &\quad
\{\tilde h_3,\tilde h_5\}_\omega =-\frac 12 \tilde h_5  .
\end{array}
\label{ComRel2}\nonumber
\ee
 Indeed, it can be seen that the functions
 $$
 \widehat{h}_1=-\frac{1}{\sqrt{2}}\tilde{h}_5,\qquad \widehat{h}_2=\frac{1}{\sqrt{2}}\tilde{h}_1,\qquad \widehat{h}_3=-2\tilde{h}_3,\qquad \widehat{h}_4=-\tilde{h}_4,\qquad \widehat{h}_5=\tilde{h}_2,\qquad \widehat{h}_0=\tilde{h}_0
 $$
close the same commutation relations that the basis of $\mathfrak{h}_6$ given in (\ref{twophoton}).
The main results of this section are then summarized as follows.

\begin{proposition} The Hamilton equations (\ref{disharoscsys2}) and (\ref{Hamil}) for $t$-dependent quadratic Hamiltonians and second-order Riccati equations (\ref{NLe}), respectively, determine LH systems of class {\rm P}$_5$ with LH algebras isomorphic to the two-photon one $\mathfrak{h}_6$. Consequently all of these systems are locally diffeomorphic.
\end{proposition}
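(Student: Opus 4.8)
The plan is to handle the two families separately, exhibit in each case a Vessiot--Guldberg Lie algebra that is a Lie algebra of Hamiltonian planar vector fields isomorphic to $\mathfrak{sl}(2)\ltimes\mathbb{R}^2$, and then read off from Table~\ref{table1} that P$_5$ is the only such class, which places both systems in it; the diffeomorphic equivalence then follows by composing the two identifications with the canonical P$_5$.

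For the $t$-dependent quadratic Hamiltonians there is essentially nothing to prove beyond what was already observed: after the trivial relabelling $x:=q$, $y:=p$ the vector fields $X_1,\dots,X_5$ attached to (\ref{disharoscsys2}) are exactly the basis of P$_5$ in Table~\ref{table1}, so that system is a LH system of class P$_5$ by construction, with symplectic form $\omega=\dd q\wedge\dd p$, Hamiltonian functions read off from Table~\ref{table1}, and LH algebra the central extension $\overline{\mathfrak{sl}(2)\ltimes\mathbb{R}^2}\simeq\mathfrak{h}_6$ with brackets (\ref{twophoton}).

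For the second-order Riccati equation (\ref{NLe}) I would proceed in three steps. First, $Y_1,\dots,Y_5$ of (\ref{uk})--(\ref{ul}) are linearly independent and close the commutation relations listed above, so $V^{\rm SR}\simeq\mathfrak{sl}(2)\ltimes\mathbb{R}^2$ with abelian ideal $I=\langle Y_1,Y_5\rangle$. Second, since $Y_1\wedge Y_5\propto\partial_x\wedge\partial_p\neq 0$ we have $I\wedge I\neq\{0\}$, and a bracket-by-bracket check using those relations shows every element of $V^{\rm SR}$ acts on $I$ by a traceless operator, so Theorem~\ref{Con2} gives that $V^{\rm SR}$ consists of Hamiltonian vector fields relative to $\Lambda=\tfrac12 Y_1\wedge Y_5=\partial_x\wedge\partial_p$, i.e. relative to $\omega=\dd x\wedge\dd p$; solving $\iota_{Y_i}\omega=\dd\tilde h_i$ produces $\tilde h_1,\dots,\tilde h_5$, which together with $\tilde h_0=1$ close, via the basis change $\widehat h_i$ displayed above, precisely the $\mathfrak{h}_6$ relations (\ref{twophoton}). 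Third, because $V^{\rm SR}$ is thus a Lie algebra of Hamiltonian planar vector fields isomorphic to $\mathfrak{sl}(2)\ltimes\mathbb{R}^2$, and P$_5$ is, by Table~\ref{table1}, the unique class among the twelve Hamiltonian ones with this isomorphism type, $V^{\rm SR}$ is locally diffeomorphic around its generic points to P$_5$; hence (\ref{Hamil}) is a LH system of class P$_5$ as well.

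Finally, belonging to the same GKO class means there are local diffeomorphisms of $\mathbb{R}^2$ carrying each Vessiot--Guldberg Lie algebra onto the canonical P$_5$; composing one with the inverse of the other yields a local $t$-independent change of variables identifying the two Vessiot--Guldberg Lie algebras, and since the $t$-dependent vector fields take values in them it identifies the two systems, so they are locally diffeomorphic. The only genuinely non-routine point is that isomorphic Vessiot--Guldberg Lie algebras of Hamiltonian vector fields \emph{need not} be diffeomorphic --- as the three $\mathfrak{sl}(2)$ classes P$_2$, I$_4$, I$_5$ show --- so the whole argument hinges on the uniqueness of P$_5$ in Table~\ref{table1}; the remaining work (the traceless-action verification and the computation of the $\tilde h_i$) is entirely mechanical.
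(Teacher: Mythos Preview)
Your proposal is correct and follows essentially the same route as the paper: the proposition in the paper is a summary of the preceding subsections, which (i) identify the quadratic-Hamiltonian system directly with the canonical P$_5$ basis, (ii) for the second-order Riccati system apply Theorem~\ref{Con2} to the ideal $I=\langle Y_1,Y_5\rangle$ to obtain $\Lambda=\partial_x\wedge\partial_p$ and then exhibit the $\mathfrak{h}_6$ LH algebra via the $\widehat h_i$, and (iii) conclude class P$_5$ from Table~\ref{table1}. Your explicit emphasis on the \emph{uniqueness} of P$_5$ among the Hamiltonian classes isomorphic to $\mathfrak{sl}(2)\ltimes\mathbb{R}^2$---in contrast with the threefold $\mathfrak{sl}(2)$ situation---makes the logical structure clearer than the paper itself, but the argument is the same.
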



\sect{Projective Schr\"odinger equations on $\mathbb{C}\mathbb{P}^1$}

Let us describe a new application of LH systems on the plane. Consider the Schr\"odinger equations on $\mathbb{C}^2$ given by
\begin{equation}\label{Sch}
{\rm i}\frac{\rm d}{{\rm d}t}\left[\begin{array}{c}
\Psi_1\\
\Psi_2
\end{array}
\right]=\left[\begin{array}{cc}
\lambda_1(t)&\beta(t)\\
\bar{\beta}(t)&\lambda_2(t)
\end{array}
\right]\left[\begin{array}{c}
\Psi_1\\
\Psi_2\\
\end{array}
\right],
\end{equation}
where $\lambda_1(t),\lambda_2(t)$ are arbitrary $t$-dependent real functions, $\beta(t)$ is an arbitrary $t$-dependent complex function and we assume $\hbar=1$ for simplicity. 
Let us construct the projection of this $t$-dependent Schr\"odinger equation onto the projective space $\mathbb{C}_\times^2\backslash \mathbb{C}^\times\simeq \mathbb{C}\mathbb{P}^1$, where $\mathbb{C}^\times:= \mathbb{C}-\{0\}$ and $\mathbb{C}^2_\times=\mathbb{C}^2\backslash\{(0,0)\}$. Observe that $(\Psi_1,\Psi_2),({\Phi}_1,{\Phi}_2)\in \mathbb{C}^2_\times$ belong to the same equivalence class of $\mathbb{C}\mathbb{P}^1$, a so-called {\it ray}, if and only if 
$ \Psi_1{\Phi}_2-{\Phi}_1\Psi_2=0.$ Hence, elements $(\Psi_1,\Psi_2)\in U_1:= \mathbb{C}\times\mathbb{C}^\times$ belonging to the same ray give rise to the same complex number $\Psi_1\Psi_2^{-1}\in\mathbb{C}$.
Using this, we can introduce a well-defined local coordinate system $\pi_1:[(\Psi_1,\Psi_2)]\in U_1\subset \mathbb{C}\mathbb{P}^1\mapsto z:= \Psi_1\Psi_2^{-1}\in\mathbb{C}\simeq\mathbb{R}^2$. Similarly, a second coordinate system can be defined on $U_2:=\mathbb{C}^\times\times \mathbb{C}$.  A simple calculation shows that the projection of (\ref{Sch})  under $\pi_1$ becomes
$$
{\rm i}\frac{{\rm d}z}{{\rm d}t}=\beta(t) -\bar \beta(t)z^2+(\lambda_1(t)-\lambda_2(t))z.
$$
Writing $z=x+ {\rm i}y$ and $\beta(t)=\beta_x(t)+{\rm i}\beta_y(t)$, we obtain
\begin{equation}\label{eqf}
\begin{aligned}
\frac{{\rm d}x}{{\rm d}t}&=-\beta_x(t)2xy+\beta_y(t)(x^2-y^2+1)+(\lambda_1(t)-\lambda_2(t))y,\\
\frac{{\rm d}y}{{\rm d}t}&=\beta_x(t)(x^2-y^2-1)+\beta_y(t)2xy-(\lambda_1(t)-\lambda_2(t))x.
\end{aligned}
\end{equation}
This system of differential equations describes the integral curves of the $t$-dependent vector field $X=-\beta_x(t)X_3+\beta_y(t)X_2+(\lambda_1(t)-\lambda_2(t))X_1$, where
$$
X_1=y\frac{\partial}{\partial x}-x\frac{\partial}{\partial y},\quad X_2=(1+x^2-y^2)\frac{\partial}{\partial x}+2xy\frac{\partial}{\partial y},\quad X_3=2xy\frac{\partial}{\partial x}+(1+y^2-x^2)\frac{\partial}{\partial y}.
$$
These are exactly the vector fields appearing in the Lie algebra P$_3$ of table~\ref{table1}. So, they span a Lie algebra of vector fields isomorphic to $\mathfrak{so}(3)\simeq  \mathfrak{su}(2)$ and $X_1,X_2,X_3$ are Hamiltonian with respect to the symplectic form given in table~\ref{table1}, namely
$$
\omega=\frac{{\rm d}x\wedge {\rm d}y}{(1+x^2+y^2)^2}.
$$
As a consequence of our results and \cite[Theorem 4.4]{BBHLS}, this is the only symplectic structure on the projective space, up to a constant multiplicative factor, turning  (\ref{eqf})  into a LH system for arbitrary complex function $\beta(t)$, and real functions $\lambda_1(t)$ and $\lambda_2(t)$.
 

\sect{Planar $\mathfrak{h}_2$-LH systems}

 We now focus on the Lie algebra I$_{14A}\simeq \mathbb{R} \ltimes \mathbb{R}^r$ with $r=1$ of table~\ref{table1}, so with a basis of vector fields
 $X_1= {\partial_x}$, $X_2=  {\eta_1(x)\partial_y} $ with $\eta_1(x)\notin\langle 1\rangle$. If we require that these close a non-Abelian Lie algebra and we choose, with no loss of generality, that $[X_1,X_2]=X_2$, then 
 $\eta_1(x)={\rm e}^x $ up to an irrelevant proportional non-zero constant, that is
\be
 X_1= \frac{\partial}{\partial x},\qquad X_2={\rm e}^x \frac{\partial}{\partial y},\qquad [X_1,X_2]=X_2,
\label{h2}
\ee
and we denote  $\langle X_1,X_2\rangle:={\mathfrak h}_2\simeq \mathbb{R} \ltimes \mathbb{R}\simeq {\rm I}^{r=1}_{14A}$. This is a Vessiot--Guldberg Lie algebra of Hamiltonian vector fields relative to the symplectic form
$\omega=\dd x\wedge \dd y$. Hence, we can choose
\be
h_1= y,\qquad h_2=-{\rm e}^x,\qquad \{h_1,h_2\}_\omega=-h_2.
\label{hh2}
\ee

In the following, we show, as a new result,  that ${\mathfrak h}_2$ underlies  the complex Bernoulli differential equations with   $t$-dependent real coefficients and  we relate this result with other known   ones concerning generalised Buchdahl equations   and $t$-dependent Lotka--Volterra systems. It is remarkable that Cayley--Klein Riccati equations (\ref{CKRE2})  with $a_2(t)=0$ are $\mathfrak{h}_2$-LH systems as well. Finally, we prove that all of these systems belong to the same class  I$_{14A}^{r=1}\simeq {\mathfrak h}_2$.


\subsection{Complex Bernoulli differential equation with $t$-dependent real coefficients}

Let us restrict ourselves to studying the complex differential Bernoulli equation (\ref{complexBernoulli}) of the form
\begin{equation}\label{complexBernoulliReal}
 \frac{\dd z}{\dd t}=a_1(t)z+a_2(t)z^n,\qquad n\notin\{0,1\},
\end{equation}
with $a_1(t),a_2(t)$ being arbitrary {\em real} functions.  It can proved that the planar nonautonomous Bernoulli-like equations appearing in \cite[p.~197]{Muriel} when solving equations in the Abel chains can be mapped into a particular case of the above system by means of an appropriate diffeomorphism.

The equation (\ref{complexBernoulliReal})  can be studied in terms of the $t$-dependent vector field $X=a_1(t)X_0+a_2(t)X_2$,
where we recall that the vector fields $X_0$ and $X_2$ are given in (\ref{VectComBer}) and satisfy 
$$
[X_0,X_2]=(n-1)X_2,
$$
which  is isomorphic to   $\mathfrak{h}_2$. In the GKO classification~\cite{GKP92, BBHLS} there is just one Lie algebra isomorphic to $\mathfrak{h}_2$ whose vector fields are not proportional at each point: I$_{14A}$ with $r=1$. So,  $\langle X_0,X_2\rangle$ is a Lie algebra of Hamiltonian vector fields in view of the results of table~\ref{table1}. 

In order to study this system, we provide the change of variables 
$$
x:=\ln \left|\frac{r^{n-1}}{\sin[\theta(n-1)]}\right|,\qquad y:=-\frac{{\rm cotg}[\theta(n-1)]}{n-1}
$$
mapping the vector fields $\bar X_1=X_0/(n-1)$ and  $X_2$ into (\ref{h2}). The symplectic form and Hamiltonian functions for $  X_0$ and $X_2$ in the initial variables $r,\theta$ can be obtained from  $\omega=\dd x\wedge \dd y$ and  (\ref{hh2})  by inverting the above change of variables.


  \subsection{Generalized Buchdahl equations}

The generalized Buchdahl equations, appearing in the study of relativistic fluids~\cite{Bu64,CSL05} and whose properties have been studied through a Lagrangian approach in \cite{CN10},  are the    second-order differential equations given by
 $$
 \frac{\dd^2 x}{\dd t^2}=a(x)\left(\frac{\dd x}{\dd t}\right)^2+b(t)\frac{\dd x}{\dd t},
 $$
 for arbitrary functions $a(x)$ and $b(t)$. 
If we set   $y:= \dd x/\dd t $, we  find the first-order   system  of differential equations
  \begin{equation}\label{Buchdahl}
 \frac{\dd x}{\dd t}=y,\qquad
 \frac{\dd y}{\dd t}=a(x)y^2+b(t)y ,
 \end{equation}
 which is associated with the $t$-dependent vector field
 $X=X_2+b(t)X_1$,
 where
 \begin{equation}\label{VectBuch}\nonumber
 X_{1}=y\dfrac{\partial}{\partial y},\qquad X_2=y\frac{\partial }{\partial x}+a(x)y^2\frac{\partial}{\partial y}, 
 \end{equation}
satisfy  $ [X_1,X_2]=X_2$.
These vector fields span a Lie algebra diffeomorphic to  I$_{14A}^{r=1}\simeq {\mathfrak h}_2$ and  (\ref{Buchdahl}) becomes a LH system. The corresponding symplectic form and Hamiltonian functions can be found in~\cite{BBHLS}.


 \subsect{$t$-dependent Lotka--Volterra systems}

Finally, consider the particular Lotka--Volterra systems~\cite{Tr96,JHL05}  given by
  \begin{equation}\label{LV}
 \frac{\dd x}{\dd t}=ax-g(t)(x-ay)x,\qquad
 \frac{\dd y}{\dd t}=ay-g(t)(bx-y)y,\qquad a\ne 0,
\end{equation}
 where $g(t)$ determines the variation of the seasons, while  $a$ and $b$ are constants describing the interactions among the species.
System (\ref{LV}) is associated with  the $t$-dependent vector field $
X=X_1+g(t)X_2$
 where
 $$
 X_1=ax\frac{\partial}{\partial x}+ay\frac{\partial}{\partial y},\qquad X_2=-(x-ay)x\frac{\partial}{\partial x}-(bx-y)y\frac{\partial}{\partial y} ,
 $$
satisfy
$$
[X_1,X_2]=a X_2 .
$$
Hence, (\ref{LV}) is a Lie system. Moreover, it has been proven in~\cite{BBHLS} that, except for the case with $a=b=1$, this is also a LH system belonging to the family  I$_{14A}^{r=1}\simeq {\mathfrak h}_2$. The symplectic form and the Hamiltonian functions for $X_1$ and $X_2$ can be found in \cite{BBHLS}.

Hence, we conclude this section with the following statement.

\begin{proposition} The complex Bernoulli differential equation (\ref{complexBernoulliReal}), the generalized Buchdahl equations (\ref{Buchdahl}) and the
$t$-dependent Lotka--Volterra systems (\ref{LV}) (with the exception of $a=b=1$) are   LH systems with a Vessiot--Guldberg Lie algebra diffeomorphic to
 I$_{14A}^{r=1}\simeq {\mathfrak h}_2$   in  table~\ref{table1}. Thus all of these systems	 are locally diffeomorphic.
\end{proposition}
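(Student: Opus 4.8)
The plan is to handle the three families in a uniform way. For each of the systems (\ref{complexBernoulliReal}), (\ref{Buchdahl}) and (\ref{LV}) we have already displayed an associated $t$-dependent vector field taking values in a two-dimensional real Lie algebra $V=\langle X_1,X_2\rangle$ whose single non-trivial commutator reads $[X_1,X_2]=cX_2$ with $c\neq 0$ (namely $c=n-1$, $c=1$ and $c=a$, respectively). First I would rescale $X_1\mapsto X_1/c$ to obtain $[X_1,X_2]=X_2$, so that in every case $V\simeq\mathfrak h_2$; thus each system is a Lie system, and only the classification of $V$ inside the GKO list, together with the compatible Poisson structure, remains to be settled.

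Next I would check that the two generators span a rank-two distribution away from a nowhere-dense subset of $\mathbb R^2$. For (\ref{complexBernoulliReal}) this is read off (\ref{VectComBer}): $X_0$ has no $\partial_\theta$-component, whereas that of $X_2$ is $r^{n-1}\sin[\theta(n-1)]\not\equiv 0$. For (\ref{Buchdahl}) the field $X_1=y\,\partial_y$ has no $\partial_x$-component, whereas the $\partial_x$-component of $X_2$ is $y\not\equiv 0$. For (\ref{LV}) a short computation gives
$$
X_1\wedge X_2=a\,xy\bigl[(1-b)x+(1-a)y\bigr]\,\partial_x\wedge\partial_y,
$$
which is not identically zero exactly when $(a,b)\neq(1,1)$ --- this is precisely the excluded case, in which the two fields are everywhere proportional so that $V$ only generates a rank-one distribution. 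According to the GKO classification, the unique planar Vessiot--Guldberg Lie algebra isomorphic to $\mathfrak h_2$ whose elements are not pointwise proportional is $\mathrm I_{14A}$ with $r=1$, i.e.\ the normal form (\ref{h2}) (cf.\ \cite{GKP92,BBHLS}); indeed, imposing $[X_1,X_2]=X_2$ on $\langle\partial_x,\eta_1(x)\partial_y\rangle$ forces $\eta_1(x)=\mathrm e^{x}$ up to an irrelevant constant. Hence each of the three Vessiot--Guldberg Lie algebras is locally diffeomorphic around generic points to $\mathrm I_{14A}^{r=1}$; the explicit diffeomorphism for (\ref{complexBernoulliReal}) has been written above, and those for (\ref{Buchdahl}) and (\ref{LV}) are recorded in \cite{BBHLS}.

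To finish I would invoke table~\ref{table1}: $\mathrm I_{14A}^{r=1}$ consists of Hamiltonian vector fields with respect to $\omega=\dd x\wedge\dd y$, with Hamiltonian functions (\ref{hh2}). Pulling $\omega$ back through the diffeomorphisms above turns each of the three systems into a LH system whose LH algebra is isomorphic to $\mathfrak h_2$, in agreement with \cite{BBHLS}. Composing these maps then identifies the three Vessiot--Guldberg Lie algebras with one another, and the associated $t$-dependent vector fields are carried into one another under the same diffeomorphisms (with the obvious matching of the $t$-dependent coefficients), whence all three families are locally diffeomorphic. The step I expect to be the only delicate one is the rank-two verification: it is what pins down the class $\mathrm I_{14A}^{r=1}$ among the a priori several planar realizations of $\mathfrak h_2$, and its failure for $a=b=1$ is exactly what explains the exclusion of that Lotka--Volterra system from the statement.
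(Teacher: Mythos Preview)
Your proposal is correct and follows essentially the same route as the paper: in each case the paper exhibits a two-dimensional Vessiot--Guldberg Lie algebra with a single non-trivial bracket $[X_1,X_2]=cX_2$, invokes the GKO classification fact that the only planar realization of $\mathfrak h_2$ with linearly independent generators at generic points is $\mathrm I_{14A}^{r=1}$, and then reads off the LH structure from table~\ref{table1} (with references to \cite{BBHLS} for the Buchdahl and Lotka--Volterra cases). Your explicit computation of $X_1\wedge X_2$ for the Lotka--Volterra system, which pinpoints exactly why $a=b=1$ must be excluded, is in fact more detailed than what the paper writes out; the paper simply cites \cite{BBHLS} for that verification.
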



   \sect{Constants of motion and superposition rules}

  As commented in the introduction,  one of the most outstanding properties of LH systems is that their superposition rules (which exist for any Lie system) can be obtained by applying the coalgebra approach  recently introduced in~\cite{BCHLS13Ham} in an easier way than by applying traditional methods \cite{PW,CGM07}. 
    Essentially,  for any LH system,  this  procedure requires to endow the LH algebra with a coalgebra structure which is provided by a trivial (non-deformed) coproduct map.    Next,  $t$-independent  constants of motion can be obtained from the $m$th-order coproduct of a non-trivial  Casimir invariant and 
 the     corresponding superposition rule  can then be worked out by starting from such constants of motion.
 
 In this section we firstly provide the constants of motion for the LH algebras displayed in table~\ref{table1} and, secondly, we use them in the construction of superposition rules for planar LH systems. In this respect, 
 we recall that, to the best of our knowledge, this coalgebra approach has only been applied in~\cite{BCHLS13Ham} to  planar $\mathfrak{sl}(2)$ and $\mathfrak{so}(3)$-LH systems obtaining their  superposition rules. Therefore, we shall  restrict ourselves to studying the remaining LH systems here considered, that is, LH systems of classes P$_1$, P$_5$ and I$^{r=1}_{14A}$ (via the class I$_8$). Hereafter, we shall omit most technical details, which can be  found in~\cite{BCHLS13Ham},  and so we shall  briefly summarize the necessary essential tools so as to have a self-contained paper.


   \subsect{Constants of motion}

Assuming the notation introduced in section 3, we consider a Lie algebra 
$\mathfrak{g}$ spanned by the generators $\{ v_1,\dots,v_l \}$ and its corresponding symmetric algebra $S(\mathfrak{g})$  understood as a Poisson algebra.  Then $S(\mathfrak{g})$ can always be endowed with a  Poisson coalgebra structure by introducing the (non-deformed)
  coproduct  map   ${\Delta} : S(\mathfrak{g})\rightarrow
S(\mathfrak{g}) \otimes S(\mathfrak{g})$    defined by
 \begin{equation}\label{Con}
{\Delta}(v)=v\otimes 1+1\otimes v,\qquad \forall  v\in\mathfrak {g}\subset S(\mathfrak{g}),
\end{equation}
which is a Poisson algebra homomorphism.  
The coproduct $\Delta\equiv \Delta^{(2)}$ can be extended to a third-order coproduct  $\Delta^{(3)}: S(\mathfrak{g})\rightarrow
S(\mathfrak{g}) \otimes S(\mathfrak{g})\otimes S(\mathfrak{g})\equiv   S^{(3)}(\mathfrak{g})$ by means of  the coassociatity condition~\cite{CP}, $
\Delta^{(3)}:=(\Delta \otimes {\rm Id}) \circ \Delta=({\rm Id} \otimes \Delta) \circ \Delta$, namely
$$
{\Delta}^{(3)}(v)=v\otimes 1\otimes 1 +1\otimes v\otimes 1+1\otimes 1\otimes v,\qquad \forall  v\in\mathfrak {g}\subset S(\mathfrak{g}).
\label{3co}
$$
And  the {$m$th-order coproduct} map
$
\Delta ^{(m)}:  S(\mathfrak{g})\rightarrow   S^{(m)}(\mathfrak{g})$
can be defined, recursively, as  
$$
\label{copr}
{\Delta}^{(m)}:= ({\stackrel{(m-2)-{\rm times}}{\overbrace{{\rm
Id}\otimes\ldots\otimes{\rm Id}}}}\otimes {\Delta^{(2)}})\circ \Delta^{(m-1)},\qquad m\ge 3,
$$
which, clearly, is also  a Poisson algebra homomorphism.

Let $X$ be a LH system related to a LH algebra ${\cal{H}}_\Lambda$ spanned by the linearly independent Hamiltonian functions  $\{ h_1,\dots,h_l \}$ and let $D$ be  the Poisson algebra morphism $D:  S(\mathfrak{g}) \to C^\infty(M)$ induced by extending to $S(\mathfrak{g})$ the injection $\iota:\mathfrak{g}\hookrightarrow {\cal{H}}_\Lambda\subset C^\infty(M)$, with $\phi(v_i)=h_i$ for $i=1,\ldots,l$. By extension, we can construct a family of   Poisson algebra morphisms 
$D^{(m)}:  S^{(m)}(\mathfrak{g}) \to C^\infty(M)^{(m)}\subset  C^\infty(M^{m})$.  If $C$   is a polynomial Casimir  of the Poisson algebra $S(\mathfrak{g})$, say $C=C(v_1,\dots,v_l)$, then $D(C)$ is a constant of motion for $X$ and the functions 
\begin{equation}\label{invA}
F^{(k)}(h_1,\dots,h_l) = D^{(k)}\left[\Delta^{(k)} \left({C( v_1,\dots,v_l)} \right) \right] ,   \qquad k=2,\ldots,m,
\end{equation}
are  $t$-independent constants of motion  for the diagonal prolongation
$\widetilde X$ to the $m$th manifold $M^m$, namely if $X=\sum_{i=1}^nX^i(x)\partial/\partial x^i$, then
$$
\widetilde X=\sum_{a=1}^m\sum_{i=1}^nX^i(x_{(a)})\frac{\partial}{\partial x^i_{(a)}},
$$
where $(x_{(1)},\ldots,x_{(m)})\in M^m$ (see \cite{CGM07} for details on diagonal prolongations). Observe also that each $F^{(k)}$ can naturally be considered as
a function of $C^\infty(M^m)$ for every $m\geq k$.

If all the $F^{(k)}$ are non-constant functions, then  they form a set of  $(m-1)$      functionally independent functions  in involution in $C^\infty(M^m)$~(cf.~theorem 26 in \cite{BCHLS13Ham}).
Furthermore, from the functions $F^{(k)}$ other 
 constants of motion can be obtained in the form  
\begin{equation}\label{invB}
F_{ij}^{(k)}=S_{ij} ( F^{(k)}   ) , \qquad 1\le  i<j\le  k,\qquad k=2,\ldots,m,
\end{equation}
where $S_{ij}$ is the permutation of variables $x_{(i)}\leftrightarrow
x_{(j)}$ on $M^m$. Indeed, since $\widetilde X$ is invariant under the permutations $x_{(i)}\leftrightarrow x_{(j)}$, then the $F_{ij}^{(k)}$ are  also $t$-independent constants of motion  for the diagonal prolongations $\widetilde X$ to $M^m$. 

Let us now illustrate the previous procedure by considering a LH  system $X$ taking values in ${\rm P}_1\simeq  {\mathfrak{iso}}(2)$. Using the basis  $\{X_1,X_2,X_3\}$ of P$_1$   given in table~\ref{table1}, we have 
$$
[X_1, X_2] = 0,\qquad 	[X_1, X_3]=-X_2, \qquad 	[X_2, X_3] = X_1.
$$
The corresponding LH algebra is isomorphic to $\overline{\mathfrak{iso}}(2)$ and it admits a basis $\{ h_1,h_2,h_3,h_0\}$ (see table~\ref{table1}) satisfying commutation relations   
\be
\{h_1,h_2\}_\omega=h_0,\qquad \{h_1,h_3\}_\omega=h_2,\qquad \{h_2,h_3\}_\omega=-h_1,\qquad  \{h_0,\cdot\}_\omega=0 ,
\label{xa}
\ee
with respect to the  canonical symplectic structure $\omega={\rm d}x \wedge {\rm d}y$.
The symmetric Poisson algebra $S\left(\overline{\mathfrak{iso}}(2)\right)$  of  $\overline{\mathfrak{iso}}(2)$, where $\{ v_1,v_2,v_3,v_0\}$ is a basis of $\overline{\mathfrak{iso}}(2)$ fulfilling  the commutation rules ({\ref{xa}), 
has a non-trivial  Casimir  invariant given by 
$$
C=v_{3}v_{0}-\tfrac{1}{2}(v_{1}^{2}+v_{2}^{2}).
$$
From it, we obtain a trivial constant of motion on the variables $(x,y)\equiv (x_1,y_1)$:
\bea
&& F=D(C)= \phi(v_3)\phi(v_0)-\tfrac 12 \left( \phi^2(v_1) +\phi^2(v_2) \right) \nonumber\\
&& \quad =
h_{3}(x_1,y_1)h_{0}(x_1,y_1)-\tfrac{1}{2}[h_{1}^{2}(x_1,y_1)+h_{2}^{2}(x_1,y_1)]=\tfrac 12 (x_1^2+y_1^2)\times 1-
\tfrac 12 (y_1^2+x_1^2)=0 .
\nonumber
\eea
Nevertheless, once the coalgebra structure is introduced in $S\left(\overline{\mathfrak{iso}}(2)\right)$  through the coproduct (\ref{Con}), the functions (\ref{invA}) are no longer trivial ones and we find, for $m=3$ (so $k=2,3$),  that
$$
\begin{array}{rl}
  &F^{(2)} = D^{(2) } (\Delta(C) )
=  \left(h_3(x_1,y_1)+h_3(x_2,y_2)\right)\left(h_0(x_1,y_1)+h_0(x_2,y_2)\right) \nonumber\\   
&\qquad\qquad -\tfrac 12\left[(\left(h_1(x_1,y_1)+h_1(x_2,y_2)\right)^2+ \left(h_2(x_1,y_1)+h_2(x_2,y_2)\right)^2 \right]
\nonumber\\   
&\qquad
= \frac{1}{2}\left[
(x_{1}-x_{2})^{2}+(y_{1}-y_{2})^{2} \right] , \nonumber\\[2pt]
  &F^{(3)}  =D^{(3) } (\Delta(C) )
= \sum_{i=1}^3 h_3(x_i,y_i) \sum_{j=1}^3h_0(x_j,y_j)\nonumber\\
&\qquad\qquad
-\tfrac 12 \left[\left( \sum_{i=1}^3 h_1(x_i,y_i)   \right) ^2  + \left( \sum_{i=1}^3 h_2(x_i,y_i)   \right) ^2 \right]\nonumber\\
  &\qquad =\tfrac12  \sum_{1\le i<j}^3 \left[
(x_{i}-x_{j})^{2}+(y_{i}-y_{j})^{2} \right].
 \nonumber
  \label{intKS}
\end{array}
$$
For $k=2$, the constants (\ref{invB}) read
\bea
&& F_{12}^{(2)}=S_{12} ( F^{(2)}   ) \equiv  F^{(2)}   ,\qquad  F_{13}^{(2)}=S_{13} ( F^{(2)}   )= \tfrac{1}{2}\left[
(x_{3}-x_{2})^{2}+(y_{3}-y_{2})^{2} \right] ,\nonumber\\
&&  F_{23}^{(2)}=S_{23} ( F^{(2)}   )= \tfrac{1}{2}\left[
(x_{1}-x_{3})^{2}+(y_{1}-y_{3})^{2} \right]
\label{ff}
\eea
so that $ F^{(3)}= F_{12}^{(2)}+ F_{13}^{(2)}+ F_{23}^{(2)}$.  Observe that $ F^{(2)}$ and $ F^{(3)}$ satisfy that $\partial(F^{(2)},F^{(3)})/\partial_{(x_1,y_1)})\neq 0$. We can also choose two other functions among the set  $\{ F^{(2)},F_{13}^{(2)},F_{23}^{(2)}\}$ satisfying this condition. This will be important for deriving superposition rules for LH systems of class P$_1$.


\begin{table}[t] {\footnotesize
  \noindent
\caption{{\small Constants of motion for   LH algebras of table~\ref{table1}.   }}
\label{table2}
\medskip
\noindent\hfill
\begin{tabular}{ l l l }
\hline
 &&\\[-1.5ex]
\#&LH algebra &Casimir $C$ and invariants $F^{(k)}$\\[+1.0ex]
\hline
 &  & \\[-1.5ex]
P$_1$& $ \overline{ {\mathfrak{iso}} }(2)$& $ C=h_{3}h_{0}-\frac{1}{2}(h_{1}^{2}+h_{2}^{2})$   \\[+1.0ex]
 &  & $F=0\qquad F^{(2)}=\frac{1}{2}\left[
(x_{1}-x_{2})^{2}+(y_{1}-y_{2})^{2} \right]   $
  \\[+2.0ex]
P$_2$& $  { {\mathfrak{sl}} }(2)$& $ C=  h_{1}h_{3} -h_{2}^{2}$   \\ 
 &  & $F=1\qquad F^{(2)}= \dfrac{(x_{1}-x_{2})^{2}+(y_{1}+y_{2})^{2}}{y_{1}y_{2}}$
  \\[+2.0ex]
P$_3$& $   {\mathfrak{so}} (3)\oplus\mathbb{R} $& $ C= 4h_{1}^{2}+h_{2}^{2}+h_{3}^{2}+2 h_{1}h_{0}$   \\ 
 &  & $F=0\qquad F^{(2)}= -\dfrac{(x_{1}-x_{2})^{2}+(y_{1}-y_{2})^{2}}
{(1+x_{1}^{2}+y_{1}^{2})(1+x_{2}^{2}+y_{2}^{2})}$
  \\[+2.5ex]
P$_5$& ${\mathfrak h}_6\simeq \overline{\mathfrak{sl}(2 )\ltimes \mathbb{R}^2}$& $ C= 2 \left(h_{1}^{2}h_{5}-h_{2}^{2}h_{4}-h_{1}h_{2}h_{3} \right)-h_{0}(h_{3}^{2}+4 h_{4}h_{5})$   \\[2pt]
 &  & $F=0\qquad F^{(2)}= 0\qquad F^{(3)}=  \left[
x_{1}(y_{2}-y_{3})+x_{2}(y_{3}-y_{1})+x_{3}(y_{1}-y_{2}) \right]^{2} $
  \\[+2.0ex]
I$_4$& $  { {\mathfrak{sl}} }(2)$& $ C=  h_{1}h_{3} -h_{2}^{2}$   \\ 
 &  & $F=-\frac 1 4 \qquad F^{(2)}=- \dfrac{(x_{2}-y_{1})(x_{1}-y_{2})}{(x_{1}-y_{1})(x_{2}-y_{2})}$
  \\[+2.0ex]
I$_5$& $  { {\mathfrak{sl}} }(2)$& $ C=  h_{1}h_{3} -h_{2}^{2}$   \\ 
 &  & $F=0\qquad F^{(2)}= \dfrac{(x_{1}-x_{2})^{2}}{(2y_{1} y_{2})^2}$
  \\[+2.0ex]
I$_8$& $ \overline{ {\mathfrak{iso}} }(1,1)$& $ C=h_{1}h_{2}+h_{3}h_{0} $   \\[+1.0ex]
 &  & $F=0\qquad F^{(2)}= (x_{1}-x_{2})(y_{1}-y_{2})$
  \\[+2.0ex]
I$_{14A}^{r=2}$& $\mathbb{R} \ltimes \mathbb{R}^{2}$ & $ C=h_2 h_3\qquad    \eta_1(x)={\rm e}^x\qquad    \eta_2(x)={\rm e}^{-x}$   \\[+1.0ex]
 &  & $F=-1 \qquad F^{(2)}=-2\left[1+\cosh(x_1-x_2) \right]  $
  \\[+2.0ex]
I$_{14B}^{r=2}$& $\overline{\mathbb{R} \ltimes \mathbb{R}^{2} }$ & $ C= h_2^2+2h_3 h_0\qquad     \eta_2(x)=x  $   \\[+1.0ex]
 &  & $F= 0 \qquad F^{(2)}=-\left(x_1-x_2 \right)^2  $
  \\[+2.0ex]
I$_{16}$& $\overline{  {\mathfrak{h}_2 \ltimes  \mathbb{R}^{r+1}}} $ & $ C= \dfrac{2h_{2}^{3}+6 h_{2}h_{4}h_{0}+3 h_{5}h_{0}^{2}}
{3 h_{0}^{2}\left(h_{2}^{2}+2 h_{4}h_{0}\right)^{ {3}/{2}}}\qquad \ r\ge  2$   \\[+1.5ex]
 &  & $F=\mbox{indet} \qquad F^{(2)}= 0\qquad  F^{(3)}= \dfrac{
(x_{1}+x_{2}-2x_{3})(x_{1}+x_{3}-2x_{2})(x_{2}+x_{3}-2x_{1})
}
{54\sqrt{2} (x_{1}x_{2}+x_{1}x_{3}+x_{2}x_{3} -x_{1}^{2}-x_{2}^{2}-x_{3}^{2})^{3/2}}$
  \\[+3.0ex]
     \hline
\end{tabular}
\hfill}
\end{table}


We display in table~\ref{table2}  the first non-trivial invariants  $ F^{(k)}$ for each of the LH algebras of table~\ref{table1}, except for the trivial Abelian cases  ${\rm I}_1\simeq \mathbb R$ and  ${\rm I}_{12}\simeq \mathbb R^{r+1}$.  This is 
$ F^{(3)}$ for the classes  P$_5$ and I$_{16}$, and $ F^{(2)}$ for the remaining  ones. 
Notice that we have   expressed the corresponding Casimir $C$ in terms of $h_i$, instead of $v_i$, to facilitate the reading with respect to table~\ref{table1}.
We remark that the LH algebras of the classes I$_{14A}$,  I$_{14B}$ and  I$_{16}$ have no non-trivial invariant for $r=1$ and that for 
I$_{14A}$ and   I$_{14B}$ a choice of the functions $\eta_j(x)$ must be performed. We have worked out this latter case with $r=2$ and the specific functions $\eta_j(x)$ indicated in table~\ref{table2}.
In all the cases for which  $ F^{(2)}$ is not a trivial constant, it is found that
$F^{(3)}= F_{12}^{(2)}+ F_{13}^{(2)}+ F_{23}^{(2)}+{\rm constant}$.

The results of table~\ref{table2} are important due to the fact that allow us to construct a superposition rule for $X$. More precisely, let $X$ be a LH system on $\mathbb{R}^2$ admitting a Vessiot--Guldberg Lie algebra of Hamiltonian vector fields $V$ with basis $\{X_1,\ldots,X_l\}$. Then, a set $I_1,\ldots, I_n\in C^\infty(M^{m})$ of functionally independent constants of motion  for the diagonal prolongation $\widetilde X$ on $M^{m}$, with $m_0:=m-1$ being such that $\widetilde X_1\wedge\ldots\wedge \widetilde{X}_l\neq 0$ for the prolongations to $M^{m_0}$,  enables us to determine a superposition rule for $X$ provided $\partial (I_1,\ldots,I_n)/\partial(x^1_{(0)},\ldots,x^n_{(0)})\neq0 $  (see \cite{CGM07} for details). Indeed, the latter condition ensures that $x^1_{(0)},\ldots,x^n_{(0)}$ can be expressed in terms of $n$-constants $k_1,\ldots,k_n$ and the remaining variables in $M^m$ by solving the equations $I_1=k_1,\ldots,I_n=k_n$. This leads to a superposition rule for $X$. For Lie systems, $I_1,\ldots,I_n$ are usually obtained by solving a system of PDEs, whereas they can be derived algebraically for planar LH systems following table~\ref{table2}. Exemplifying this for several new superposition rules for LH systems on the plane will be the aim of following subsections.


   \subsect{Superposition rules for LH systems of class P$_1\simeq {\mathfrak{iso}(2)}$}

Consider a LH system $X$ with a Vessiot--Guldberg Lie algebra P$_1$. It can be proved that $m_0=2$  in this case. 
Let us consider the three constants of motion (\ref{ff}) written as 
\bea
&&  F^{(2)}= \tfrac{1}{2}\left[
(x_{1}-x_{2})^{2}+(y_{1}-y_{2})^{2} \right]=\tfrac 12 k^2_1\ge 0 ,\nonumber\\
&&  F_{23}^{(2)}=\tfrac{1}{2}\left[
(x_{1}-x_{3})^{2}+(y_{1}-y_{3})^{2} \right]= \tfrac 12 k^2_2\ge 0 ,\nonumber\\
&&  F_{13}^{(2)}=\tfrac{1}{2}\left[
(x_{3}-x_{2})^{2}+(y_{3}-y_{2})^{2} \right]= \tfrac 12 k^2_3 >0.
\label{super1}
\eea
We aim to express the general solution $(x_1(t),y_1(t))$ of our LH system in terms of two different particular solutions $(x_i(t),y_i(t))$, with $i=2,3$. This is obtained by starting from   the two first equations   (\ref{super1}). The resulting expressions can next be simplified by introducing the third constant $k_3$, giving rise to 
\bea
&&x_1^\pm(x_2,y_2,x_3,y_3,k_1,k_2,k_3)  = x_2+\frac{k_1^2+k_3^2-k_2^2}{2k_3^2}(x_3-x_2)\mp 2A\frac{(y_3-y_2)}{k^2_3},\nonumber\\
&&y_1^\pm(x_2,y_2,x_3,y_3,k_1,k_2,k_3)   =y_2+\frac{k_1^2+k_3^2-k_2^2}{2k^2_3}(y_3-y_2)\pm 2A\frac{(x_3-x_2)}{k^2_3} ,
\label{sup1}
\eea
where $A$  is a constant   given in terms of $k_1,k_2,k_3$ by
\be
A=\frac 1{4}\sqrt{2(k_1^2k_2^2+k_1^2k_3^2+k_2^2k_3^2)-(k_1^4+k_2^4+k_3^4)} .
\label{area}
\ee
We can assume with no loss of generality that $k_1,k_2,k_3\geq 0$. Since we assume $(x_2,y_2)\neq (x_3,y_3)$, we can set $k_3>0$.  
Therefore, the equations (\ref{sup1}) comprise {\em  two} cases (according to the signs $\pm$ and $\mp$),  which are well defined when 
 the radicand of $A$ is non-negative, namely when $k_i\leq k_j+k_l$ for different $i,j,l$.
  Since $k_3=k_3(x_2,y_2,x_3,y_3)$, we can understand expressions (\ref{sup1}) as a superposition rule $\Psi:(x_2,y_2,x_3,y_3,  \pm k_1,k_2)\in U\subset\mathbb{R}^4\times(\mathbb{R}\times\mathbb{R}_{k_2\geq 0} )\mapsto (x^\pm_1,y^\pm_1)\in \mathbb{R}^2$. This enables us to express the general solution $(x_1(t),y_1(t))\equiv (x(t),y(t))$ of $X$ in terms of two different particular solutions $(x_2(t),y_2(t))$ and $(x_3(t),y_3(t))$. 

Alternatively, the constants of motion (\ref{super1}) and the derivation of the superposition rule (\ref{sup1}) admit a geometrical approach. If
   $k_1,k_2,k_3$ are non-negative constants,    these   can be understood as the lengths of the segments $\overline{Q_1Q_2}$, $\overline{Q_1Q_3}$ and $\overline{Q_2Q_3}$ between the points $Q_1:=(x_1,y_1)$, $Q_2:=(x_2,y_2)$ and $Q_3:=(x_3,y_3)$ in $\mathbb{R}^2$ as shown in figure~\ref{figure1}. Hence,   
   the area of the triangle $Q_1Q_2Q_3$ is just the constant $A$ (\ref{area}), that is, the Heron's formula.
    From  figure~\ref{figure1}  we obtain that
\begin{equation}\nonumber
\begin{aligned}
x^\pm_1(x_2,y_2,x_3,y_3,k_1,k_2,k_3) &=x_2+k_1\cos(\alpha_1\pm\alpha_2)=x_2+k_1(\cos\alpha_1\cos\alpha_2\mp\sin\alpha_1\sin\alpha_2) ,\\
y^\pm_1(x_2,y_2,x_3,y_3,k_1,k_2,k_3) &=y_2+k_1\sin(\alpha_1\pm \alpha_2)=y_2+k_1(\sin\alpha_1\cos\alpha_2\pm \cos\alpha_1\sin\alpha_2),
\end{aligned}
\end{equation}
and by introducing the triangle relations
$$
\sin\alpha_2= \frac{2 A}{k_1k_3},\qquad \cos\alpha_2=\frac{k_1^2+k_3^2-k_2^2}{2k_1k_3},\qquad \sin\alpha_1=\frac{y_3-y_2}{k_3}, \qquad \cos\alpha_1=\frac{x_3-x_2}{k_3}, 
$$
we directly recover the equations (\ref{sup1}).


\begin{figure}[t]
\begin{center}
\begin{picture}(180,140)
\footnotesize{
\put(15,15){\makebox(0,0){$Q_2(x_2,y_2)$}}
\put(22,22){$\bullet$}
\put(127,127){\makebox(0,0){$Q_1(x_1,y_1)$}}
 \put(190,101){\makebox(0,0){$Q_3(x_3,y_3)$}}
 \put(187,65){\makebox(0,0){$y_3-y_2$}}
\put(165,93){$\bullet$}
  \put(85,15){$x_3-x_2$}
\put(68,80){\makebox(0,0){$k_1$}}
\put(117,80){\makebox(0,0){$k_3$}}
\put(148,113){\makebox(0,0){$k_2$}}
 \put(23,25){\line(1,0){145}}
\put(117,117){$\bullet$}
\put(165,22){$\bullet$}
\put(55,30){\makebox(0,0){$\alpha_1$}}
\put(55,46){\makebox(0,0){$\alpha_2$}}
 \qbezier(46,25)(44,35)(38,38)
\qbezier(25,25)(50,50)(120,120)
 \put(-2, 11){\qbezier(25,12.5)(90,45)(169,85)}
 \put(104,127){\qbezier(14,-7)(30,-15)(63,-31.5)}
 \put(167,25){\line(0,1){71}}
}
\end{picture}
\end{center}
\noindent
\\[-50pt]
\caption{\footnotesize Geometrical description of the three constants of motion (\ref{super1}) as the sides of a triangle.}
\label{figure1}
\end{figure}
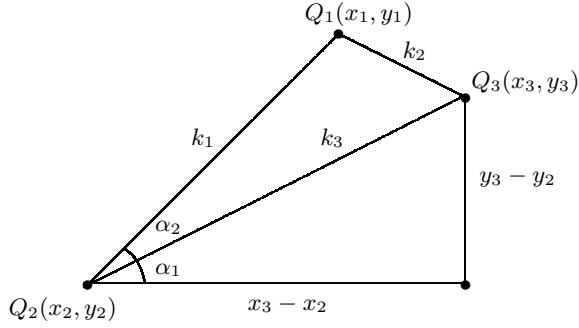


We remark that  the  result (\ref{sup1}) can be applied to all LH systems of class P$_1$. For instance, this can further be  used so as to obtain a superposition rule for the Bernoulli differential equations (\ref{PlanBer}) with $a_1^R(t)=0$, provided that a change of variables $(x,y)\to (r, \theta)$ mapping its Vessiot--Guldberg Lie algebra into P$_1$ is given. Equivalently, we can repeat the above procedure for the LH algebra of Bernoulli differential equations without deriving a diffeomorphism.


   \subsect{Superposition rules for LH systems of classes  I$_8\simeq {\mathfrak{iso}}(1,1)$ and  I$_{14A}^{r=1}\simeq \mathfrak{h}_2$}

   The class I$_{14A}^{r=1}\simeq \mathbb{R} \ltimes \mathbb{R}$ admits a LH algebra isomorphic to $\mathfrak{h}_2$ spanned by the functions given in table~\ref{table1} with commutation relations (\ref{hh2}).
   There does not exist a non-trivial Casimir  for $\mathfrak{h}_2$, so precluding, in principle, the obtention of a superposition rule through the coalgebra approach. Nevertheless, this problem can be circumvented by considering an inclusion of  I$_{14A}^{r=1}$ as a Lie subalgebra of a Lie algebra of another class admitting a LH algebra with  a non-trivial Casimir. 
   
   There are several classes containing I$_{14A}^{r=1}$,  e.g., P$_2$, I$_4$, I$_5$, I$_8$, P$_5$,\dots (cf.~see table 2 in~\cite{BBHLS}). Due to the simple form of the Casimir of the LH algebra $\overline{\mathfrak{iso}}(1,1)$ of I$_8$ and that 
     superposition rules for $\mathfrak{sl}(2)$-LH systems on the plane were already studied in~\cite{BCHLS13Ham}, we shall construct a new superposition rule for  LH systems of class I$_8\simeq {\mathfrak{iso}}(1,1)$, obtaining as a byproduct the one corresponding to LH ones of class I$_{14A}^{r=1}$.

   The   LH algebra $\overline{\mathfrak{iso}}(1,1)$ has commutation relations   
$$
\{h_1,h_2\}_\omega=h_0,\qquad \{h_1,h_3\}_\omega=-h_1,\qquad \{h_2,h_3\}_\omega=h_2,\qquad  \{h_0,\cdot\}_\omega=0,
$$
 with respect to $\omega={\rm d}x \wedge {\rm d}y$ in the basis $\{ h_1,h_2,h_3,h_0\}$ given in table~\ref{table1}. We have that $m_0=2$ for a LH system $X$ with Vessiot--Gulgdberg Lie algebra I$_8$. If $m=3$,  then
 we obtain from table~\ref{table2} three constants of motion for the diagonal prolongation $\widetilde X$ to $(\mathbb{R}^2)^3$ by applying (\ref{invB}): 
\bea
&&  F^{(2)}=  
(x_{1}-x_{2})   (y_{1}-y_{2}) =   k_1 ,\nonumber\\
&&  F_{23}^{(2)}=(x_{1}-x_{3})   (y_{1}-y_{3}) =   k_2 ,\nonumber\\
&&  F_{13}^{(2)}=(x_{3}-x_{2})   (y_{3}-y_{2}) =   k_3 .
\nonumber
\eea
Notice that $F^{(3)}= F_{12}^{(2)}+ F_{13}^{(2)}+ F_{23}^{(2)}$. In this case, $F^{(2)}=k_1,F^{(2)}_{23}=k_2$ can be understood as the equations on $\mathbb{R}^2$ of rectangular hyperbolas with centers $(x_2,y_2),(x_3,y_3)$.
Clearly, $F^{(2)}$ and $ F_{23}^{(2)}$ are functionally independent and allow us to express     $(x_1,y_1)$ in terms of $(x_2,y_2,x_3,y_3)$ and  $k_1,k_2$. The introduction of   $k_3$ again simplifies the final result  which reads
\bea
&&x_1(x_2,y_2,x_3,y_3,k_1,k_2,k_3)  = \frac 12(x_2+x_3) +\frac{k_2-k_1\pm B}{2(y_2-y_3)}   ,\nonumber\\
&&y_1(x_2,y_2,x_3,y_3,k_1,k_2,k_3)  =\frac 12(y_2+y_3) +\frac{k_2-k_1\mp B}{2(x_2-x_3)} ,
\label{sup2}
\eea
where
$$
B= \sqrt{ k_1^2+k_2^2+k_3^2-2(k_1k_2+k_1k_3+k_2k_3) }.
$$
Consequently, if we recall that $k_3=k_3(x_2,y_2,x_3,y_3)$, we obtain that (\ref{sup2}) leads to a  superposition rule for LH systems with a Vessiot--Guldberg Lie algebra I$_{14A}^{r=1}$  provided 
$$
 k_1^2+k_2^2+k_3^2\geq 2(k_1k_2+k_1k_3+k_2k_3).
 $$
 Let us now recover the superposition rule for a LH system of class I$_{14A}^{r=1}$ out of these results. It is immediate that up to a trivial change of variables I$_{14A}^{r=1}=\langle \partial_u,{\rm e}^u\partial_v\rangle$. Then, the change of variables 
   $$
   y={\rm e}^u,\qquad x=v{\rm e}^{-u},
   $$
   maps the basis of I$_{14A}^{r=1}$ into vector fields $-X_3$ and $X_1$ of I$_8$ given in table~\ref{table1}. Hence, every LH system of class I$_{14A}^{r=1}$ can be considered as a LH system of class I$_8$ and the above superposition rule for I$_8$ also applies, up to a change of variables, to LH systems of class I$_{14A}^{r=1}$.
It is worth noting that this  can be applied to the specific systems of class  I$_{14A}^{r=1}$ studied in section 9.


   \subsect{Superposition rules for two-photon LH systems}

Finally, consider a LH system with a Vessiot--Guldberg Lie algebra P$_5$. It can be proved that the prolongations of any basis of P$_5$ become linearly independent at a generic point for $m_0=3$. So we fix $m=4$.
The associated LH algebra $\mathfrak{h}_6$ with the basis given in table~\ref{table1} fulfills the commutation relations 
(\ref{twophoton}). According to table~\ref{table2},  the constant of motion $F^{(2)}=0$.
From $F^{(3)}$ written  in table~\ref{table2} and by using  (\ref{invB}) we obtain four constants of motion given by
\bea
&&  F^{(3)}=  
 \left(
x_{1}(y_{2}-y_{3})+x_{2}(y_{3}-y_{1})+x_{3}(y_{1}-y_{2}) \right)^{2}   =k_1^2,\nonumber\\
&& F_{34}^{(3)}=  \left(
x_{1}(y_{2}-y_{4})+x_{2}(y_{4}-y_{1})+x_{4}(y_{1}-y_{2}) \right)^{2}   =k_2^2 ,\nonumber\\
&& F_{24}^{(3)}=   \left(
x_{1}(y_{3}-y_{4})+x_{3}(y_{4}-y_{1})+x_{4}(y_{1}-y_{3}) \right)^{2}  =k_3^2 ,\nonumber\\
&& F_{14}^{(3)}=  \left(
x_{2}(y_{3}-y_{4})+x_{3}(y_{4}-y_{2})+x_{4}(y_{2}-y_{3}) \right)^{2} =k_4^2 ,
\nonumber
\eea
such that $ F^{(4)}=  F^{(3)}+   F_{34}^{(3)}+F_{24}^{(3)}+F_{14}^{(3)}  $. The constants of motion $F^{(3)}$ and $ F^{(4)}$ are functionally independent functions, but also 
$F^{(3)}$ and $ F_{34}^{(3)}$ which have a simpler form. Thus from the latter and by taking positive square roots we 
  find    $(x_1,y_1)$ written in terms of $(x_2,y_2,x_3,y_3,x_4,y_4)$ and the two constants $k_1$ and $k_2$. The result is rather simplified by introducing a third constant $k_4$ coming from the  positive square root of $ F_{14}^{(3)}$ (the remaining constant $k_3$ does not enter), yielding
 \bea
&&x_1(x_2,y_2,x_3,y_3,x_4,y_4,k_1,k_2,k_4)  =\left( 1+\frac{k_2-k_1}{k_4}\right) x_2-\frac{k_2}{k_4}  \, x_3  +\frac{k_1}{k_4}  \, x_4   ,\nonumber\\
&&y_1(x_2,y_2,x_3,y_3,x_4,y_4,k_1,k_2,k_4)  =\left( 1+\frac{k_2-k_1}{k_4}\right) y_2-\frac{k_2}{k_4}  \, y_3  +\frac{k_1}{k_4}  \, y_4   ,
\nonumber
\eea
which are well defined whenever $k_4\ne 0$. The above expression gives rise to a superposition rule for LH systems with a Vessiot--Guldberg Lie algebra P$_5$ by considering $k_4$ as a function   $k_3=k_3(x_2,y_2,x_3,y_3,x_4,y_4)$. Obviously, this result is also valid, up to an appropriate change of variables, to any other Lie system of class P$_5$. In particular, this result can be applied to the two-photon LH systems described in section 7, e.g., to $t$-dependent dissipative harmonic oscillators.


   \sect{Concluding remarks}
This work introduces the so-called Casimir tensor fields for certain finite-dimensional Lie algebras of vector fields. This allowed us to easily classify Lie algebras of vector fields on $\mathbb{R}^2$ isomorphic to $\mathfrak{sl}(2)$. In the future, we aim to extend our techniques to arbitrary finite-dimensional Lie algebras of vector fields.
We also hope to devise methods to construct and to classify general LH systems in a systematic way.

We have presented new LH systems: different kinds of  Bernoulli differential equations,  Cayley--Klein Riccati equations, planar diffusion Riccati systems, etc. We have related them with some already known LH systems scattered in the literature. These results are summarized in table~\ref{table3} where, according to the local classification of planar LH systems displayed  in table~\ref{table1}, we present the specific LH systems which are locally diffeomorphic within each class. 
For the sake of completeness,  we also indicate LH systems that have not been studied here, but that can be found in~\cite{BBHLS,CLS12Ham,BCHLS13Ham,CLS12,ADR12}.
As a result, table \ref{table3} details all LH systems on the plane with physical and mathematical applications appearing in the literature. Of course, the search of new applications of LH systems on the plane is still an open problem.

Furthermore, we have derived $t$-independent constants of motion for most of the planar LH algebras by applying a coalgebra approach. This has been used to derive new superposition rules in an algebraic manner. In this respect, we remark that this procedure makes use of the non-deformed coproduct map (\ref{Con}). This fact naturally suggests us trying to  extend such an approach to quantum (Poisson) algebras by considering deformed coproducts and deformed commutation rules. Thus the quantum deformation parameter $q$ would enter in the `deformed' (generalized) LH systems in such a manner that the initial systems would be recovered under the non-deformed limit $q\to 1$.

Observe that the Casimir function I$_{16}$ is the only element of table~\ref{table2} that is not an element of $S(\overline{  {\mathfrak{h}_2 \ltimes  \mathbb{R}^{r+1}}})$. Therefore it cannot be straightforwardly employed through the theory of this work. Although our methods can be generalized by using the approach given in \cite{CGLS} for this type of Casimir elements, this approach does not provide any significant improvement for planar LH systems and it will left for studying problems where non-polynomial Casimir functions will be the rule rather than the exception.

 Work on these lines is currently in progress.


\begin{table}[t] {\footnotesize
  \noindent
\caption{{\small Specific LH systems on the plane according to their class given in   table~\ref{table1}.   All of these systems have   $t$-dependent real coefficients except for P$_1$. The systems marked with `$\ast$' (I$_{14A}^{r=2}$ and  I$_{14B}^{r=2}$) have been studied in~\cite{BBHLS}, while the one 
  marked with `$\dagger$'  in P$_3$   can be found in~\cite{BCHLS13Ham,ADR12}.}}
\label{table3}
\medskip
\noindent\hfill
\begin{tabular}{ l   l l }
\hline
 & &\\[-1.5ex]
\# &  LH algebra & LH systems \\[+1.0ex]
\hline
 &    &\\[-1.5ex]
 P$_1$& $\overline{\mathfrak{iso}(2 )}$&  Complex Bernoulli equation $\dot z={\rm i}a(t)z+b(t)z^n$ for real $a(t)$ and complex $b(t)$  \\[+1.5ex]
P$_2$& $\mathfrak{sl}(2 )$&  Complex Riccati equation     \\[2pt]
 &  &Milne--Pinney  and Kummer--Schwarz   equations with $c>0$   \\[+1.5ex]
 P$_3$&   ${\mathfrak{so}}(3 )\oplus\mathbb{R},\, \mathfrak{so}(3) $& Projective Schr\"odinger equations on $\mathbb{C}\mathbb{P}^1$\\[2pt]
 & & Planar system with trigonometric nonlinearities$^\dagger$\\[+1.5ex]
  P$_5$& ${\mathfrak h}_6\simeq \overline{\mathfrak{sl}(2 )\ltimes \mathbb{R}^2}$&  Dissipative/damped harmonic oscillators, particle under specific electric fields \\[2pt]
   &  & Hamilton equations for quadratic Hamiltonians\\[2pt]
  &  & Second-order Riccati equation in Hamiltonian form\\[+1.5ex]
  I$_4$&  $\mathfrak{sl}(2 )$& Split-complex Riccati equation  \\[2pt]
  &  &  Coupled Riccati equations \\[2pt]
   &  &Milne--Pinney  and Kummer--Schwarz   equations with $c<0$ \\[2pt]
    &  &Planar diffusion Riccati system for $c_0=1$\\[+1.5ex]
    I$_5$& $\mathfrak{sl}(2 )$& Dual-Study   Riccati equation    \\[2pt]
 &  & Milne--Pinney  and Kummer--Schwarz   equations with $c=0$\\[2pt]
  &  & Harmonic oscillator \\[+2pt]
  &  &Planar diffusion Riccati system for $c_0=0$\\[+1.5ex]
   I$_{14A}^{r=1}$& ${\mathfrak h}_2\simeq \mathbb{R} \ltimes \mathbb{R}$&  Complex Bernoulli equation  $\dot z= a_1(t)z+a_2(t)z^n$ \\[2pt]
 &  & Generalised Buchdahl equations  \\[2pt]
  &  & Lotka--Volterra systems  \\[+1.5ex]
    I$_{14A}^{r=2}$& $  \mathbb{R} \ltimes \mathbb{R}^2$&  Quadratic polynomial   systems  $\dot x= bx+c(t)y+f(t)y^2$, $\dot y=y$ with  $b\notin \{1,2\}${}$^\ast$   \\[+1.5ex]
     I$_{14B}^{r=2}$& $ \overline{ \mathbb{R} \ltimes \mathbb{R}^2}$&  Quadratic polynomial   systems  $\dot x= bx+c(t)y+f(t)y^2$, $\dot y=y$ with  $b\in \{1,2\}${}$^\ast$\\[2pt]
  &  & A primitive model of viral infection$^\ast$ \\[+1.5ex]
    \hline
\end{tabular}
\hfill}
\end{table}




\section*{Acknowledgments}

The research of A.~Blasco and F.J.~Herranz was partially   supported by the Spanish Ministerio de Econom{\'{\i}}a y Competitividad     (MINECO) under grant MTM2013-43820-P and   by   Junta de Castilla y Le\'on  under   grant BU278U14.     J.~de Lucas and C.~Sard\'on acknowledge funding from the Polish National Science Centre  under the grant HARMONIA
DEC-2012/04/M/ST1/00523. C.~Sard\'on also acknowledges a fellowship provided by the University of Salamanca.


\end{document}